\documentclass[a4paper,11pt]{article}
\usepackage[utf8]{inputenc}
\usepackage[english]{babel}
\usepackage{fullpage}

\usepackage{amsfonts,amssymb,amsmath,mathtools}
\usepackage{graphicx}

\usepackage{amsthm}
\usepackage[inline]{enumitem}
\usepackage[vlined,ruled]{algorithm2e}

\usepackage{hyperref}
\usepackage{cleveref}
\usepackage{mathdots, latexsym, amsbsy, mathtools}
\usepackage{thmtools}
\usepackage{thm-restate}
\usepackage[colorinlistoftodos,textsize=tiny]{todonotes}
\usepackage[nolist]{acronym}
\usepackage{comment}
\usepackage{cancel}
\usepackage{authblk}

\newtheorem{theorem}{Theorem}
\newtheorem{lemma}[theorem]{Lemma}
\newtheorem{claim}[theorem]{Claim}
\newenvironment{claimproof}[1]{\par\noindent\underline{Proof:}\space#1}{\hfill $\blacksquare$}
\newtheorem{corollary}[theorem]{Corollary}
\newtheorem{proposition}[theorem]{Proposition}

\newtheorem{remark}[theorem]{Remark}
\theoremstyle{definition}
\newtheorem{example}[theorem]{Example}
\newtheorem{definition}[theorem]{Definition}

\bibliographystyle{plainurl}
\usepackage{authblk}

\renewcommand{\epsilon}{\varepsilon}
\renewcommand{\phi}{\varphi}
\newcommand{\serc}{\odot_s} 
\newcommand{\parc}{\odot_p} 
\newcommand{\comp}{\odot} 
\newcommand\encircle[1]{\tikz[baseline=-2.5pt,transform shape, scale=0.6] \node (X) [draw, shape=circle, inner sep=0, minimum size = 3.5ex] {\strut #1};}
\newcommand{\bP}{\boldsymbol{P}}
\newcommand{\blam}{\boldsymbol{\lambda}}
\newcommand{\bsigma}{\boldsymbol{\sigma}}
\newcommand{\bQ}{\boldsymbol{Q}}{}
\newcommand{\bR}{\boldsymbol{R}}
\newcommand{\cost}{\theta}
\newcommand{\makespan}{C_{\max}}
\DeclareMathOperator\val{val}
\newcommand{\Path}[1]{\ensuremath{P_{#1}}}

\DeclareMathOperator\OPT{OPT}
\DeclareMathOperator\Prec{Prec}
\DeclareMathOperator\Succ{Succ}
\DeclareMathOperator\new{new}
\newcommand{\ie}{i.e.,\ }
\newcommand{\eg}{e.g.,\ }
\newcommand{\WLOG}{w.l.o.g.\ }
\newcommand{\Wlog}{W.l.o.g.\ }
\newcommand{\wrt}{w.r.t.\ }

\usepackage[edges]{forest}
\usepackage{tikz}
\usetikzlibrary{arrows.meta}
\usepackage{tikz-3dplot}
\usetikzlibrary{calc,arrows,automata,patterns,graphs,shapes,petri,decorations.pathmorphing,decorations.markings,decorations.pathreplacing} 
\tikzset{->-/.style={decoration={markings,mark=at position .5 with {\arrow{>}}},postaction={decorate}}}
\tikzset{vertex/.style={draw,circle,inner sep=0pt,minimum size=18pt},>=latex'}
\tikzset{c/.style={draw,circle,inner sep=0pt,minimum size=15pt},>=latex'}
\tikzset{dot/.style={draw,circle,inner sep=0pt,minimum size=3pt, fill=black},>=latex'}

\usepackage{xcolor}
\definecolor{rwthblue}{RGB}{0,84,159}
\definecolor{rwthlightblue}{RGB}{142,189,229}
\definecolor{rwthmagenta}{RGB}{227,0,102}
\definecolor{rwthgreen}{RGB}{87,171,39}
\definecolor{rwthorange}{RGB}{246,168,0}
\definecolor{rwthyellow}{RGB}{255,237,0}
\definecolor{rwthred}{RGB}{204,7,30}
\definecolor{rwthbordeaux}{RGB}{161,16,53}
\definecolor{rwthviolett}{RGB}{97,33,88}
\definecolor{rwthpurple}{RGB}{122,111,172}
\definecolor{esayellow}{RGB}{253, 199, 19}

\usepackage[vlined,ruled]{algorithm2e}

\newcommand{\algorithmFont}[1]{\small\ttfamily{#1}}

\SetAlCapSkip{1ex}
\SetAlFnt{\algorithmFont}
\SetNlSkip{1em}
\newcommand{\algorithmInit}{
	\DontPrintSemicolon
	\SetNlSty{lineNumberFont}{}{}
	\SetKwProg{Fn}{function}{}{end}
	\SetArgSty{argFont} 
	\SetKwFunction{KwFn}{function} 
	\SetKwSty{keywordFont}
	\SetCommentSty{commentFont}
}
\SetAlFnt{\small}
\SetAlCapFnt{\small}
\SetAlCapNameFnt{\small}
\SetAlCapHSkip{-1ex}
\SetAlgoCaptionSeparator{}

\usepackage{bm}


\title{On the Approximability of Train Routing and the Min-Max Disjoint Paths Problem}

\author[1]{Umang Bhaskar\thanks{UB was supported by the Department of Atomic Energy, Government of India, under Project No. RTI4001.}}
\author[2]{Katharina Eickhoff\thanks{KE was supported by the Deutsche Forschungsgemeinschaft (DFG, German Research Foundation) – 2236/2.}}
\author[2]{Lennart Kauther\thanks{LK was supported by the German Centre for Rail Traffic Research (DZSF) at the Federal Railway Authority under research project ``Methodik der Kapazitätsbewertung des Gesamtsystems und Knotenberechnung''}}
\author[3]{Jannik Matuschke\thanks{JM was supported by Fonds Wetenschappelijk Onderzoek (FWO) under Research Project G072520N ``Optimization and Analytics for Stochastic and Robust Project Scheduling''}}
\author[2]{Britta Peis}
\author[2]{Laura {Vargas Koch}}
\affil[1]{TIFR Mumbai, India}
\affil[2]{RWTH Aachen University, Germany}
\affil[3]{KU Leuven, Belgium}

\affil[ ]{\newline \textit{umang@tifr.res.in, jannik.matuschke@kuleuven.be, vargaskoch@gdm.rwth-aachen.de, \{eickhoff,kauther,peis\}@oms.rwth-aachen.de}}

\date{July 4, 2025}


\begin{document}
\begin{acronym}
    \acro{TMO}{\textsc{Train Makespan Optimization}}
    \acro{TTPO}{\textsc{Train Throughput Optimization}}
    \acro{MinMaxDP}[\textsc{MinMaxDP}]{\textsc{Min-Max Disjoint Paths}}
    \acro{DAG}{directed acyclic graph}
    \acro{SePa}{series-parallel}
    \acro{MLBA}{\textsc{Multi-level Bottleneck Assignment}}
\end{acronym}
\maketitle
\vspace{-5ex}
\begin{abstract}
In train routing, the headway is the minimum distance that must be maintained between successive trains for safety and robustness. We introduce a model for train routing that requires a fixed headway to be maintained between trains, and study the problem of minimizing the makespan, i.e., the arrival time of the last train, in a single-source single-sink network. For this problem, we first show that there exists an optimal solution where trains move in convoys---that is, the optimal paths for any two trains are either the same or are arc-disjoint.
Via this insight, we are able to reduce the approximability of our train routing problem to that of the min-max disjoint paths problem, which asks for a collection of disjoint paths where the maximum length of any path in the collection is as small as possible.

While min-max disjoint paths inherits a strong inapproximability result on directed acyclic graphs from the multi-level bottleneck assignment problem, we show that a natural greedy composition approach yields a logarithmic approximation in the number of disjoint paths for series-parallel graphs.
We also present an alternative analysis of this approach that yields a guarantee depending on how often the decomposition tree of the series-parallel graph alternates between series and parallel compositions on any root-leaf path.
\end{abstract}

\noindent
\textbf{Keywords:}
    Train Routing, Scheduling, Approximation Algorithms, Flows over Time, Min-Max Disjoint Paths
\vspace{1ex}

\noindent
\textbf{Acknowledgements.} We thank Daniel Schmand, Khai Van Tran and Andreas Wiese for fruitful discussions, as well as Nils Nießen and his group for providing us with foundational insights into train planning. We also thank the participants of the Dagstuhl Seminar 24281 \cite{dagstuhl}, the Oberwolfach Workshop in Combinatorial Optimization 2024 \cite{oberwolfach}, and the Chile Summer Workshop on Combinatorial Optimization~2025 for many interesting comments and questions. Finally, we thank the three anonymous reviewers of ESA 2025 for their careful reading of our paper and many useful suggestions.
\vspace{5ex}

\section{Introduction}
\label{sec:introduction}

Given a rail network and a set of trains, the problem of train routing and scheduling refers to the problem of determining a route as well as a schedule for the trains to optimise some objective, such as the total travel time or the maximum travel time (also denoted as makespan).
Train routing usually involves a number of real-world constraints, including train acceleration and deceleration, stopping times at stations, bidirectional arcs, presence of other trains, etc. 
Practical approaches to solving these complex problems often include mixed integer programming, heuristics, and simulations~\cite{Borndoerfer2016, Cacchiani2010,GODWIN2007,Haehn2020,MuD11,PascariuMPAJD24}, see~\cite{LusbyLER11,MuD11,CAIMI2017285} for surveys and an overview.
However, theoretical guarantees on solutions are hard to come by, apart from some results that show \textsf{NP}-hardness already for some very restricted cases~\cite{GODWIN2007, KroonRZ97}.

One of the most important constraints in train routing and scheduling is that of \emph{headway}~\cite{Carey99,Khoshniyat16}, which determines the minimum gap between successive trains on a link. The headway thus corresponds to a safety distance between two successive trains and is important for at least two different reasons. The first is safety: Maintaining sufficient headway between trains reduces the chances of a collision between trains. The second reason is robustness: If a train is delayed somewhat, then the headway can help ensure that this delay is not propagated to all the following trains, and the schedule can recover quickly. In fact, the importance of maintaining a headway is not restricted to train routing. In vehicular traffic, such as cars on roads, traffic often naturally forms a \emph{platoon}---a group of vehicles following each other closely. Here, headway is again essential in maintaining safety of vehicles in the platoon, see~\cite{Sadeghhosseini02} for a survey.

In this paper, our objective is understanding the impact of headway on the complexity of train routing and scheduling.\footnote{Although, as noted, headway plays an important role in routing other vehicles as well, we use the nomenclature for trains as this is what initially motivated our work.} While numerous papers study efficient algorithms for both dynamic and static routing problems, research on efficient algorithms that account for headway is scarce. Towards this, we introduce a basic model of train routing with an exogenously defined headway requirement $\Delta$. 
In our model, we are given a directed network $D=(V,A)$, where each arc $a$ has a travel time $\tau_a \in \mathbb{Z}_{\geq 0}$, and $d$ trains that must travel from a source $s$ to a sink $t$.
Headway constraints require a minimum time separation of $\Delta$ between any two trains using the same arc. Our objective is minimising the makespan---the time the last train reaches the sink $t$. We call this problem \ac{TMO} and refer to Section~\ref{sec.model} for a formal description.

Two interesting special cases of \ac{TMO} arise when the headway is either very large ($\Delta = \infty)$ or very small ($\Delta = 1$).
In the former case, no arc can be used by two trains, and hence the paths used by the trains must be pairwise arc-disjoint. The problem for $\Delta = \infty$ thus becomes the static problem of finding $d$ arc-disjoint $s$-$t$-paths while minimizing the length of the longest path. This is known in the literature as the \ac{MinMaxDP} problem~\cite{LiMS90}.
In particular, the temporal dynamics, \ie the scheduling aspect of the problem, become irrelevant.
Conversely, if $\Delta = 1$, the problem is equivalent to finding an integral \emph{quickest flow over time}, a problem widely studied in traffic networks and transportation~\cite{Skutella2009}. Hence, as $\Delta$ reduces, temporal dynamics play a larger role.

The quickest flow over time problem with aggregated arc capacities~\cite{dressler2010fptas} generalizes the quickest flow over time problem. Here, only a limited amount of flow is allowed to traverse an arc in each time interval of length $\Delta$. 
By setting these capacities to one, \ac{TMO} can be seen as an integral special case of the quickest flow over time problem with aggregated arc capacities.

When studying the computational complexity of the problem from a theoretical perspective, one should note that, for large $d$, the size of the input is $\log(d)$ rather than $d$. Hence, a polynomial time algorithm must run in time polynomial in $\log(d)$ (and the other parameters). Note that this is not even sufficient time to describe a path for each train.
However, we show that optimal solutions can be represented efficiently, by establishing the existence of an optimal \emph{convoy} solution to \ac{TMO} that can be described by a polynomial number of different paths and does not require waiting on any intermediate nodes. 
Furthermore, we show that approximating \ac{TMO}, for any value of~$\Delta$, is equivalent to approximating \ac{MinMaxDP}.F
We further develop new approximation algorithms for \ac{MinMaxDP}, which translate to bounds for \ac{TMO} as well. This work thus provides new results on a classical routing problem and introduces a new model to capture an important parameter in traffic routing and scheduling.

\subsection{Model}\label{sec.model}
We are given a directed graph $D=(V,A)$ with travel times $\tau_a\in \mathbb{Z}_{\geq 0}$ for every $a \in A$, a~source $s \in V$ as well as a sink $t\in V$. We say $|A|=m$. Furthermore, we are given a minimum headway time $\Delta \in \mathbb{Z}_{\geq 1}$ (between the heads of the trains) and a number $d$ of identical trains which we want to send from $s$ to $t$. An instance of train routing is thus denoted $(D,\tau, d, \Delta)$. In the following, we use $[k]\coloneqq\{1,\dots,k\}$ for $k \in \mathbb{Z}_{\geq 1}$ and $\delta^-(v)$ to denote the set of all arcs entering a node $v \in V$. 

A \emph{train routing} $(\bP,\blam)$ is a tuple consisting of a collection  $\bP=(\Path{i})_{i \in [d]}$ of $s$-$t$-paths $\Path{i}$, where $\Path{i}$ consists of the sequence of arcs train $i$ traverses on its way from $s$ to $t$, together with a collection $\blam=(\lambda_i)_{i\in [d]}$ of \emph{entry functions} $\lambda_i: \Path{i} \to \mathbb{Z}_{\geq 0}$,
where $\lambda_i(a)$ denotes the time when the nose of train $i$ enters arc $a\in \Path{i}.$ 

A train routing $(\bP, \blam)$ is \emph{feasible} if it complies with the following two constraints:
\begin{itemize}
    \item $\lambda_i(a) + \tau_a \le \lambda_i(a')$ for all $i \in [d]$ and any pair of consecutive arcs $a, a'$ on $\Path{i}$,
    \item $|\lambda_i(a) - \lambda_{i'}(a)| \geq \Delta$ for all $i, i' \in [d]$ with $i \neq i'$ and $a \in \Path{i} \cap \Path{i'}$.
\end{itemize}
The first constraint implies that the entry function is consistent with the travel times along the path of each train. 
Note that this allows trains to wait at any node.
The second constraint ensures that a feasible routing respects the minimum headway requirement, \ie if two trains use the same arc, they can only do so with a temporal distance of at least $\Delta$.

Given a feasible train routing $(\bP,\blam)$, 
we denote by $C_i(\Path{i}, \lambda_i):=\max_{a\in \Path{i}} (\lambda_i(a)+\tau_a)$ the time when
the nose of train $i$ arrives at $t$. The \emph{makespan} $C_{\max}(\bP,\blam) \coloneqq \max_{i\in [d]} C_i(\Path{i}, \lambda_i)$ is the time when the nose of the latest train reaches the sink $t$.
We consider the \acf{TMO} problem, where the
objective is to find a feasible train routing that minimises the arrival of the last train at $t$. That is, we want to solve
\begin{equation*}
\min\,C_{\max}(\bP,\blam)\ \mbox{ s.t. } (\bP,\blam) \mbox{ is a feasible train routing}.
\end{equation*}

\subsection{The min-max disjoint paths problem}

An interesting special case of \ac{TMO} arises when $\Delta$ is sufficiently large compared to the travel times. In that case any (even approximately) optimal train routing sends all trains along disjoint paths---assuming that the network supports $d$ disjoint $s$-$t$-paths.
This leads us to the \acf{MinMaxDP} problem, which can be formalized as follows: Given a directed graph $D=(V,A)$ with two designated vertices $s,t\in V$ and travel times $\tau:A\to \mathbb{Z}_{\geq 0}$, find $k$ pairwise arc-disjoint $s$-$t$-paths $\Path{1}, \dots, \Path{k}$ minimizing the maximum travel time (\ie makespan) $\max_{i \in [k]} \tau\left(\Path{i}\right),$
where we use $\tau(P) \coloneqq \sum_{a\in P} \tau_a$ to shorten notation for each path $P$.
We call $\bP=(P_1,\dots,P_k)$ a \emph{path profile}.

Early work on \ac{MinMaxDP} has focused on the case where $k$ is constant.
Indeed, Li et al.~\cite{LiMS90}---in the first paper that studied \ac{MinMaxDP}---showed that, even when $k=2$, the problem is strongly \textsf{NP}-hard for general digraphs and weakly \textsf{NP}-hard for \acp{DAG}.
On the positive side, they also showed that \ac{MinMaxDP} can be solved in pseudo-polynomial on \acp{DAG} when $k$ is constant.
By standard rounding techniques, this result can be turned into an FPTAS~\cite{FleischerGLZ07} for the same setting.

Special cases of \ac{MinMaxDP} when $k$ is part of the input have been studied under various names.
Most notably, the \acf{MLBA} problem is equivalent to a special case of \ac{MinMaxDP} in a \ac{DAG} consisting of $b$ layers. Dokka and Goerigk~\cite{DOKKA23} recently established a strong inapproximability bound of $\Omega(b)$ for \ac{MLBA}.
In \Cref{app:hardness}, we observe that their construction also implies that \ac{MinMaxDP} (and, by extension, \ac{TMO}) in \acp{DAG} does not admit approximation factors significantly better than logarithmic in $k$, unless every problem in \textsf{NP} can be solved in quasi-polynomial time (violating the exponential time hypothesis). This leads to the following theorem.
\begin{restatable}{theorem}{hardness}
     \label{thm:hardness}
        There is no $\log^{1-\varepsilon}(k)$-approximation for \ac{MinMaxDP} in \acp{DAG} for any $\varepsilon > 0$, unless $\textsf{NP} \subseteq \textsf{QP}$.
\end{restatable}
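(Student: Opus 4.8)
The plan is to reduce from \ac{MLBA}, using the $\Omega(b)$-inapproximability of Dokka and Goerigk~\cite{DOKKA23} as a black box. Recall that an instance of \ac{MLBA} with $b$ levels can be encoded as a \ac{DAG} with $b+1$ layers $L_0, \dots, L_b$ (with $|L_0| = |L_b|$, say equal to $k$), where we add a super-source $s$ joined to all of $L_0$ by zero-length arcs and a super-sink $t$ joined from all of $L_b$ by zero-length arcs; a choice of $k$ arc-disjoint $s$-$t$-paths then corresponds to a selection of one representative per ``column'' at each level, and the makespan of the path profile equals the bottleneck value of the assignment. Thus an $\alpha$-approximation for \ac{MinMaxDP} on this DAG yields an $\alpha$-approximation for \ac{MLBA}, so the $\Omega(b)$ hardness transfers to an $\Omega(b)$ hardness for \ac{MinMaxDP} where $b$ is the number of layers.

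The key step is then to relate the number of layers $b$ to the number of paths $k$. The subtlety is that in the \ac{MLBA} construction $b$ and $k$ are a priori independent, so $\Omega(b)$ hardness does not immediately give $\Omega(\mathrm{poly}\log k)$ hardness — we need the hard instances to have $b$ large relative to $k$, i.e. roughly $k \le 2^{\mathrm{poly}(b)}$ or, turned around, $b \ge \log^{1-\varepsilon'} k$ for a suitable $\varepsilon'$. The way to achieve this is a \emph{padding/amplification} argument: one inspects the size of the hard \ac{MLBA} instances produced by the reduction of~\cite{DOKKA23} from (say) a $\mathsf{PCP}$-style or label-cover-style source problem, and observes that the width $k$ of these instances is at most $2^{\mathrm{poly}(b)}$, or one modifies the reduction so that it has this property. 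Given such instances, an $f(k)$-approximation algorithm for \ac{MinMaxDP} would be a $f(2^{\mathrm{poly}(b)})$-approximation for \ac{MLBA}; if $f(k) = \log^{1-\varepsilon}(k)$ then $f(2^{\mathrm{poly}(b)}) = \mathrm{poly}(b)^{1-\varepsilon} = o(b)$ — wait, that is only $o(b)$ if the polynomial is linear, so one actually wants $k$ to be at most $2^{O(b^{1/(1-\varepsilon)})}$-ish; more carefully, one chooses the amplification exponent depending on $\varepsilon$ so that $\log^{1-\varepsilon}(k) = o(b)$, contradicting the $\Omega(b)$ lower bound. Running the resulting algorithm would put every $\mathsf{NP}$ problem in quasi-polynomial time, since the blow-up in instance size is quasi-polynomial (this is exactly the $\textsf{NP} \subseteq \textsf{QP}$ / exponential-time-hypothesis flavoured conclusion, rather than $\mathsf{P} = \mathsf{NP}$, precisely because the reduction is only quasi-polynomial).

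Concretely, I would carry out the steps in this order. First, describe the gadget turning an \ac{MLBA} instance with $b$ levels into a \ac{MinMaxDP} instance on a DAG with $O(b)$ layers and $k$ disjoint paths, and verify the optimum is preserved exactly (so any $\alpha$-approximation transfers). Second, recall precisely the statement of~\cite[hardness result]{DOKKA23}: no $o(b)$-approximation for \ac{MLBA} on $b$-level instances unless $\textsf{NP}$ has quasi-polynomial algorithms, and note the size of the instances it produces. Third, do the parameter bookkeeping: set the target ratio $\log^{1-\varepsilon}(k)$, use $k \le 2^{\mathrm{poly}(b)}$ to get $\log^{1-\varepsilon}(k) \le \mathrm{poly}(b)^{1-\varepsilon}$, and choose things so this is $o(b)$, deriving the contradiction. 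Fourth, track the running time to conclude $\textsf{NP} \subseteq \textsf{QP}$ rather than $\mathsf{P}=\mathsf{NP}$.

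The main obstacle I anticipate is the second-to-last step: the clean $\Omega(b)$ statement for \ac{MLBA} needs to come with a guarantee on the relationship between $b$ and the instance width $k$, and this requires either citing the internals of~\cite{DOKKA23} carefully or redoing their amplification so that the number of levels can be taken polynomially (or at least super-logarithmically) large compared to $\log k$. Getting the exponents to line up — so that $\log^{1-\varepsilon}(k)$ is genuinely $o(b)$ for every fixed $\varepsilon>0$ while keeping the reduction quasi-polynomial — is the delicate quantitative heart of the argument; everything else is the routine gadget construction and bookkeeping. This is exactly the content deferred to \Cref{app:hardness}.
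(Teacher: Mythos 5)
Your approach matches the paper's exactly: reduce \ac{MLBA} to \ac{MinMaxDP} via the node-splitting gadget, then invoke the hardness of Dokka and Goerigk~\cite{DOKKA23} with a parameter choice that makes $\log^{1-\varepsilon}(k)$ fall strictly below $b$. The ``obstacle'' you flag — needing a quantitative handle on the width $k$ in terms of $b$ — is resolved exactly as you suspect, by citing the internals of~\cite{DOKKA23}: their construction reduces from \textsc{3-Dimensional Matching} (not a PCP/label-cover source) with a free parameter $u$, producing $b = 3u$ blocks of width $n \le u(3q + 6p^{u+1})$ and a $1$-vs-$(u{+}1)$ gap, and the paper sets $u = \lceil(\log p)^{1/\varepsilon}\rceil$ so that $n \le p^{(\log p)^{1/\varepsilon}+3}$, which after taking logs gives $\log^{1-\varepsilon}(n) < u+1$ and a quasi-polynomial blow-up; so your sketch is correct in structure and only leaves this bookkeeping, which is indeed what the appendix carries out.
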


Moreover, \ac{MinMaxDP} in bundle graphs, \ie digraphs consisting of a sequence of parallel arcs, is equivalent to the so-called \emph{complete} case of \ac{MLBA}, from which it also inherits strong \textsf{NP}-hardness. For this problem, a long line of approximation algorithms starting in the 1980s~\cite{coffman1984permuting,hsu1984approximation} has recently culminated in a PTAS by Das and Wiese~\cite{SchedulingBagConstWiese} (formulated for an equivalent scheduling problem).

\subsection{Contributions and overview}

In the following, we provide an overview of our  main contributions and the techniques used to derive these results.
\medskip

\noindent\textbf{Existence of optimal convoy routings.}
We show that any instance of \ac{TMO} has an optimal solution in which any two trains either travel along the same path (with sufficient temporal distance), or follow disjoint paths (Theorem~\ref{thm:ConvoyConjecture} in \cref{sec:ConvoySolutions}).
We call such solutions \emph{convoy routings}. 
In particular, convoy routings have a compact representation that is polynomial in the size of the input, even when the number of trains is exponential in the size of the network.
A further consequence is that allowing trains to wait at intermediate nodes does not help to improve the optimal solution value.
The proof of this result relies on an uncrossing algorithm that takes an arbitrary train routing and modifies it to ensure that any train that is the first to traverse the final arc of its path is also the first to traverse any other arc it uses.
Conflicts, \ie situations where this property is violated, are resolved iteratively, starting with those closest to the sink, with a potential function argument ensuring eventual termination of the procedure.
\medskip

\noindent\textbf{A flow-based additive approximation for \ac{TMO}.}
We further provide a simple flow-based approximation algorithm for \ac{TMO} that returns a convoy routing whose makespan is bounded by $\OPT + \Delta$ (\Cref{thm:FoT-APX} in \Cref{sec:FlowAlgo}).
This result is achieved by solving an appropriately defined quickest flow over time problem in the underlying network.
Note that in practice, the headway $\Delta$ is typically much smaller than the total transit time of a train from source to sink, and hence the additive error $\Delta$ is small compared to $\OPT$.
\medskip

\noindent\textbf{Reduction to \ac{MinMaxDP}.}
Building on the previous two results, we establish a strong connection between \ac{MinMaxDP} and \ac{TMO} by showing that any $\alpha$-approximation algorithm for \ac{MinMaxDP} yields a $\max\{1+\epsilon, \alpha\}$-approximation algorithm for \ac{TMO} with polynomial runtime in the size of the input and $\frac{1}{\epsilon}$ (\cref{thm:BlackBoxReduction} in \Cref{sec:Reduction}). 
To achieve this result, we devise a gadget that encodes the choices on the number of trains on each path of a convoy routing into the routing decisions in the network.
The size of this gadget is polynomial in the size of the original network and $d$. It thus provides an exact reduction of \ac{TMO} to \ac{MinMaxDP} when the number of trains $d$ is polynomial in the size of the network.
When $d$ is large, on the other hand, there must be an arc traversed by a large number of trains and hence $\OPT$ is much larger than $\Delta$.
In this case, the additive guarantee of $\OPT+\Delta$ of our flow-based algorithm for \ac{TMO} mentioned above translates to a $(1 + \varepsilon)$-approximation.
As we mentioned previously, for sufficiently large values of $\Delta$, \ac{TMO} coincides with \ac{MinMaxDP}. Thus, finding good approximation algorithms for \ac{MinMaxDP} is not only sufficient but also necessary to obtain good approximations for \ac{TMO}.
\medskip

\noindent\textbf{Approximation algorithms for \ac{MinMaxDP} in series-parallel networks.}
In the remainder of this paper, we therefore focus on approximation algorithms for \ac{MinMaxDP}. 
We show that a natural greedy approach achieves an approximation which is logarithmic in $k$ on \ac{SePa} networks (\Cref{lem:DPepsruntime} in \Cref{sec:Greedy}), where $k$ is the number of paths to be constructed.
This constitutes the first non-trivial approximation result for \ac{MinMaxDP} going beyond the aforementioned PTAS for bundle graphs~\cite{SchedulingBagConstWiese}, without assuming that either $k$ or the number of layers of the graph is constant.

We prove the approximation guarantee via induction on a series-parallel decomposition of the graph. The key ingredient is showing that the greedy algorithm maintains a solution fulfilling a strong balance condition on the path lengths. 
Additionally, we provide an alternative analysis of the same approach that yields an upper bound of $\phi(D) + 1$ on the approximation factor, where~$\phi(D)$ denotes the number of changes from series to parallel composition on a root-leaf path in the binary tree describing the \ac{SePa}-graph.
To the best of our knowledge, the parameter~$\phi(D)$, which we demonstrate to be independent of the choice of the decomposition tree, has not been used in the literature before.
We believe that this natural parameter can be of independent interest for designing and analysing algorithms for other problems in \ac{SePa}-graphs.
Our alternative analysis implies the classic greedy $2$-approximation for bundle graphs by Hsu~\cite{hsu1984approximation} as a special case.
By the aforementioned reduction of \ac{TMO} to \ac{MinMaxDP}, the approximation results for \ac{MinMaxDP} on \ac{SePa}-graphs, translate to our train routing problem.

\section{Existence of optimal convoy routings}
\label{sec:ConvoySolutions}
Let $(D, \tau, d, \Delta)$ be an instance of \ac{TMO}. A \emph{convoy routing} consists of a collection of $k\le d$ pairwise arc-disjoint $s$-$t$-paths $\bP=(P_1,\dots,P_k)$ in $D$, together with a vector $\bsigma \in \mathbb{Z}_{\geq 1}^k$ with $\sum_{i=1}^k \sigma_i=d$, whose $i$-th component $\sigma_i$ denotes the number of trains using path $P_i$. Note that any such convoy routing $(\bP, \bsigma)$ can be turned into a (feasible) train routing $(\bP, \blam)$  by sending $\sigma_i$ trains along path $P_i$ in single file with temporal spacing $\Delta$ between any two consecutive trains.
Thus, the makespan of this train routing is $\max_{i \in [k]} \left\{\tau(P_i) + (\sigma_i - 1) \cdot \Delta\right\}$, where $\tau(P_i):=\sum_{a\in P_i} \tau_a.$

\begin{restatable}[Convoy theorem]{theorem}{ConvoySolution}
    \label{thm:ConvoyConjecture}
    Every instance of \ac{TMO} admits an optimal solution that is a convoy routing.
\end{restatable}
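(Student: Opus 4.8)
The plan, following the approach announced in the introduction, is to take an arbitrary optimal train routing $(\bP,\blam)$ and transform it step by step, without ever increasing the makespan, into one with convoy structure; concretely, I would fix a suitable potential, choose an optimal routing minimising it, and argue that such a minimiser is already a convoy routing. Some harmless normalisations come first: every $P_i$ may be assumed simple, since deleting a directed cycle shortens a path and, as travel times are nonnegative, cannot delay any train; and on each arc $a$ the trains using $a$ are totally ordered by their (pairwise $\Delta$-separated, hence distinct) entry times $\lambda_i(a)$ --- call the first of them the \emph{leader of $a$}. Since $\lambda_i(a)+\tau_a\le\lambda_i(a')$ along $P_i$, one has $C_i=\lambda_i(e_i)+\tau_{e_i}$ for the last arc $e_i$ of $P_i$, so the makespan is governed entirely by the times at which trains enter their final arcs; in particular every arc into $t$ is the last arc of every train using it.

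The engine of the transformation is a \emph{suffix swap}. If two trains, with paths $P$ and $P'$, both pass through a node $w$ and their portions $P_{\ge w}$ and $P'_{\ge w}$ beyond $w$ are arc-disjoint, we may reassign $P'_{\ge w}$ together with the second train's entry times there to the first train, and symmetrically. On every arc this merely permutes the labels of the two trains involved, so headway --- and hence feasibility --- is preserved; and the two trains exchange their arrival times, so the entire multiset of arrival times, and in particular the makespan, is unchanged. Thus suffix swaps preserve optimality. A companion move reroutes a train $j$ that shares an arc with a train $i$ leading all of its own arcs: sending $j$ in single file behind $i$ along all of $P_i$ only moves $j$'s arrival earlier (it is dominated by $i$'s schedule shifted by a multiple of $\Delta$) while keeping the spacing needed for headway among the trains on $P_i$.

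With these moves I would run the uncrossing procedure. Call $i$ a \emph{front train} if it is the leader of its last arc $e_i$, and call a routing \emph{clean} when every front train is the leader of every arc on its path. Starting from an optimal routing, as long as it is not clean pick a violation $(i,j,a)$ --- a front train $i$ not leading an arc $a\in P_i$, witnessed by a train $j$ with $\lambda_j(a)<\lambda_i(a)$ --- for which $a$ is as close to $t$ as possible; trace $P_i$ from $a$ towards $t$ to the node $w$ where $P_j$ first leaves $P_i$ (one argues such a $w$ must exist strictly before $e_i$, as otherwise $j$ would still precede the front train $i$ on $e_i$), and perform the suffix swap at $w$. As potential I would use a lexicographic measure of how far the routing is from clean, most significant near the sink; the choice of the sink-closest violation ensures the swap only touches arcs nearer to $t$, so the potential strictly drops and the procedure terminates with a clean optimal routing. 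In a clean routing the front trains have pairwise arc-disjoint paths, and a concluding round of the reroute-behind-the-leader move --- again worked from the sink backwards --- collapses each non-front train onto the path of the front train leading its last arc, yielding a convoy routing $(\bP,\bsigma)$ in which, as a byproduct, no train waits at an intermediate node.

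I expect the main obstacle to be verifying that the potential genuinely decreases at each step, i.e.\ that resolving the sink-closest violation by a suffix swap cannot create a fresh violation even closer to $t$, and that the concluding reroutings do not break headway for bystander trains already present on the leader's path; this is precisely what dictates processing conflicts from the sink backwards and what makes the exact form of the potential and the choice of the swap node delicate. A secondary difficulty is that everything must be carried out for general digraphs rather than \acp{DAG}: two simple $s$-$t$-paths may interleave and re-meet several times, so both the existence of a clean node $w$ at which to swap and the invariance of the arrival-time multiset under suffix swaps need to be argued with some care.
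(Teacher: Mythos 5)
Your high-level plan matches the paper's: isolate the trains that lead their final arcs (your ``front trains'', the paper's leaders), iteratively fix violations closest to the sink until leaders ride pairwise disjoint paths, then collapse all other trains behind the leader of their last arc. The difference is in the repair step, and that is where your proposal has genuine gaps.

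Your repair move is a pairwise suffix swap: pick the sink-closest violation $(i,j,a)$, trace $P_i$ toward $t$ to a node $w$ where $P_j$ ``first leaves'' $P_i$, and have $i$ and $j$ exchange both their $w$-to-$t$ suffixes and their entry times on those suffixes. This has several problems that the paper's operation is specifically designed to avoid. First, the existence of a clean swap node: $P_j$ might coincide with $P_i$ on every arc from $a$ to $t$, with $i$ simply catching up to and passing $j$ at some intermediate node (trains in this model can overtake at vertices by waiting), so the divergence node $w$ you rely on need not exist, and even when it does the two suffixes beyond $w$ can re-intersect. Second, temporal feasibility: handing $i$ the entry times $\lambda_j$ on $P_{j,\ge w}$ is only feasible if $i$'s arrival time at $w$ (computed along $P_i[s,w]$ with its own times) is no later than $\lambda_j$'s departure from $w$; nothing forces this, and in the natural scenario where $j$ was ahead of $i$ on the shared stretch before $w$, it generally fails. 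You flag both issues as ``obstacles'' at the end, but the proof as written does not resolve them; they are not minor. Third, the potential: ``a lexicographic measure most significant near the sink'' is not accompanied by an argument that it drops --- swapping suffixes at $w$ can freely rearrange who precedes whom on arcs between $w$ and $t$, so new sink-close conflicts involving bystander trains are entirely possible.

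The paper's repair step avoids all of this. Instead of swapping suffixes between two trains, it reroutes \emph{all} trains preceding the leader $x$ on $a^{tr}_x$ onto $x$'s already-clean suffix $Q_x$, translated so that each keeps the same temporal offset from $x$ that it had on $a^{tr}_x$. Because $x$ is the first train on every arc of $Q_x$, there is ``room'' in front of $x$ to insert the rerouted trains without colliding with anyone else; because they all keep their pairwise spacing, they do not collide with one another; and none of them arrive later than $x$, so the makespan cannot increase. Moreover, the potential is concrete --- the number of arcs from $s$ to the transition arc $a^{tr}_y$ on $P_y$, summed over all trains --- and \Cref{lem:uncrossingruntime} shows via Propositions \ref{prop:uncrossing_circle} and \ref{prop:conflict_free} that this strictly drops for $x$ and never increases for anyone else. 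That is precisely the monotonicity your proposal asserts but does not justify. If you want to salvage the suffix-swap idea, you would have to (i) define the swap node via a congruence condition on ordered tuples of preceding trains, as the paper does with $a^{tr}_x$, rather than via path divergence, (ii) swap more than two trains at once, and (iii) prove a potential decrease along the lines of the paper's proof --- at which point you have essentially reconstructed the paper's argument.
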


Our proof of~\Cref{thm:ConvoyConjecture} is constructive.
Note that, since we allow waiting at nodes, there is always an optimal train routing $(\bP,\blam)$ which is acyclic in the sense that none of the paths contains a cycle.
In order to show that any instance of \ac{TMO} admits an optimal routing which is a convoy solution (\ie to prove Theorem \ref{thm:ConvoyConjecture}), we take an optimal acyclic train routing $(\bP,\blam)$, and modify this train routing step by step without increasing the makespan until the routing corresponds to a convoy routing.

The algorithm, which we from now on call \textbf{uncrossing algorithm}, proceeds in two phases.
First, in the \textit{uncrossing phase}, we reroute trains such that at the end of the phase, the trains that are the first ones to use their respective ingoing arc to the sink $t$, travel along disjoint paths.
We call these trains \emph{leaders}, and the remaining trains \emph{non-leaders}.
Afterwards, in the \textit{assignment phase}, we assign every non-leader train a route based on the arc it uses to enter~$t$. In particular, each such train follows its leader from $s$ to $t$. 

We now outline the algorithm in more detail and briefly sketch its correctness. A detailed description can be found in \Cref{app:ConvoySolution}.

\subsection{The uncrossing phase}
We first introduce some notation.
For an arc $a$ entering $t$, i.e., $a \in \delta^-(t)$, we define the \emph{leader} $L_a$ on $a$ as the first train using arc $a$.
We denote by $L = \{L_a\}_{a \in \delta^-(t)}$ the set of leaders.
We say that a leader \emph{$x$ has a conflict on arc $a \in A$} with another train $y$ if $y$ uses some arc $a$ before $x$. 
Note that this relation is not symmetric. 
For a leader $x \in L$, we define the \emph{transition arc} $a^{tr}_x$ as the first arc on which a conflict of $x$ with another train exists when traversing $P_x$ from $t$ to $s$, \ie in the reverse direction.
Note that $a^{tr}_x$ might not exist, in which case $x$ is the first train on every arc on its path.

In the uncrossing phase, we pick a leader $x$ for which $a^{tr}_x$ exists.
We reroute all trains preceding $x$ on $a^{tr}_x$ such that, after rerouting, they all use the same subpath as $x$ from $a^{tr}_x$ to~$t$. The rerouted trains traverse every arc of this $a^{tr}_x$-$t$-path with the same temporal distance to train $x$ as they had on arc $a^{tr}_x$. 
Hence, the rerouted trains keep their respective pairwise temporal distances on the entire path from $a^{tr}_x$ to $t$ and no new conflicts among the rerouted trains arise.
\Cref{fig:reroutedtrains} shows an example of such a rerouting step.
Note that this rerouting procedure also cannot create new conflicts with other trains as $x$ was the first train on every arc of $P_x$ from $a^{tr}_x$ to $t$.
Hence, the new routing is feasible and does not increase the makespan as $x$ arrives after any rerouted train and the routing of $x$ remains unchanged.
\begin{figure}[t]
\centering
    \begin{tikzpicture}[scale=0.69, transform shape,
    every node/.style={draw,circle,inner sep=0pt,minimum size=15pt}]
    
        \node (s) at (0,0) {$s$};
        \node (v1) at (2,0) {$v_1$};
        \node (v2) at (4.5,0) {$v_2$};
        \node (v3) at (6.5,0) {$v_3$};
        \node (t) at (9,0) {$t$};
    
        \begin{scope}[every path/.style={->,draw, very thick}]
            \draw[bend left=60, rwthorange,dashed] (s.90) to node[draw=none,fill=none,pos=0.5,above,yshift=-1.5ex] {$\textcolor{rwthgreen}{z}$ $\textcolor{black}{\succ}$ $y$} (v1.90);
            \draw[rwthorange,dashed] (v1.20) to (v2.160);
            \draw[bend left=60, rwthorange,dashed] (v2) to node[draw=none,fill=none,pos=0.5,above] {$y$} (t);

            \draw[bend left=60, rwthgreen,dotted] (s) to (v1);
            \draw[rwthgreen,dotted] (v1.-20) -- (v2.200);
            \draw[bend right=60, rwthgreen,dotted] (v2) to node[draw=none,fill=none,pos=0.5,above] {$z$} (v3);
            \draw[rwthgreen,dotted] (v3.-20) -- (t.200);
    
            \draw[rwthblue] (s) -- (v1) node[draw=none,fill=none,pos=0.5,above] {$x$};
            \draw[rwthblue] (v1) -- (v2) node[draw=none,fill=none,pos=0.5,above,yshift=-3ex] {\textcolor{black}{ $\textcolor{rwthgreen}{z} \succ$} $x$ \textcolor{black}{ $\succ \textcolor{rwthorange}{y}$}} node[draw=none,fill=none,midway,below,yshift=-1ex,black] {$a^{tr}_x$};
            \draw[rwthblue] (v2) -- (v3) node[draw=none,fill=none,midway,above] {$x$};
            \draw[rwthblue] (v3) -- (t) node[draw=none,fill=none,pos=0.5,above,yshift=-1ex] {\textcolor{black}{ $\textcolor{rwthgreen}{z} \succ$} $x$};

        \end{scope}
    \end{tikzpicture}
    \hfill
    \begin{tikzpicture}[scale=0.69, transform shape,
        every node/.style={draw,circle,inner sep=0pt,minimum size=15pt}]

        \node (s) at (0,0) {$s$};
        \node (v1) at (2,0) {$v_1$};
        \node (v2) at (4.5,0) {$v_2$};
        \node (v3) at (6.5,0) {$v_3$};
        \node (t) at (9,0) {$t$};

        \draw[->,bend left=70] (v2) to (t);
    
        \begin{scope}[every path/.style={->, draw, very thick}]
            \draw[bend left=60, rwthorange, dashed] (s.90) to node[draw=none,fill=none,pos=0.5,above,yshift=-1.5ex] {$\textcolor{rwthgreen}{z}$ $\textcolor{black}{\succ}$ $y$} (v1.90);
            \draw[rwthorange,dashed] (v1.20) to (v2.160);
            \draw[rwthorange,dashed] (v2.20) to (v3.160);
            \draw[rwthorange,dashed] (v3.20) to (t.160);

            \draw[bend left=60, rwthgreen,dotted] (s) to (v1);
            \draw[rwthgreen,dotted] (v1.-20) -- (v2.200);
            \draw[bend right=60, rwthgreen,dotted] (v2) to node[draw=none,fill=none,pos=0.5,above] {$z$} (v3);
            \draw[rwthgreen,dotted] (v3.-20) -- (t.200);
            
            \draw[rwthblue] (s) -- (v1) node[draw=none,fill=none,pos=0.5,above] {$x$};
            \draw[rwthblue] (v1) -- (v2) node[draw=none,fill=none,pos=0.5,above,yshift=-3ex] {\textcolor{black}{ $\textcolor{rwthgreen}{z} \succ$} $x$ \textcolor{black}{ $\succ \textcolor{rwthorange}{y}$}};
            \draw[rwthblue] (v2) -- (v3) node[draw=none,fill=none,midway,above,yshift=-1ex] {$x$ \textcolor{black}{ $\succ \textcolor{rwthorange}{y}$}};
            \draw[rwthblue] (v3) -- (t) node[draw=none,fill=none,pos=0.5,above,yshift=-3ex] {\textcolor{black}{ $\textcolor{rwthgreen}{z} \succ$} $x$ \textcolor{black}{ $\succ \textcolor{rwthorange}{y}$}};
        \end{scope}
    \end{tikzpicture}
\caption{An example iteration of the uncrossing phase. Here $x \succ y$ means that $x$ follows $y$. Train $x \in L$ is chosen by the uncrossing algorithm. As train $y$ precedes $x$ on $a^{tr}_x$, it is rerouted to use the same subpath as $x$ from $a^{tr}_x$ to $t$. The routing after this iteration is shown on the right.}
\label{fig:reroutedtrains}
\end{figure}
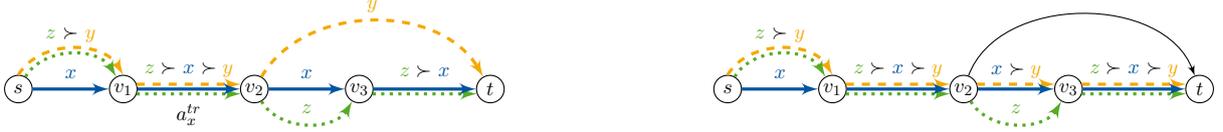

We iterate this procedure with the updated train routing (and the corresponding leaders and transition arcs)
until every leader is the first train on all arcs of its path, which implies that all leaders use disjoint paths. In the proof of \cref{lem:Uncrossing} below, we establish that this termination criterion is reached after a finite number of uncrossing steps.

\subsection{The assignment phase}
After successfully uncrossing the paths of all leaders, we receive an updated routing $(\bP,\blam)$.
To achieve a convoy routing $(\bR,\bsigma)$, define $\bR = (P_x)_{x \in L}$ where $P_x$ is the $s$-$t$-path used by leader $x$ in $\bP$.
We assign all trains that use the same arc as $x$ to enter $t$ in $(\bP,\blam)$ to $P_x$ and define $\sigma_x$ as the resulting number of trains that use $P_x$. 
As each train uses exactly one arc in $\delta^-(t)$, this induces a partition of $[d]$ and thus a feasible convoy routing.

To see why the makespan remains unchanged, note the following:
Fix an arc $a \in \delta^-(t) \cap P_x$ for some $x \in L$.
The first train from our convoy routing $(\bR,\bsigma)$ arrives at $a$ no later than the corresponding leader $x$ in $(\bP,\blam)$.
This is because they use the same path, and the train from $(\bR,\bsigma)$ departs $s$ immediately.
Further, in both routings, the last train entering $a$ must have a minimum temporal distance of $(\sigma_x-1)\cdot \Delta$ to the first train entering $a$. In the convoy routing, this bound is tight and thus the assignment phase does not increase the makespan.

\subsection{Analysis}
By its construction, if the uncrossing algorithm terminates, it returns a convoy routing with a makespan no worse the one of the input routing. Theorem \ref{thm:ConvoyConjecture} thus follows as a consequence of the following lemma.

\begin{lemma}
    \label{lem:Uncrossing}
    The uncrossing algorithm terminates.
\end{lemma}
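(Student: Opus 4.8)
The plan is to define a potential function on train routings that strictly decreases with every uncrossing step and is bounded below, so that only finitely many iterations can occur. Since each individual step has already been argued to preserve feasibility and not increase the makespan, termination is the only thing left to establish.

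First I would fix the set of ``events'' to track. For a given train routing $(\bP,\blam)$, consider pairs consisting of a leader $x \in L$ together with an arc $a \in P_x$ on which $x$ has a conflict (i.e. some other train uses $a$ strictly before $x$). The natural first guess for the potential is the total number of such conflict pairs, but this need not decrease: rerouting the trains preceding $x$ on $a^{tr}_x$ onto $x$'s subpath to $t$ can create new conflicts further along that subpath (a rerouted train may now precede some \emph{other} leader on an arc it did not use before). So a plain conflict count is too crude, and the subtlety is precisely that the rerouting moves conflicts around rather than obviously destroying them. The key observation to exploit is that the rerouting strictly shortens the portion of $P_x$ — measured from $t$ backwards — on which $x$ itself has any conflict: after the step, $x$ is the first train on the entire subpath from $a^{tr}_x$ to $t$, so $x$'s new transition arc (if any) lies strictly closer to $s$ than $a^{tr}_x$ was. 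This suggests weighting conflicts by how close to the sink they occur, giving more weight to conflicts near $t$.

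Concretely, I would define the potential as a vector (ordered lexicographically) or as a single number via a sufficiently fast-growing weight: for each leader $x$, let $\ell_x$ be the number of arcs on $P_x$ between $a^{tr}_x$ and $t$ (or $0$ if no transition arc exists), i.e. a measure of the conflict-free ``tail'' of $x$'s path, and consider $\Phi(\bP,\blam) = \sum_{x \in L} (\,|P_x| - \ell_x\,)$ or, better, track the multiset of distances-to-$t$ of the transition arcs and compare these multisets lexicographically. The crucial claim to verify is that after one uncrossing step on $x$: (i) $x$'s own transition arc moves strictly toward $s$ (equivalently $\ell_x$ strictly increases), and (ii) for every other leader $y$, its transition arc does not move toward $t$ — any newly created conflict for $y$ can only occur on arcs of $x$'s subpath from $a^{tr}_x$ to $t$, but one has to check that $y$'s leader-status and the arc through which $y$ enters $t$ behave as claimed, and that rerouted trains which were not leaders cannot become leaders in a way that spoils the count. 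Here it is important to note that the \emph{set of leaders can change} across a step (a rerouted train could become the leader on $a^{tr}_x$'s former path near $t$), so the argument must be phrased in terms of the arc through which each train enters $t$, which is stable for the rerouted trains only up to the rerouting — this bookkeeping is where care is needed.

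The main obstacle I anticipate is exactly this: controlling what happens to \emph{other} leaders and to the identity of the leader set after a rerouting step, so that the chosen potential is genuinely monotone and not merely monotone ``for $x$''. I would handle it by choosing the potential to be dominated by the behaviour near the sink — e.g. first compare the number of leaders whose transition arc is the last arc into $t$, then (breaking ties) those whose transition arc is the second-to-last, and so on — and arguing that the uncrossing step, applied to a leader $x$ whose transition arc is chosen as close to $t$ as possible among all leaders, strictly improves the first coordinate that changes while leaving the earlier (more sink-ward) coordinates untouched. Since each coordinate is a nonnegative integer bounded by $|L| \le m$ and the path lengths are bounded by $|V|$, the lexicographic potential takes values in a finite set, so strict decrease forces termination. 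I would close by noting that upon termination every leader has no transition arc, hence is first on every arc of its path, hence the leader paths are pairwise arc-disjoint, which is what the assignment phase needs.
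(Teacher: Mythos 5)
Your potential has the right \emph{shape} (for a fixed train $x$, count the arcs from $s$ to $x$'s transition arc and show this count is non-increasing and strictly decreases for the picked leader), but you restrict the sum to leaders. You correctly identify that this creates a problem—the leader set changes across an uncrossing step (the rerouted trains now precede $x$ on $\ell_x$, so a rerouted train becomes the new leader there)—but your proposed fix goes in a different and substantially more complicated direction (a lexicographic multiset ordered by distance to $t$, plus modifying the algorithm to pick the leader whose transition arc is closest to $t$), and you do not carry out the argument that this lex potential actually decreases; you flag it as ``where care is needed'' and ``I would argue'' without closing the gap.

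The paper's resolution is simpler and is the key idea you are missing: extend the definition of the transition arc $a^{tr}_y$ (and the transition node $v^{tr}_y$) to \emph{every} train $y$, not just to leaders, and take the potential to be $\sum_{y \in [d]} |\Path{y}[s, v^{tr}_y]|$, i.e.\ the sum over \emph{all} trains. This sidesteps the leader-set-changing issue entirely: whether or not a train's leader status changes, its individual potential is still well-defined and the proof shows it never increases. Concretely, the paper proves three things: (i) the picked leader $x$'s potential strictly decreases (since $a^{tr}_x$ is now congruent with $Q_x$, so $a^{tr,\mathrm{new}}_x \prec_x a^{tr}_x$); (ii) for every rerouted train $y$, the potential does not increase ($a^{tr,\mathrm{new}}_y \preceq_y a^{tr}_y$, using that $y$'s old path up to $v^{tr}_x$ is untouched and that arcs outside $Q_x$ see no new traffic); and (iii) for every non-rerouted train $z$, the potential also does not increase, using a structural proposition that $Q_z$ and $Q_x$ are either arc-disjoint or $Q_z \subseteq Q_x$ (because $x$ is a leader). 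Since the total potential is a nonnegative integer bounded by $d \cdot n$, termination follows. Without extending the potential to all trains, neither your additive-over-leaders version nor your lex-over-leaders version has a complete proof that it is monotone, and the case analysis you would need (new leaders appearing, rerouted trains creating or moving conflicts for other leaders) is exactly what the all-trains version renders unnecessary.
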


\begin{proof}[Proof (sketch)]
    We prove this using a potential function.
    To define this function, we extend the definition of $a^{tr}_y$ to arbitrary, \ie\ also non-leader trains, $y$. Informally, $a^{tr}_y$ is the last arc on train $y$'s path after which the set (or order) of trains driving before $y$ changes.

    The potential of a train $y$ is the number of arcs of $P_y$ from $s$ to $a^{tr}_y$.
    We prove the following two statements:
    (i) When resolving a conflict for some leader $x$, we decrease its potential,
    (ii) The potential of all other trains does not increase when resolving a conflict of $x$.
    Together, (i) and (ii) imply that the sum of all potentials decreases in every iteration of the uncrossing phase, implying finite termination of this phase and thus of the entire algorithm.
    The proof of (i) and (ii) involves a careful analysis, which reveals that the transition node of a train is always on its path in the original routing, and a case distinction of possible configurations arising in the uncrossing steps.
    The details of this analysis can be found in Appendix~\ref{app:ConvoySolution}.
\end{proof}

\section{A flow-based additive approximation for TMO}
\label{sec:FlowAlgo}
As mentioned in \cref{sec:introduction}, \ac{TMO} with $\Delta = 1$ reduces to \textsc{Quickest Flow Over Time} in a network with unit capacities. 
This problem asks for a flow over time of a given value ($d$ units in our case) to be sent from a source to a sink in the shortest time $T$ possible.
For an in-depth discussion of flows over time, we refer to~\cite{Skutella2009}.
By guessing the optimal makespan $T$ (\eg by using parametric search~\cite{saho2017cancel}),
\textsc{Quickest Flow Over Time} can, in turn, be reduced to \textsc{Maximum Flow Over Time}---\ie sending a maximum amount of flow within a given time horizon.
The latter problem can be solved efficiently using a classic result of Ford and Fulkerson~\cite{FordFulkerson1958}: Compute an appropriate static flow, decompose this flow into (disjoint, in the case of unit capacities) paths, and send one unit of flow along each path as long as it still reaches the sink within the time horizon.

A natural extension of this approach to the case $\Delta > 1$ is computing a quickest flow over time sending $\Delta \cdot d$ units in total, and transform it into a train routing by only sending one unit of flow (\ie one train) every $\Delta$ time units along each path. 
Because, conversely, any routing of $d$ trains also can be extended to a flow over time of value $\Delta \cdot d$ when allowing an extra period of $\Delta$ time units beyond its makespan, we obtain the following approximation guarantee for \ac{TMO}; see \cref{app:FlowAlgo} for a complete~proof.

\begin{restatable}{theorem}{FoTapprox}
    \label{thm:FoT-APX}
    There is an algorithm that, given a \ac{TMO} instance, computes in strongly polynomial time a convoy routing with makespan at most $\OPT + \Delta$, where $\OPT$ is the optimal~makespan.
\end{restatable}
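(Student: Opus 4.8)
The plan is to make precise the two reductions sketched in the text: first, reduce \ac{TMO} with headway $\Delta$ to \textsc{Quickest Flow Over Time} with unit capacities (the $\Delta = 1$ case) by a scaling trick, and second, invoke the classical Ford--Fulkerson temporally repeated flow machinery to solve the latter in strongly polynomial time. Concretely, given a \ac{TMO} instance $(D,\tau,d,\Delta)$, I would consider the flow-over-time problem on $D$ with the same transit times $\tau_a$, unit arc capacities, and required flow value $\Delta \cdot d$. Let $T^\ast$ be the optimal time horizon for this quickest flow problem. The core claim is the two-sided inequality $T^\ast \le \OPT + \Delta$ and $\OPT \le T^\ast$, from which the theorem follows once we extract an integral convoy routing of makespan at most $T^\ast$.

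For the direction $T^\ast \le \OPT + \Delta$: take an optimal convoy routing, which exists by \Cref{thm:ConvoyConjecture}, consisting of arc-disjoint paths $P_1,\dots,P_k$ with $\sigma_i$ trains on $P_i$ and $\sum_i \sigma_i = d$, and makespan $\OPT = \max_i \{\tau(P_i) + (\sigma_i-1)\Delta\}$. On each $P_i$, instead of sending $\sigma_i$ discrete trains at times $0,\Delta,2\Delta,\dots,(\sigma_i-1)\Delta$, send a continuous rate-$1$ flow along $P_i$ during the time window $[0,\sigma_i\Delta)$; since the paths are arc-disjoint this respects unit capacities, and it injects exactly $\sigma_i\Delta$ units on $P_i$, hence $\Delta d$ units total. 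The last flow particle on $P_i$ enters at time $< \sigma_i\Delta$ and arrives by time $\tau(P_i) + \sigma_i\Delta = (\tau(P_i) + (\sigma_i-1)\Delta) + \Delta \le \OPT + \Delta$, so this is a feasible flow over time within horizon $\OPT+\Delta$, giving $T^\ast \le \OPT + \Delta$.

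For the algorithmic direction, I would solve the quickest flow over time with value $\Delta d$ and unit capacities using the Ford--Fulkerson approach: binary/parametric search over the horizon $T$ reduces to \textsc{Maximum Flow Over Time}, which is solved by a min-cost static flow whose path decomposition yields arc-disjoint paths $P_1,\dots,P_k$ with integer "copy counts"; temporally repeating path $P_i$ sends one unit every unit of time for as long as it reaches $t$ within $T$, i.e. $T - \tau(P_i)$ additional units after the first. To turn the optimal flow of value $\Delta d$ and horizon $T^\ast$ into a convoy routing, I would start from its path decomposition into arc-disjoint paths and, on each path $P_i$, keep only every $\Delta$-th train (one at times $0,\Delta,2\Delta,\dots$), i.e. set $\sigma_i := \lceil (\text{flow on } P_i)/\Delta \rceil$ or, more carefully, $\sigma_i := 1 + \lfloor (T^\ast - \tau(P_i))/\Delta\rfloor$ capped so that $\sum_i\sigma_i = d$; the last kept train on $P_i$ departs at time $(\sigma_i-1)\Delta \le T^\ast - \tau(P_i)$, hence arrives by $T^\ast \le \OPT+\Delta$, and spacing $\Delta$ is respected by construction since the paths are arc-disjoint and trains on a common path are $\Delta$ apart. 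Strong polynomiality comes from Ford--Fulkerson (one min-cost flow computation) together with the fact that the relevant breakpoints of the parametric search are determined by the $O(m)$ path lengths, so only $O(\log m)$ or $O(m)$ search steps are needed; I should double-check that the search can be made combinatorial rather than relying on bit-length of $d$.

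The main obstacle I anticipate is the bookkeeping to guarantee \emph{exactly} $d$ trains (the flow of value $\Delta d$ need not decompose so that each path carries a multiple of $\Delta$, and rounding $\sigma_i$ down on each path could lose up to $k$ trains, while rounding up could violate capacity-induced disjointness constraints only trivially but could overshoot $d$). The clean fix is to observe that the quickest flow over time of value $\Delta d$ has horizon $T^\ast \le \OPT+\Delta$, and that temporally repeating its path decomposition within horizon $T^\ast$ can route \emph{at least} $\Delta d$ units; we then simply delete surplus particles and keep one train per $\Delta$ consecutive particles on each path, arriving at a convoy routing for exactly $d$ trains with makespan $\le T^\ast \le \OPT+\Delta$ — so the rounding only ever discards, never adds, and feasibility and the makespan bound are preserved. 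The remaining care is purely in writing this extraction cleanly; I do not expect any genuine mathematical difficulty beyond it, and the complete argument is deferred to \cref{app:FlowAlgo}.
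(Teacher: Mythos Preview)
Your proposal is essentially the same as the paper's proof: both reduce to quickest flow over time with unit capacities and demand $\Delta d$, establish $T^\ast \le \OPT+\Delta$ by turning an optimal convoy routing into a temporally repeated flow, and then extract a convoy routing from the path decomposition via exactly the formula $\sigma_i = 1 + \lfloor (T^\ast - \tau(P_i))/\Delta\rfloor$, discarding surplus trains. The only minor differences are that the paper cites~\cite{saho2017cancel} directly for strongly polynomial quickest flow (resolving your hedge about the parametric search), and it recomputes an optimal temporally repeated flow for horizon $T^\ast+1$ rather than $T^\ast$ to cleanly handle paths with $\tau(P_i)=T^\ast$; your claimed inequality $\OPT \le T^\ast$ is neither needed nor used.
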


\section{Reduction to MinMaxDP}
\label{sec:Reduction}
In this section, we examine the relationship between \ac{TMO} and \ac{MinMaxDP}.
It is easy to see that any $\alpha$-approximation for \ac{TMO} implies an $\alpha$-approximation for \ac{MinMaxDP}, by choosing $\Delta = \alpha \cdot \sum_{a\in A} \tau_a$.
Proving the converse is more involved. We use a bundle graph (see Figure~\ref{fig:prefix_gadget}) consisting of a sequence of parallel arcs as a gadget, attached to the source node for the \ac{MinMaxDP} instance, for our reduction.

\begin{figure}[thbp]
    \centering
    \begin{tikzpicture}[scale=1, every node/.style={draw,circle,inner sep=0pt,minimum size=10pt}]
        \node (s) at (0,0) {$s'$};
        \node (1) at (2,0) {};
        \node (2) at (4,0) {};
        \node (3) at (6,0) {};
        \node[draw=none] (.) at (8,0) {$\dots$};
        \node (t-) at (10,0) {};
        \node (t) at (12,0) {$s$};

        \path[->, draw] (s) edge[in=120, out=60] node[draw=none, pos=0.5,above=-1pt, font=\scriptsize]{$\Delta$} (1);
        \path[->, draw] (s) edge[in=150, out=30] node[draw=none, pos=0.5,above=-1pt, font=\scriptsize]{0} (1);
        \path[->] (s) edge[draw=none, in=200, out=-20] node[draw=none, pos=0.5,above=-1pt, font=\scriptsize]{$\dots$} (1);
        \path[->, draw] (s) edge[in=240, out=-60] node[draw=none, pos=0.5,above=-1pt, font=\scriptsize]{0} (1);

        \path[->, draw] (1) edge[in=120, out=60] node[draw=none, pos=0.5,above=-1pt, font=\scriptsize]{$\Delta$} (2);
        \path[->, draw] (1) edge[in=150, out=30] node[draw=none, pos=0.5,above=-1pt, font=\scriptsize]{0} (2);
        \path[->] (1) edge[draw=none, in=200, out=-20] node[draw=none, pos=0.5,above=-1pt, font=\scriptsize]{$\dots$} (2);
        \path[->, draw] (1) edge[in=240, out=-60] node[draw=none, pos=0.5,above=-1pt, font=\scriptsize]{0} (2);
        
        \path[->, draw] (2) edge[in=120, out=60] node[draw=none, pos=0.5,above=-1pt, font=\scriptsize]{$\Delta$} (3);
        \path[->, draw] (2) edge[in=150, out=30] node[draw=none, pos=0.5,above=-1pt, font=\scriptsize]{0} (3);
        \path[->] (2) edge[draw=none, in=200, out=-20] node[draw=none, pos=0.5,above=-1pt, font=\scriptsize]{$\dots$} (3);
        \path[->, draw] (2) edge[in=240, out=-60] node[draw=none, pos=0.5,above=-1pt, font=\scriptsize]{0} (3);

        \path[->, draw] (t-) edge[in=120, out=60] node[draw=none, pos=0.5,above=-1pt, font=\scriptsize]{$\Delta$} (t);
        \path[->, draw] (t-) edge[in=150, out=30] node[draw=none, pos=0.5,above=-1pt, font=\scriptsize]{0} (t);
        \path[->] (t-) edge[draw=none, in=200, out=-20] node[draw=none, pos=0.5,above=-1pt, font=\scriptsize]{$\dots$} (t);
        \path[->, draw] (t-) edge[in=240, out=-60] node[draw=none, pos=0.5,above=-1pt, font=\scriptsize]{0} (t);
        
    \end{tikzpicture}
    \caption{The gadget $G_k$ for solving \ac{TMO} via \ac{MinMaxDP}. The gadget consists of a sequence of $(d-k)$ bundles. Each bundle contains an arc with travel time $\Delta$ and $k-1$ arcs with travel time 0.
    }
    \label{fig:prefix_gadget}
\end{figure}
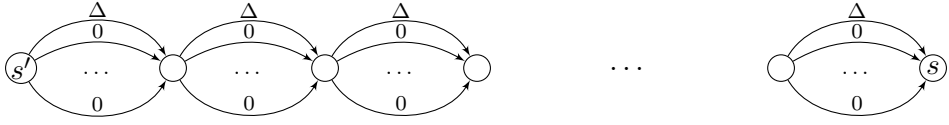

\begin{restatable}{theorem}{BlackBoxReduction}
    \label{thm:BlackBoxReduction}
    Let $\mathcal{A}$ be an $\alpha$-approximation algorithm for $\ac{MinMaxDP}$, and let $\mathcal{R}(m)$ denote its runtime dependent on $m \coloneqq |A|$.
    There exists a $\max\{\alpha, (1+\epsilon)\}$-approximation for \ac{TMO} with runtime in $\mathcal{O}\left(m \cdot \mathcal{R}\left(m^2(2+\frac{1}{\epsilon})\right)\right)$.
\end{restatable}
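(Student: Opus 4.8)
The plan is to branch on the number of trains~$d$. When $d$ is \emph{small} — say $d \le m(1+\tfrac{1}{\epsilon})$ — I would turn the bundle gadget $G_k$ of \Cref{fig:prefix_gadget} into an exact, size-efficient reduction by running $\mathcal{A}$ once for every possible number of convoys~$k$. When $d$ is \emph{large}, I would instead invoke the additive algorithm of \Cref{thm:FoT-APX}, observing that its additive error $\Delta$ is then negligible compared to $\OPT$, so it becomes a $(1+\epsilon)$-approximation. Since the algorithm knows $d$, it knows which branch to take, and I claim the convoy routing it returns has makespan at most $\max\{\alpha,1+\epsilon\}\cdot\OPT$ in both cases.

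For the small-$d$ branch, for each $k$ with $1 \le k \le \min(d,m)$ for which $D$ contains $k$ arc-disjoint $s$-$t$ paths, form $D_k$ by identifying the right terminal of $G_k$ with $s$ and treating its left terminal $s'$ as a new source (with the arc lengths from $G_k$ and $D$), and run $\mathcal{A}$ on this \ac{MinMaxDP} instance with source $s'$, sink $t$, and parameter $k$; let $\OPT_k$ be its optimal value. The structural key is that the $d-k$ bundles of $G_k$ lie in series and each contains exactly $k$ parallel arcs, so any family of $k$ arc-disjoint $s'$-$t$ paths induces in every bundle a bijection between paths and arcs; hence in every bundle exactly one path uses the length-$\Delta$ arc. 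If path $i$ uses the $\Delta$-arc in $n_i$ bundles, then $\sum_{i=1}^k n_i = d-k$, and the length of path $i$ in $D_k$ equals $\tau(P_i) + n_i\Delta$, where $P_i$ is its portion in $D$. Setting $\sigma_i := n_i+1$ gives $\sum_i \sigma_i = d$ and a convoy routing $(\bP,\bsigma)$ whose makespan $\max_i\{\tau(P_i)+(\sigma_i-1)\Delta\}$ equals the \ac{MinMaxDP} objective of the path family; conversely, every convoy routing with $k$ paths embeds into $D_k$ the same way (split the $d-k=\sum_i(\sigma_i-1)$ bundles into blocks of sizes $\sigma_i-1$, letting path $i$ take the $\Delta$-arc on its block and the remaining $k-1$ paths take the $k-1$ zero-arcs of each bundle). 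Thus $\OPT_k \ge \OPT$ for every such $k$, and by the convoy theorem (\Cref{thm:ConvoyConjecture}) an optimal \ac{TMO} solution is a convoy routing with some $k^{\star}\le\min(d,m)$ paths, so $\OPT_{k^{\star}}=\OPT$. Running $\mathcal{A}$ on all these $D_k$ and returning the best of the induced convoy routings therefore yields makespan at most $\alpha\cdot\OPT_{k^{\star}} = \alpha\cdot\OPT$.

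For the large-$d$ branch, suppose $d > m(1+\tfrac{1}{\epsilon})$ and run the algorithm of \Cref{thm:FoT-APX}. Since $D$ has at most $m$ pairwise arc-disjoint $s$-$t$ paths (each uses at least one of the $m$ arcs), any convoy routing with $k$ paths has a path carrying at least $\lceil d/k\rceil \ge \lceil d/m\rceil$ trains, so $\OPT \ge (\lceil d/m\rceil - 1)\Delta$; the assumption forces $\lceil d/m\rceil - 1 > \tfrac{1}{\epsilon}$, hence $\Delta < \epsilon\cdot\OPT$, and the returned convoy routing has makespan at most $\OPT + \Delta < (1+\epsilon)\cdot\OPT$.

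Finally, for the running time: in the small-$d$ branch I make at most $\min(d,m)\le m$ calls to $\mathcal{A}$, each on $D_k$, which has $(d-k)k + m$ arcs; since $d \le m(1+\tfrac{1}{\epsilon})$ and $k \le m$ this is at most $m^2(1+\tfrac{1}{\epsilon}) + m \le m^2(2+\tfrac{1}{\epsilon})$, and building the gadgets and decoding the outputs is polynomial, giving total time $\mathcal{O}\!\big(m\cdot\mathcal{R}(m^2(2+\tfrac{1}{\epsilon}))\big)$; the large-$d$ branch merely runs the strongly polynomial algorithm of \Cref{thm:FoT-APX} and is absorbed into this bound. The one genuine obstacle is that $G_k$ has $\Theta((d-k)k)$ arcs and hence cannot be written down when $d$ is exponential in the input size; the dichotomy is engineered to cut in exactly where this happens, with the threshold $m(1+\tfrac{1}{\epsilon})$ being the smallest value that simultaneously keeps every $G_k$ of size $O(m^2/\epsilon)$ and forces $\OPT > \Delta/\epsilon$. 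Minor care is still needed with the ceilings in the pigeonhole estimate, with values of $k$ for which $D_k$ is infeasible (detected by a max-flow computation), and with the degenerate case $k=d$, where $G_k$ has no bundles and $D_k = D$.
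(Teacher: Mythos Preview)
Your proposal is correct and follows essentially the same approach as the paper: the same dichotomy on $d$ at the threshold $m(1+\tfrac{1}{\epsilon})$, the same gadget $G_k$ with the same bijection/embedding argument (which the paper isolates as a separate lemma), the same pigeonhole bound forcing $\Delta<\epsilon\cdot\OPT$ in the large-$d$ branch, and the identical arc-count computation for the runtime. Your write-up is in fact slightly more careful than the paper's about feasibility of $D_k$, the degenerate case $k=d$, and the ceiling in the pigeonhole estimate.
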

\begin{proof}[Proof (sketch)]
We defer a formal proof to \Cref{app:Reduction} and restrict to its intuition for now.
First, consider the case $d > m (1+\frac{1}{\epsilon})$, \ie when $d$ is large. Then, $\OPT$ is much greater than $\Delta$ because some arcs are used by more than $\lceil\frac{1}{\varepsilon}\rceil$ trains, and hence the flow-based approximation from Section~\ref{sec:FlowAlgo} yields a $(1+\epsilon)$-approximate solution.

Otherwise, when $d \leq m(1+\frac{1}{\epsilon})$, we utilise the given $\alpha$-approximation $\mathcal{A}$ for \ac{MinMaxDP}.
We guess the number $k \in [m]$ of disjoint paths in an optimal convoy routing. 
For that $k$, we construct an auxiliary digraph $D_k'$ by adding the gadget $G_k$ described in \Cref{fig:prefix_gadget} prior to the source node $s$ of the given digraph $D$.
It is easy to see that the minimum makespan for \ac{MinMaxDP} on $D'_k$ (when $k$ was guessed correctly) and the minimum makespan for the original \ac{TMO}-instance coincide. Hence, $\mathcal{A}$'s approximation guarantee carries over to \ac{TMO}. 
Note that the above case distinction is necessary to bound the size of $D_k'$ polynomially.
\end{proof}

\begin{remark}
    Let $\mathcal{C}$ be a class of digraphs that is closed under adding a bundle graph prior to their source $s$.
    If $\mathcal{A}$ is an $\alpha$-approximation algorithm for \ac{MinMaxDP} on graph class $\mathcal{C}$, then \Cref{thm:BlackBoxReduction} yields an approximation algorithm for \ac{TMO} on graph class $\mathcal{C}$.
\end{remark}

Bundle graphs and \ac{SePa}-graphs are examples of such closed classes.
\Cref{thm:BlackBoxReduction} in conjunction with Theorem~1 from~\cite{SchedulingBagConstWiese} yields the following stronger result for bundle graphs.

\begin{corollary}
There is a PTAS for \ac{TMO} on bundle graphs.
\end{corollary}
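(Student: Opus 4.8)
The plan is to instantiate the black-box reduction of \Cref{thm:BlackBoxReduction} with the PTAS for \ac{MinMaxDP} on bundle graphs due to Das and Wiese~\cite{SchedulingBagConstWiese} as the subroutine $\mathcal{A}$, and to verify that the reduction never leaves the class of bundle graphs. First I would recall that \ac{MinMaxDP} on a bundle graph is exactly the \emph{complete} case of \ac{MLBA}, so that \cite[Theorem~1]{SchedulingBagConstWiese} provides, for every fixed $\epsilon' > 0$, a polynomial-time $(1+\epsilon')$-approximation; since \ac{MinMaxDP} is defined with travel times in $\mathbb{Z}_{\geq 0}$, the zero-length arcs that will appear cause no difficulty.

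Next I would check the closure hypothesis in the remark following \Cref{thm:BlackBoxReduction}: the gadget $G_k$ of \Cref{fig:prefix_gadget} is itself a bundle graph, namely a sequence of $d-k$ parallel-arc bundles, and prepending it to the source $s$ of a bundle graph $D$ merely concatenates the two sequences of bundles. Hence the auxiliary digraph $D_k'$ built in the proof of \Cref{thm:BlackBoxReduction} is again a bundle graph, so $\mathcal{A}$ is applicable to every instance the reduction produces.

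It then remains to combine the guarantees. Applying \Cref{thm:BlackBoxReduction} with $\alpha = 1+\epsilon'$ turns the bundle-graph PTAS into a $\max\{1+\epsilon',\,1+\epsilon\}$-approximation for \ac{TMO} on bundle graphs, with runtime $\mathcal{O}\bigl(m \cdot \mathcal{R}(m^2(2+\tfrac{1}{\epsilon}))\bigr)$, where $\mathcal{R}$ is the polynomial runtime of the Das--Wiese PTAS. Setting $\epsilon' = \epsilon$ yields a $(1+\epsilon)$-approximation running in time polynomial in the input size for every fixed $\epsilon > 0$, i.e.\ the claimed PTAS.

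The only genuinely nontrivial point I would spell out is this closure-and-size argument: one must confirm both that inserting the headway-encoding gadget keeps us among bundle graphs and that the gadget's parameters (the $d-k$ bundles, each with $k-1$ zero arcs and one $\Delta$-arc) stay polynomially bounded in the size of $D_k'$. The latter is precisely why the case distinction $d \le m(1+\tfrac{1}{\epsilon})$ inside \Cref{thm:BlackBoxReduction} is needed, so that $\mathcal{R}$ is evaluated on a polynomially sized argument; everything else is a direct substitution into \Cref{thm:BlackBoxReduction}.
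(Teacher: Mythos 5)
Your proposal is correct and matches the paper's argument: the paper derives the corollary directly by combining \Cref{thm:BlackBoxReduction} with the Das--Wiese PTAS, relying on the remark that bundle graphs are closed under prepending the gadget $G_k$. Your additional care in checking that $G_k$ is itself a bundle graph, that zero-length arcs are unproblematic, and that the case split on $d$ keeps the argument to $\mathcal{R}$ polynomially sized simply spells out what the paper leaves implicit.
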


\section{Approximation algorithm for \ac{MinMaxDP} in series-parallel graphs}
\label{sec:Greedy}

In this section, we describe and analyse an algorithm for approximating \ac{MinMaxDP} in series-parallel graphs.
Before we state our main result, we first provide a formal definition of series-parallel graphs and the parameter $\phi(D)$ that is part of the approximation guarantee.
\medskip

\noindent
\textbf{Series-parallel graphs.}
A \emph{series-parallel digraph} $D=(V,A)$ (\emph{\ac{SePa}-graph}, for short) has a source $s$ and a sink $t$ and consists of either a single arc or can be obtained by a series or parallel composition of two series-parallel subgraphs $D'$ and $D''$.
The series composition $D = D' \serc D''$ merges the sink of $D'$ with the source of $D''$, such that the source of $D$ is the source of $D'$ and the sink of $D$ is the sink of $D''$.
The parallel composition $D = D' \parc D''$ merges the two sources of $D'$ and $D''$ to a single source of $D$ and merges the sinks of $D'$ and $D''$ to a single sink of $D$. Note that it is possible to have parallel arcs between two nodes.
\medskip

\noindent
\textbf{The parameter $\boldsymbol{\phi(D)}$.} 
Any \ac{SePa}-graph can be described by a \emph{binary decomposition tree}, where the leaves of the tree correspond to the arcs and the internal vertices correspond to either the series (labeled \encircle{$S$}) 
or the parallel (labeled \encircle{$P$}) composition of their children; see  \Cref{fig:sepa-decomposition-tree} for an example.
We can split the tree into arc-, $S$- or $P$-components. Such a component is a maximal connected subset of vertices in the binary decomposition tree with the same label.
The \emph{root} of a component $C$ is the (unique) vertex $v \in C$ that is closest to the root of the binary decomposition tree.
Moreover, we call the vertices that are not in the component $C$ but whose parents are, the children of the component $C$.
We let $\phi(D)$ denote the maximum number of traversed $S$-components on a path from the root to a leaf. Note that, even if the binary decomposition tree of a \ac{SePa}-graph is not unique, the number $\phi(D)$ is (see \Cref{app:phi_unique}).
Throughout this section, we assume that we are given a fixed binary decomposition tree of the input graph $D$ (such a tree can be constructed in linear time~\cite{Bodlaender}).

\begin{figure}[tbhp]
    \begin{minipage}{0.56\textwidth}
    \centering
    \begin{tikzpicture}[scale=0.9, transform shape, every node/.style={draw,circle,inner sep=0pt,minimum size=10pt}]
        \node (s) at (0,0) {$s$};
        \node (1) at (2,0) {};
        \node (2) at (3,0) {};
        \node (3) at (4,0) {};
        \node (4) at (6,0) {};
        \node (5) at (7,0) {};
        \node (6) at (8,0) {};

        \path[->, draw] (s) edge[bend left = 35] node[draw=none, rectangle, pos=0.5,above=-1pt, font=\scriptsize]{$a_1$} (1);
        \path[->, draw] (s) edge node[draw=none, rectangle, pos=0.5,above=-1pt, font=\scriptsize]{$a_2$} (1);
        \path[->, draw] (s) edge[bend right = 35] node[draw=none, rectangle, pos=0.5,above=-1pt, font=\scriptsize]{$a_3$} (1);

        \path[->, draw] (1) edge node[draw=none, rectangle, pos=0.5,above=-1pt, font=\scriptsize]{$a_4$} (2);

        \path[->, draw] (2) edge[bend left = 25] node[draw=none, rectangle, pos=0.5,above=-1pt, font=\scriptsize]{$a_5$} (3);
        \path[->, draw] (2) edge[bend right = 25] node[draw=none, rectangle, pos=0.5,above=-1pt, font=\scriptsize]{$a_6$} (3);
        
        \path[->, draw] (1) edge[bend right = 45] node[draw=none, rectangle, pos=0.5,above=-1pt, font=\scriptsize]{$a_7$} (3);

        \path[->, draw] (3) edge[bend left = 35] node[draw=none, rectangle, pos=0.5,above=-1pt, font=\scriptsize]{$a_8$} (4);
        \path[->, draw] (3) edge node[draw=none, rectangle, pos=0.5,above=-1pt, font=\scriptsize]{$a_9$} (4);
        \path[->, draw] (3) edge[bend right = 35] node[draw=none, rectangle, pos=0.5,above=-1pt, font=\scriptsize]{$a_{10}$} (4);
        
        \path[->,draw] (4) edge node[draw=none, pos=0.5,above=-1pt, font=\scriptsize]{$a_{11}$} (5);

        \path[->, draw] (5) edge[bend left = 25] node[draw=none, rectangle, pos=0.5,above=-1pt, font=\scriptsize]{$a_{12}$} (6);
        \path[->, draw] (5) edge[bend right = 25] node[draw=none, rectangle, pos=0.5,above=-1pt, font=\scriptsize]{$a_{13}$} (6);

        \path[->,draw] (s) edge[in=220, out=-40] node[draw=none, pos=0.5,above=-1pt, font=\scriptsize]{$a_{14}$} (6);
    \end{tikzpicture}
    \end{minipage}
    \hspace{-8ex} \hfill
    \begin{minipage}{0.5\textwidth}
    \centering
    \begin{tikzpicture}[scale=0.8, transform shape, every node/.style={draw,circle,inner sep=0pt,minimum size=15pt, font=\scriptsize{#1}}]
        
        \draw[thick, rwthblue, rounded corners=5pt, fill=rwthlightblue!20] (6,5)--(7.5,5.5)--(7.5,6.5)--(6.8,6.5)--(5,5.6)--(3,4.3)--(3,3.5)--(3.8,3.5)--(5,4.5)--(6.8,3.5)--(7.5,3.5)--(7.5,4.5)--cycle;
    
        \node (1) at (1,0) {$a_1$};
        \node (2) at (2,0) {$a_2$};
        \node (p1) at (1.5,1) {$P$};
        \node (3) at (2.5,1) {$a_3$};
        \node[fill=rwthyellow] (p2) at (2,2) {$P$};
        
        \node (4) at (3.5,1) {$a_4$};
        \node (5) at (4,0) {$a_5$};
        \node (6) at (5,0) {$a_6$};
        \node (p3) at (4.5,1) {$P$};
        \node (s1) at (4,2) {$S$};
        \node (7) at (5,2) {$a_7$};
        \node[fill=rwthyellow] (p4) at (4.5,3) {$P$};

        \node (s2) at (3.5,4) {$S$};
        \node (s3) at (5,5) {$S$};
        \node[fill=rwthlightblue] (s4) at (7,6) {$S$};

        \node (s2a) at (7,4) {$S$};
        \node[fill=rwthyellow] (11) at (7.5,3) {$a_{11}$};
        \node[fill=rwthyellow] (p6) at (6.5,3) {$P$};
        \node (p5) at (6,2) {$P$};
        \node (8) at (5.5,1) {$a_8$};
        \node (9) at (6.5,1) {$a_9$};
        \node (10) at (7,2) {$a_{10}$};

        \node (p7) at (8,7) {$P$};
        \node[fill=rwthyellow] (p70) at (8.5,5) {$P$};
        \node (12) at (8,4) {$a_{12}$};
        \node (13) at (9,4) {$a_{13}$};
        \node (14) at (9,6) {$a_{14}$};

        \draw (1) -- (p1);
        \draw (2) -- (p1);
        \draw (p1) -- (p2);
        \draw (3) -- (p2);
        
        \draw (4) -- (s1);
        \draw (p3) -- (s1);
        \draw (5) -- (p3);
        \draw (6) -- (p3);
        \draw (7) -- (p4);
        \draw (s1) -- (p4);
        
        \draw (8) -- (p5);
        \draw (9) -- (p5);
        \draw (p5) -- (p6);
        \draw (10) -- (p6);
        \draw (p6) -- (s2a);
        \draw (11) -- (s2a);
        \draw (s2a) -- (s3);
        
        \draw (p4) -- (s2);
        \draw (p2) -- (s2);
        \draw (s2) -- (s3);
        \draw (12) -- (p70);
        \draw (13) -- (p70);
        \draw (14) -- (p7);
        \draw (s3) -- (s4);
        \draw (p70) -- (s4);

        \draw[rwthorange, ultra thick] (5)--(p3)--(s1)--(p4)--(s2)--(s3)--(s4)--(p7);

        \node[thick, rwthblue, draw=none, rectangle] at (3.7,6) {\normalsize{$S$-components}};
        \draw[thick, rwthblue] (s1) circle (12pt);
        
        \draw[thick, rwthgreen,dashed] (p3) circle (12pt);
        \draw[thick, rwthgreen,dashed] (p7) circle (12pt);
        \draw[thick, rwthgreen,dashed] (p4) circle (12pt);
        \draw[thick, rwthgreen,dashed] (p70) circle (12pt);
        \draw[thick, rwthgreen,dashed, rounded corners=11pt,rotate around={-27:(1.75,1.5)}] (2.2, 2.5) rectangle (1.3, 0.5);
        \draw[thick, rwthgreen,dashed, rounded corners=11pt,rotate around={-27:(5.75,2.5)}] (5.8, 3.8) rectangle (6.7, 1.7);
        \node[thick, rwthgreen,dashed, draw=none, rectangle] at (5.8,7.2) {\normalsize{$P$-components}};
    \end{tikzpicture}
    \end{minipage}
    \caption{An example of a \ac{SePa}-graph on the left. A corresponding series-parallel decomposition tree is shown on the right. 
    Consider the large $S$-component (shaded in light blue), its root is highlighted blue and its children are highlighted yellow.
    Along the orange (bold) path (from the root to a leaf),
    we traverse two $S$-components. Also every other path traverses at most two, so $\phi(D) = 2$.}
    \label{fig:sepa-decomposition-tree}
\end{figure}

\medskip

\noindent
{\textbf{Main result and overview.}}
We can now formally state our main result for this section. Note that $H_k$ denotes the $k$-th harmonic number, \ie $H_k = \sum_{j=1}^k\frac{1}{j} \leq \log_2(k)+1$.
\begin{restatable}{theorem}{DPepsruntime}
    \label{lem:DPepsruntime}
    For every $\epsilon >0$ there is an algorithm that computes a $(\min {\{H_k, \phi(D)+1\})} \cdot (1+\epsilon)$-approximate path profile for \ac{MinMaxDP} and runs in time polynomial in $m$, $k$, and~$\frac{1}{\varepsilon}$.
\end{restatable}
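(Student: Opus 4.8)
The plan is to compute, bottom-up along a fixed binary decomposition tree, a single \emph{profile} at every node and every number of paths, and then to bound the quality of the root profile in two different ways. For a node $v$ of the tree let $D_v$ be the associated series-parallel subgraph with terminals $s_v,t_v$, and for $j\in[k]$ let $\OPT_v(j)$ be the optimal min-max value of $j$ arc-disjoint $s_v$--$t_v$ paths in $D_v$ (undefined if $D_v$ does not admit $j$ such paths, which a flow computation detects). The algorithm maintains, for each $v$ and each admissible $j$, a non-decreasing vector $\pi_v(j)=(\ell_1\le\dots\le\ell_j)$ realized by $j$ concrete arc-disjoint paths in $D_v$ (the paths are tracked alongside). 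For a leaf arc $a$ set $\pi(1)=(\tau_a)$. For a series node $v=v'\serc v''$ set $\pi_v(j)$ to the sorted vector of the $j$ pairwise sums obtained by \emph{reverse pairing} $\pi_{v'}(j)$ with $\pi_{v''}(j)$, adding the $i$-th largest entry of one child to the $i$-th smallest of the other; this is feasible since concatenating arc-disjoint paths on each side keeps them arc-disjoint. For a parallel node $v=v'\parc v''$ range over all splits $j=j'+j''$ with $\pi_{v'}(j')$ and $\pi_{v''}(j'')$ both defined, take the multiset union, and keep the split whose union has the smallest maximum entry. ($P$- and $S$-components with more than two children are handled by iterating the binary rule along the component.) The output is $\pi_{\mathrm{root}}(k)$ with its realizing path profile. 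Keeping one profile per $(\text{node},j)$ instead of a Pareto frontier is what makes this ``greedy''; there are $O(mk)$ entries and each is built from its children in $\mathrm{poly}(k)$ time. The factor $(1+\epsilon)$ and the claimed running time come from a standard discretization: binary-search a guess $\hat T$ of $\OPT$ and round all travel times to multiples of $\epsilon\hat T/k$ before running the procedure, losing at most an extra $(1+\epsilon)$.

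Two structural facts drive the analysis. First, a monotonicity observation: if $w$ is a descendant of $v$, an optimal $j$-path solution in $D_v$ restricts to a feasible (shorter) solution inside $D_w$, so $\OPT_w(i)\le\OPT_v(j)$ for the relevant $i\le j$. Second, \emph{parallel compositions are for free}: if each child's profile is a $\beta$-approximation in the sense that $\max\pi_{w}(i)\le\beta\cdot\OPT_w(i)$ for all admissible $i$, then so is the parent's, because an optimal $j$-path solution of $D_v=D_{v'}\parc D_{v''}$ induces a split $j=j'^{*}+j''^{*}$ with $\OPT_{v'}(j'^{*}),\OPT_{v''}(j''^{*})\le\OPT_v(j)$, and the split the algorithm keeps is at least as good for the maximum. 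Hence all the loss is incurred at series compositions, and the task reduces to controlling what reverse pairing does to the approximation ratio as one passes through an $S$-component $Q_1\serc\dots\serc Q_r$ whose children already carry good profiles.

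From here the two analyses diverge, both by induction over the tree. For the $\phi(D)+1$ bound I would show that one pass through an $S$-component worsens the multiplicative gap by only an additive $1$: reverse pairing always couples a long partial path with the \emph{shortest} path available on the next child, whose length is at most $\OPT$, so the new maximum exceeds the larger of the two child maxima by at most one extra $\OPT$-length; because each child is consumed in shortest-first order, iterating across $Q_1,\dots,Q_r$ still costs only a single $+1$. Combined with ``parallel is free'' and the fact that $\phi$ increases by exactly one each time a root--leaf path enters a new $S$-component, induction gives $\max\pi_{\mathrm{root}}(k)\le(\phi(D)+1)\cdot\OPT$; specializing to bundle graphs (all children optimal, $\phi(D)=1$) recovers Hsu's $2$-approximation. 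For the $H_k$ bound the inductive hypothesis must be strengthened from a statement about the maximum entry to a \emph{balance condition} on the whole sorted profile — roughly, that the $i$-th smallest entry is at most $\OPT$ times a harmonic-type quantity that shrinks as $i$ decreases — engineered so that it holds at leaves, is preserved by the parallel rule, is preserved by reverse pairing (exploiting that reverse pairing pits a large entry against a small one, so the worst pairwise sum is governed by the average of the child profiles rather than their maxima), and at the root reads $\max\pi_{\mathrm{root}}(k)\le H_k\cdot\OPT$. Taking the better of the two guarantees yields the theorem.

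The main obstacle is isolating the right balance condition for the $H_k$ analysis. The naive hypothesis ``the maximum entry is within $\alpha\cdot\OPT$'' is not inductive: reverse pairing two profiles that each only satisfy it can roughly double the gap, which would give a bound like $2^{\phi(D)}$ rather than anything harmonic. The fix must record how the per-entry bound varies along the sorted profile and use the averaging effect of reverse pairing; obtaining a condition that simultaneously survives reverse pairing, survives the parallel merge, and still telescopes to exactly $H_k$ at the root is the delicate step. For the $\phi(D)+1$ analysis the analogous difficulty is making the ``$+1$ per $S$-component'' accounting precise when the children are themselves only approximately optimal rather than exact bundles as in Hsu's setting. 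A secondary, more routine, obstacle is checking that the discretization step preserves feasibility and loses only $(1+\epsilon)$.
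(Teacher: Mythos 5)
Your bottom--up architecture (a table indexed by tree node and number of paths, filled by reverse pairing for series and best split for parallel, then analysed twice) has the right shape, and you correctly identify reverse pairing, the ``parallel is free'' observation, the ``$+1$ per $S$-component'' accounting, and the need for a balance condition. However, there is a genuine gap: you keep exactly one profile per pair $(\text{node}, j)$, chosen to minimize the maximum entry. The paper's DP carries a third coordinate, the total length $\theta'$ of the profile inside the subgraph $D'$, and ranges over all achievable values of it. That extra coordinate is not bookkeeping --- it is what lets the paper maintain the invariant it calls \emph{consistency}, $\cost(\bP, D') \leq \cost(\bP^*, D')$, by following the table entries whose total cost matches the optimal profile's restriction, and deferring the minimum-makespan comparison to the root. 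Both inductive arguments you sketch implicitly need it.

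Concretely, your $\phi(D)+1$ argument hinges on the claim that the shortest entry available on a child is at most $\OPT$. The paper proves this as $p_{k'} \leq \frac{1}{k'}\cost(\bP, D') \leq \frac{1}{k'}\cost(\bP^*, D') \leq \makespan(\bP^*, D)$, and the middle inequality is precisely consistency. A profile selected solely to minimize its maximum entry can have strictly larger total cost than the optimal restriction, so this chain breaks. Likewise, your $H_k$ argument is openly incomplete --- you describe the \emph{shape} of the balance condition but not the condition itself --- and the one the paper uses, $\frac{1}{i}\sum_{j\le i} p_j - p_{i+1} \leq \makespan(\bP^*, D')$ for all $i$, is only shown to survive series reverse pairing \emph{together with} consistency (Lemma~\ref{lem:H_k-combination}). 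The paper's Example after Figure~\ref{fig:average_needed} is designed exactly to rule out your simplification: always minimizing local makespan and always minimizing local total cost can each be off by $\Theta(k)$, so no single per-cell criterion suffices; one must range over all total costs. Adding the $\theta'$ index, restoring consistency, and then supplying the balancedness inequality above (and showing it telescopes to $H_k$, as in Lemma~\ref{lem:H_k-approx}) would close the gap and recover the paper's proof; the remaining rounding step you describe is indeed routine and matches the paper's.
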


A central component of our algorithm is a \emph{greedy composition} procedure that inductively combines path profiles in subgraphs based on the binary decomposition tree.
In \Cref{sec:greedy-composition}, we present this greedy composition procedure and several invariants maintained by the procedure that are crucial to our analysis.
In particular, we show that if the invariants are fulfilled by the final solution, they imply the desired approximation ratio.

A remaining challenge is that it is not obvious how to initialise the procedure with solutions that satisfy the invariants at the beginning.
In particular, some of the invariants are defined with respect to certain properties of the optimal solution and are thus not easy to verify.
In \Cref{sec:DP}, we show how to solve this issue by devising two dynamic programs, each optimising for one of the two terms whose minimum defines the approximation guarantee.
Running these two DPs concurrently then yields the guarantee of the theorem (where $\varepsilon$ is an error incurred by rounding so that the DPs run in polynomial time).

\subsection{Greedy composition procedure}
\label{sec:greedy-composition}

We use the following notation.
We write $D' \sqsubseteq D$ for a series-parallel subgraph $D'$ which corresponds to one of the vertices of the decomposition tree of $D$, \ie the decomposition tree of $D'$ is the subtree rooted at that vertex.
Now consider any such subgraph $D' \sqsubseteq D$ with arc set $A'$, source $s'$, and sink $t'$.
If an $s$-$t$-path $P$ in a path profile $\bP$ of $D$ uses an arc in $A'$, the series-parallel structure of $D$ implies that $P \cap A'$ is an $(s',t')$-path in $D'$.
We denote the length of a path $P$ in $D'$ by $\tau(P,D') \coloneqq \sum_{a \in P \cap A'} \tau_a$, and the length of the longest subpath of $\bP$ in $D'$ by $\makespan(\bP,D') \coloneqq \max_{P \in \bP} \tau(P,D')$.
Moreover, we call $\cost(\bP, D')$ the total length of arcs of path profile $\bP$ in component $D'$, that is $\cost(\bP, D')\coloneqq \sum_{P \in \bP} \tau(P,D')$.
\medskip

\noindent
\textbf{Greedy composition.} 
Let $D'$ and $D''$ be two \ac{SePa}-graphs, and let $\bP'$ and $\bP''$ be path profiles of $D'$ and $D''$ consisting of $k'$ and $k''$ arc-disjoint paths, respectively. 
For a series composition $D=D' \serc D''$, if $k'=k''$, the \emph{greedy composition} of $\bP$ and $\bP'$ is the path profile $\bP = \bP' \serc \bP''$ of $D$ that is derived by combining the longest path in $\bP'$ with the shortest path in $\bP''$, the second-longest path in $\bP'$ with the second-shortest path in $\bP''$, and so on. Note that this construction is only defined for $k'=k''.$
For a parallel composition $D=D' \parc D''$, the greedy composition is given by $\bP = \bP' \parc \bP''$ of $D$, where $\bP$ simply consists of the union of the $k'+k''$ paths in $\bP'$ and $\bP''$.
\medskip

We now discuss three properties that are maintained by the greedy composition procedure described above if they are fulfilled by the path profiles that are being combined.
To define these properties, we fix an arbitrary optimal path profile $\bP^*$ in the graph $D$ that defines our \ac{MinMaxDP} instance.
The first property we consider is consistency.
\medskip

\noindent\textbf{Consistency.}
    We call a path-profile $\bP$ \emph{consistent} with $\bP^*$ in $D' \sqsubseteq D$  if $\bP$ and $\bP^*$ use the same number of paths in $D'$ and if
    \begin{equation}
        \label{eq:inv1}
        \cost(\bP, D') \leq \cost(\bP^*, D').
    \end{equation}
Note that consistency is maintained by greedy composition, as the arcs used by the disjoint paths in the composed profile are exactly the union of the paths in the two profiles that are being combined.
We now define the second property, balancedness, and show that together with consistency it implies the first of our two approximation bounds.
\medskip

\noindent\textbf{Balancedness.}
    Let $\bP$ be a path profile, and let $D'\sqsubseteq D$ be traversed by $k'\le k$ paths from $\bP$. Let $p_1\ge \ldots \ge p_{k'}$ be the lengths of those paths in $D'$. Then $\bP$ is \emph{balanced} in $D'$ if
     \begin{equation}
         \label{eq:inv2}
        \frac{1}{i} \left(\sum_{j=1}^i p_j \right) - p_{i+1} \leq \makespan(\bP^*,D') \qquad \text{ for all } i \in [k'-1].
    \end{equation}
We first establish that balancedness---when combined with consistency---is indeed maintained by greedy composition.
While this is straightforward for parallel composition, the proof for series composition requires a more involved analysis that carefully combines \eqref{eq:inv1} and \eqref{eq:inv2}. The details of this analysis are given in \Cref{app:lem:H_k-combination}.

\begin{restatable}{lemma}{HkCombination}
    \label{lem:H_k-combination}
    Let $\comp \in \{\serc, \parc\}$ and let $D^{(1)} \comp D^{(2)}=D'$.
    If $\bQ$ and $\bR$ are consistent and balanced path profiles in $D^{(1)}$ and $D^{(2)}$, respectively, then the greedy profile $\bP = \bQ \comp \bR$ is a consistent and balanced path profile in $D'$.
\end{restatable}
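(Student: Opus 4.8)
The plan is to verify the two invariants separately, and for each to treat the parallel and the series composition on their own. \textbf{Consistency \eqref{eq:inv1}} is the routine half: in either composition the arcs of a composed disjoint path are exactly the disjoint union of the arcs of the two subpaths it is assembled from, so $\cost(\bP,D') = \cost(\bQ,D^{(1)}) + \cost(\bR,D^{(2)}) \le \cost(\bP^*,D^{(1)}) + \cost(\bP^*,D^{(2)}) = \cost(\bP^*,D')$, and the number of paths of $\bP$ in $D'$ agrees with that of $\bQ$ in $D^{(1)}$ (resp. $\bR$ in $D^{(2)}$), which by hypothesis matches $\bP^*$; for a series composition one additionally uses that every $s$-$t$-path crosses the merged node, so these three path counts coincide.

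\textbf{Balancedness, parallel case.} Here every path of $\bP^*$ lies entirely in $D^{(1)}$ or entirely in $D^{(2)}$, hence $\makespan(\bP^*,D') = \max\{\makespan(\bP^*,D^{(1)}),\makespan(\bP^*,D^{(2)})\}$, and the sorted length vector of $\bP$ in $D'$ is the sorted merge of those of $\bQ$ and $\bR$. Fix $i$ and split the $i$ longest paths of $\bP$ into the $a$ inherited from $\bQ$ and the $b=i-a$ inherited from $\bR$. The $(i+1)$-st longest composed length $p_{i+1}$ is at least $\max\{q_{a+1},r_{b+1}\}$, reading an out-of-range $q_{a+1}$ or $r_{b+1}$ as $0$ (one of the two is always in range since $i\le k-1$). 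Applying balancedness of $\bQ$ when $a\le k'-1$, and consistency of $\bQ$ together with $\cost(\bP^*,D^{(1)}) \le k'\makespan(\bP^*,D^{(1)})$ when $a=k'$, gives $\sum_{j=1}^a q_j \le a\bigl(p_{i+1}+\makespan(\bP^*,D')\bigr)$; the symmetric bound holds for $\bR$. Adding the two and dividing by $i$ yields \eqref{eq:inv2} for $\bP$.

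\textbf{Balancedness, series case (the crux).} Now $D'=D^{(1)}\serc D^{(2)}$ with $k'=k''=:k$; write $q_1\ge\dots\ge q_k$ and $r_1\ge\dots\ge r_k$ for the lengths in $D^{(1)}$ and $D^{(2)}$, so greedy composition produces the (in general unsorted) lengths $\ell_j = q_j + r_{k+1-j}$ with sorted rearrangement $p_1\ge\dots\ge p_k$. The obstacle is exactly that the $\ell_j$ need not be monotone, so the $i$ longest composed paths correspond to an arbitrary set $T\subseteq[k]$ of "slots" rather than to a prefix of either $\bQ$ or $\bR$. The plan is, for a fixed $i$, to choose a reference slot $c\notin T$ and bound
\[
 \sum_{j=1}^i p_j - i\,p_{i+1} \ \le\ \sum_{j\in T}(\ell_j-\ell_c) \ =\ \sum_{j\in T}(q_j-q_c) \ +\ \sum_{j\in T}\bigl(r_{k+1-j}-r_{k+1-c}\bigr),
\]
then discard the slots of $T$ on the far side of $c$ (these contribute non-positively to the respective sum), leaving a "left" $q$-part and a "right" $r$-part. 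Choosing $c$ to be the smallest index not in $T$ collapses the $q$-part to an honest prefix $q_1,\dots,q_{c-1}$, to which balancedness of $\bQ$ applies cleanly; the price is a residual on the $r$-side, which has to be absorbed using consistency \eqref{eq:inv1} of $\bR$ — this is where the two invariants are combined — together with $\makespan(\bP^*,D^{(\nu)})\le\makespan(\bP^*,D')$ and $\cost(\bP^*,D^{(\nu)})\le k\,\makespan(\bP^*,D')$ for $\nu\in\{1,2\}$, so that the $q$- and $r$-contributions add up to exactly $i\cdot\makespan(\bP^*,D')$.

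I expect the main difficulty to lie in this final balancing act: the mirror choice of $c$ trades the residual to the $q$-side instead, so a short case distinction — according to which side the residual is cheaper to absorb on, essentially whether $c$ falls in the first or the second half of $[k]$ — together with careful index bookkeeping is what makes the two contributions sum to $i\cdot\makespan(\bP^*,D')$ rather than to a weaker bound; getting the arithmetic of the index shifts to close is the real content of the proof.
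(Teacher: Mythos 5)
Your consistency argument and your parallel-balancedness argument are both correct and track the paper's proof closely: in the parallel case you split the top~$i$ paths into $a$ from $\bQ$ and $b$ from $\bR$, use $p_{i+1}\ge q_{a+1}$ (resp.\ $r_{b+1}$) where defined, balancedness when $a<k_1$, and consistency plus $\cost(\bP^*,D^{(1)})\le k_1\makespan(\bP^*,D^{(1)})$ when $a=k_1$. That is exactly the paper's case split on $i_1<k_1$ vs.\ $i_1=k_1$.

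The series case, however, has a genuine gap. Your plan uses a \emph{single} reference slot $c$ (the smallest index not in $T$, or its mirror), writes $\sum_{j\le i}p_j - i\,p_{i+1}\le \sum_{j\in T}(\ell_j-\ell_c)$, discards the slots of $T$ on the wrong side of $c$, and hopes that balancedness of one profile on the surviving ``prefix'' plus consistency of the other will close the gap. This cannot work in general because, after discarding, the surviving part on the \emph{other} side of $c$ is a \emph{scattered} subset $T'\subseteq\{c+1,\dots,k\}$, and the corresponding $r$-indices $\{k+1-j : j\in T'\}$ form an arbitrary $|T'|$-subset of $\{1,\dots,k-c\}$, not a prefix. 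When you upper-bound that subset sum by the $|T'|$ largest $r$-values and apply balancedness of $\bR$ at index $|T'|$, you get $|T'|\,\makespan(\bP^*,D^{(2)})+|T'|\,\bigl(r_{|T'|+1}-r_{k+1-c}\bigr)$, and the residual $r_{|T'|+1}-r_{k+1-c}$ is nonnegative and can be strictly positive; neither consistency of $\bR$ nor $\cost(\bP^*,D^{(2)})\le k\,\makespan(\bP^*,D')$ kills it. The mirror choice $c_{\max}$ simply moves the same residual to the $q$-side. And a case distinction between the two does not help as soon as $T$ is \emph{not} a prefix-union-suffix (which happens, since $\ell_j=q_j+r_{k+1-j}$ is the sum of a non-increasing and a non-decreasing sequence and thus not unimodal; e.g.\ $q=r=(10,4,4,0,0)$ gives $\ell=(10,4,8,4,10)$ and $T=\{1,3,5\}$ for $i=3$, so both $c_{\min}=2$ and $c_{\max}=4$ leave a scattered part on the opposite side).

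The paper's proof avoids this by using \emph{many} reference slots rather than one. It fixes the pivot $c(i+1)$, splits $T$ (viewed in slot space) into the part $S_1$ left of $c(i+1)$ and the part $S_2$ to the right, and then \emph{further} decomposes $S_1$ into maximal intervals $I_h=[j_h+1,j_{h+1}-1]$ between consecutive out-of-$T$ slots $j_1<\dots<j_\ell$ (and similarly for $S_2$). For each interval it bounds the \emph{average} $\frac{1}{|I_h|}\sum_{j\in I_h}(q_j+r_{k-j+1})$ by $\makespan(\bP^*,D')+p_{i+1}$, using balancedness of $\bQ$ at index $j_{h+1}-1$ together with the monotonicity $r_{k-j+1}\le r_{k-j_{h+1}+1}$ and the fact that $q_{j_{h+1}}+r_{k-j_{h+1}+1}=\ell_{j_{h+1}}\le p_{i+1}$ since $j_{h+1}\notin T$; combining these interval-averages via a max-of-averages argument and adding $S_1$ and $S_2$ gives $\sum_{j\le i}p_j\le i\,(\makespan(\bP^*,D')+p_{i+1})$. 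This interval decomposition — one reference slot per gap, and a per-interval average bound rather than a global total bound — is precisely the idea your sketch is missing, and it is not reachable from ``pick $c_{\min}$ or $c_{\max}$ and do bookkeeping.''
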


Crucially, the following lemma reveals that if the final solution is balanced and consistent with $\bP^*$, it is a logarithmic approximation for the latter.
The lemma follows from an averaging argument over \eqref{eq:inv2} for $i \in [k-1]$, which can be found in \Cref{app:lem:H_k-approx}.
\begin{restatable}{lemma}{HkApprox}
    \label{lem:H_k-approx}
    Let $\bP$ be a consistent and balanced path profile in $D$. Then $\makespan(\bP,D) \leq H_k \cdot \makespan(\bP^*,D)$.
\end{restatable}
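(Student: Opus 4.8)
The plan is to reformulate the balancedness condition \eqref{eq:inv2} in terms of the running averages of the path lengths, which makes it telescope neatly. Write $k$ for the common number of paths that $\bP$ and $\bP^*$ use in $D$ (these numbers agree because $\bP$ is consistent with $\bP^*$ in $D'=D$), let $p_1 \ge \ldots \ge p_k$ be the lengths of the paths of $\bP$ in $D$, and set $M \coloneqq \makespan(\bP^*,D)$. The goal is then $p_1 = \makespan(\bP,D) \le H_k \cdot M$. For $i \in [k]$ I introduce the average $b_i \coloneqq \tfrac1i \sum_{j=1}^i p_j$, so that $b_1 = p_1$ and $b_k = \tfrac1k\,\cost(\bP,D)$.

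The key observation is that, for each $i \in [k-1]$, balancedness at index $i$ is exactly a bound on the gap $b_i - b_{i+1}$. Indeed, from $S_{i+1} = S_i + p_{i+1}$ one gets $p_{i+1} = (i+1)b_{i+1} - i b_i$, and substituting this into \eqref{eq:inv2}, rewritten as $b_i - p_{i+1} \le M$, yields $(i+1)(b_i - b_{i+1}) \le M$, i.e.\ $b_i - b_{i+1} \le \tfrac{M}{i+1}$. Telescoping over $i = 1,\dots,k-1$ then gives
\[
p_1 - b_k \;=\; b_1 - b_k \;=\; \sum_{i=1}^{k-1}(b_i - b_{i+1}) \;\le\; M\sum_{i=1}^{k-1}\frac{1}{i+1} \;=\; M\,(H_k - 1).
\]

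It remains to control $b_k$, and here consistency enters: since $\bP^*$ consists of $k$ paths, each of length at most $\makespan(\bP^*,D) = M$, we have $\cost(\bP^*,D) \le kM$, and then \eqref{eq:inv1} gives $\cost(\bP,D) \le \cost(\bP^*,D) \le kM$, so $b_k \le M$. Combining the two bounds, $p_1 \le b_k + M(H_k-1) \le M H_k$, which is the claim.

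The argument is short once the right viewpoint is identified; there is no real obstacle beyond spotting the substitution $p_{i+1} = (i+1)b_{i+1} - i b_i$ that converts \eqref{eq:inv2} into a telescoping chain, and remembering to use consistency at $D'=D$ both for the equality of the number of paths and for the cost bound that pins down $b_k \le M$.
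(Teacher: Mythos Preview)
Your proof is correct and, in fact, cleaner than the one in the paper. The paper establishes the same result via an induction showing $p_1 \le H_i \cdot M + p_{i+1}$ for all $i \in [k-1]$, relying on the recursive identity $\tfrac{1}{i}\sum_{j=1}^{i-1} H_j + 1 = H_i$ and an averaging trick that combines the induction hypotheses with one more application of balancedness; only at the very end is consistency used to close the chain. Your approach sidesteps the induction entirely: by rewriting balancedness as $(i+1)(b_i - b_{i+1}) \le M$ in terms of the running averages $b_i$, the whole argument collapses to a single telescoping sum, followed by the same final use of consistency to bound $b_k \le M$. Both proofs use exactly the same two ingredients (balancedness \eqref{eq:inv2} at every index and consistency \eqref{eq:inv1} once), but your reformulation makes the role of the harmonic sum immediately visible and avoids the auxiliary chain of inequalities.
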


We now introduce the third property, $\phi$-boundedness, which is likewise maintained by greedy composition and implies the second bound on the approximation guarantee when fulfilled by the final solution.

\medskip

\noindent\textbf{$\phi$-boundedness.}
We call a path profile $\bP$ \emph{$\phi$-bounded} in $D' \sqsubseteq D$ if
\begin{equation}
    \label{eq:inv2'}
    \makespan(\bP,D') \leq (\phi(D')+1) \cdot \makespan(\bP^*,D).
\end{equation}
Note that in this definition, the left-hand side refers to the length of the longest path of $\bP$ in the subgraph $D' \sqsubseteq D$, whereas the right-hand side refers to the optimal objective value for \ac{MinMaxDP} on the entire graph $D$.
The following lemma establishes that consistency and $\phi$-boundedness are maintained by the greedy composition when executing all compositions that correspond to one component of the decomposition tree. 
\begin{restatable}{lemma}{GreedyPhi}
    \label{lem:phi-combination}
    Let $D' \sqsubseteq D$ be a subgraph corresponding to a root vertex of an $S$-component or $P$-component, respectively. Let $D^{(1)},\dots, D^{(i)}$ be the graphs corresponding to the children of the component.
    Further, let $\bP^{(j)}$ be a consistent and $\phi$-bounded path profile in $D^{(j)}$ for $j \in [i]$.
    Then, the greedy series, respectively parallel, composition $\bP$ of $\bP^{(1)}, \dots, \bP^{(i)}$ is a consistent and $\phi$-bounded path profile in $D'$.
\end{restatable}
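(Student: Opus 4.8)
The plan is to verify the two claimed properties of $\bP$---consistency and $\phi$-boundedness---in turn, handling the $S$- and $P$-component cases side by side. Consistency is essentially bookkeeping. In both a series and a parallel composition the arcs of $\bP$ lying in $D'$ are exactly the disjoint union of the arcs of the $\bP^{(j)}$ lying in $D^{(j)}$, so $\cost(\bP,D')=\sum_{j}\cost(\bP^{(j)},D^{(j)})\le\sum_{j}\cost(\bP^*,D^{(j)})=\cost(\bP^*,D')$. For the number of paths, note that in a parallel composition every $s'$-$t'$-path of $\bP^*$ lies inside a unique child, whereas in a series composition every $s'$-$t'$-path of $\bP^*$ traverses \emph{all} children; in the latter case this means all $D^{(j)}$ are crossed by the same number $k'$ of $\bP^*$-paths, so the greedy series composition (which is defined precisely because these counts coincide) produces $k'$ paths, matching $\bP^*$ in $D'$.

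For $\phi$-boundedness I would first record two structural identities. Since $D'$ is the root vertex of a component, each child $D^{(j)}$ carries the opposite label or is a leaf, and a root-leaf path in the decomposition tree of $D'$ first runs through the top component and then enters exactly one $D^{(j)}$. Writing $\phi'\coloneqq\max_j\phi(D^{(j)})$, it follows that $\phi(D')=\phi'$ when $D'$ is a $P$-component and $\phi(D')=\phi'+1$ when $D'$ is an $S$-component. Put $M\coloneqq\makespan(\bP^*,D)$; $\phi$-boundedness of the $\bP^{(j)}$ then gives $\makespan(\bP^{(j)},D^{(j)})\le(\phi(D^{(j)})+1)M\le(\phi'+1)M$ for every $j$. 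The parallel case is now immediate, because every path of the union $\bP=\bP^{(1)}\parc\dots\parc\bP^{(i)}$ lies inside a single $D^{(j)}$, so $\makespan(\bP,D')=\max_j\makespan(\bP^{(j)},D^{(j)})\le(\phi'+1)M=(\phi(D')+1)M$.

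The series case is the heart of the proof, and I expect it to be the main obstacle. Here $D'=D^{(1)}\serc\dots\serc D^{(i)}$ (composed in whatever binary pattern the component dictates), and each path of $\bP$ is a concatenation of one subpath from every $\bP^{(j)}$, assembled by repeatedly matching longest remaining pieces with shortest remaining ones. The key auxiliary statement I would establish is a \emph{spread invariant}: any profile obtained by greedy series compositions of profiles each of makespan at most $B$ has path lengths $p_1\ge\dots\ge p_{k'}$ satisfying $p_1-p_{k'}\le B$. This is proved by induction on the composition tree: a single $\bP^{(j)}$ has spread at most its makespan since lengths are nonnegative, and if two profiles with spreads at most $B$ are greedily composed, a short case distinction on which matched pair attains the new maximum and which the new minimum shows the new spread is again at most $B$ (in one case it is bounded by the spread of the first profile, in the other by that of the second). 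Applying this with $B=(\phi'+1)M$ to $\bP$ in $D'$, and using consistency together with $\cost(\bP^*,D')\le k'M$ (each of the $k'$ paths of $\bP^*$ in $D'$ is a subpath of a full $\bP^*$-path of length at most $M$), one gets $p_{k'}\le\frac1{k'}\cost(\bP,D')\le\frac1{k'}\cost(\bP^*,D')\le M$ and hence $\makespan(\bP,D')=p_1\le p_{k'}+B\le M+(\phi'+1)M=(\phi(D')+1)M$, as required. The delicate points are pinning down the exact form of the spread invariant and checking that it survives an arbitrary binary composition order rather than only the order in which the children are appended one at a time; the case distinction above is designed to cover both uniformly.
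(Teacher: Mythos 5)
Your proof is correct and follows essentially the same route as the paper's: consistency from additivity of $\cost$, the parallel case by taking the max of the children's makespans, and the series case via a spread invariant for greedy series composition followed by the averaging bound $p_{k'}\le\frac1{k'}\cost(\bP^*,D')\le M$. The only cosmetic difference is that you track ``spread bounded by the maximum makespan of the leaf profiles,'' whereas the paper tracks the (slightly tighter, but here equivalent after applying $\phi$-boundedness) bound ``spread of the composed profile $\le\max_j(p_1^{(j)}-p_{k'}^{(j)})$''; both rest on the same two-case analysis of which matched pair realises the new max and min, and you in fact make explicit the induction over the binary composition order that the paper compresses into the phrase ``by the greedy composition rule.''
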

The proof of this lemma involves an induction over the components of the decomposition tree of $D$ and can be found in \Cref{app:lem:phi-combination}.

\subsection{Dynamic programs}
\label{sec:DP}

The results in the preceding section imply that if we start from path profiles for the leaves of the decomposition tree of $D$ that are consistent with an optimal solution and balanced or $\phi$-bounded, respectively, then iteratively applying greedy composition will yield a path profile for $D$ that is consistent and balanced or $\phi$-bounded, respectively.
In particular, such a profile then yields an $H_k$-approximation or $(\phi(D) + 1)$-approximation.
   
However, since we do not know the optimal solution, we do not know which path profiles fulfil these requirements.
A natural approach seems to be to restrict ourselves to a solution with minimum total length (\ie an integral minimum-cost $s$-$t$-flow of value $k$).
However, there is a difficult trade-off between minimising the length of a longest path or minimising the total length of the solution in a subgraph, as the following example shows.

\begin{figure}[tbhp]
    \centering
    \begin{tikzpicture}[scale=1, every node/.style={draw,circle,inner sep=0pt,minimum size=10pt}]
        \node (s) at (0,0) {$s$};
        \node (1) at (2,0) {};
        \node (2) at (4,0) {};
        \node (3) at (6,0) {};
        \node[draw=none] (.) at (8,0) {$\dots$};
        \node (t-) at (10,0) {};
        \node (t) at (12,0) {$t$};

        \path[->, draw] (s) edge[in=120, out=60] node[draw=none, pos=0.5,above=-1pt, font=\scriptsize]{1} (1);
        \path[->, draw] (s) edge[in=150, out=30] node[draw=none, pos=0.5,above=-1pt, font=\scriptsize]{0} (1);
        \path[->] (s) edge[draw=none, in=200, out=-20] node[draw=none, pos=0.5,above=-1pt, font=\scriptsize]{$\dots$} (1);
        \path[->, draw] (s) edge[in=240, out=-60] node[draw=none, pos=0.5,above=-1pt, font=\scriptsize]{0} (1);

        \path[->, draw] (1) edge[in=120, out=60] node[draw=none, pos=0.5,above=-1pt, font=\scriptsize]{1} (2);
        \path[->, draw] (1) edge[in=150, out=30] node[draw=none, pos=0.5,above=-1pt, font=\scriptsize]{0} (2);
        \path[->] (1) edge[draw=none, in=200, out=-20] node[draw=none, pos=0.5,above=-1pt, font=\scriptsize]{$\dots$} (2);
        \path[->, draw] (1) edge[in=240, out=-60] node[draw=none, pos=0.5,above=-1pt, font=\scriptsize]{0} (2);
        
        \path[->, draw] (2) edge[in=120, out=60] node[draw=none, pos=0.5,above=-1pt, font=\scriptsize]{1} (3);
        \path[->, draw] (2) edge[in=150, out=30] node[draw=none, pos=0.5,above=-1pt, font=\scriptsize]{0} (3);
        \path[->] (2) edge[draw=none, in=200, out=-20] node[draw=none, pos=0.5,above=-1pt, font=\scriptsize]{$\dots$} (3);
        \path[->, draw] (2) edge[in=240, out=-60] node[draw=none, pos=0.5,above=-1pt, font=\scriptsize]{0} (3);

        \path[->, draw] (t-) edge[in=120, out=60] node[draw=none, pos=0.5,above=-1pt, font=\scriptsize]{1} (t);
        \path[->, draw] (t-) edge[in=150, out=30] node[draw=none, pos=0.5,above=-1pt, font=\scriptsize]{0} (t);
        \path[->] (t-) edge[draw=none, in=200, out=-20] node[draw=none, pos=0.5,above=-1pt, font=\scriptsize]{$\dots$} (t);
        \path[->, draw] (t-) edge[in=240, out=-60] node[draw=none, pos=0.5,above=-1pt, font=\scriptsize]{0} (t);

        \draw[->] (s) .. controls (0.5,-2) and (11.5,-2) .. (t) node[draw=none, pos=0.5,above=-2pt, font=\scriptsize]{$\tau_a$};
    \end{tikzpicture}
    \caption{The depicted graph $D$ is built as follows.
    Let $D'$ be a graph with $k$ parallel arcs (one arc with length 1 and $k-1$ arcs with length 0). We obtain $D$ by connecting $k$ copies of $D'$ in series plus adding an additional arc from $s$ to $t$ with length $\tau_a$.}
    \label{fig:average_needed}
\end{figure}
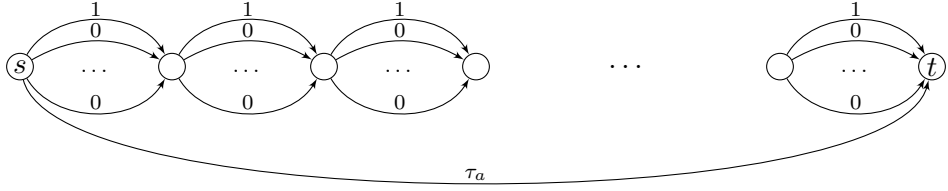

\begin{example}
    Consider the graph $D'$ depicted in \Cref{fig:average_needed}.
    If we set $\tau_a \coloneqq k-1$, the path profile with minimal total length consists of $k-1$ paths of length $0$ and one path with length $k-1$, leading to the objective value $k-1$.
    However, the optimal path profile consists of $k$ paths all having length $1$.
    On the other hand, assume that $\tau_a \coloneqq 1+ \varepsilon$ and create a graph $D$ that consists of $k$ copies of $D'$ composed in series.
    In this case, solving the \ac{MinMaxDP} in each subgraph $D'$ results in all paths having length $1$ inside each copy of the example, yielding to the objective value $k$ when combining them in series. 
    However, the optimal path profile uses exactly one arc of length $1+\varepsilon$ in each path and otherwise only length-$0$ arcs, yielding an objective value of $1+\varepsilon$.
    Thus, an algorithm which locally only considers the length of the longest path in the profile, would be  a factor of $k$ apart form the optimal value.
\end{example}

\noindent\textbf{DP framework.}
In the following, we show how to resolve this issue by means of two dynamic programs.
In both cases, our dynamic programming table will have the following attributes per cell:
\begin{itemize}
    \item a subgraph $D' \sqsubseteq D$,
    \item the number of paths $k' \in [k]$ used in $D'$,
    \item the total length $\theta'\in \Theta$ of the path profile in $D'$. 
\end{itemize}
Recall that the total length of a path profile $\bP$ in subgraph $D'$ was defined as the sum of the lengths of the paths in $D'$, \ie\ $\cost(\bP,D') \coloneqq \sum_{P\in \bP} \sum_{a\in A[D']\cap P}\tau_a$.
We let  $\Theta$ denote the set containing all possible values that might appear as total costs.

To fill the cells of the dynamic programming table, we start with the subgraphs consisting of single arcs, \ie\ $D'=(\{v,w\},\{a=(v,w)\})$. Here, the non-empty entries of the table corresponding to $D'$ are $(D',1,\tau_a)=\{(a)\}$ and $(D',0,0)=\{\emptyset\}$, while all remaining entries corresponding to $D'$ are empty (which is different from containing an empty path profile).

Given a cell $(D' \comp D'',\tilde{k},\tilde{\theta})$,  we combine all path profiles
$\bP'$ and $\bP''$ of $D'$ and $D''$ consisting of $k'$ and $k''$ paths of total length $\theta'$ and $\theta''$, respectively, if and only if $\tilde \theta=\theta'+\theta''$ and $\tilde k=k'+k''$ (in case of a parallel composition), or $\tilde k=k'=k''$
(in case of a series composition).
We build path profiles for $D' \comp D''$ according to the greedy composition of the path profiles for $D'$ and $D''$ as described above.

With this procedure, we obtain multiple candidates for a path profile in table entry $(D' \comp D'',\tilde{k},\tilde{\theta})$.
Below, we analyse two different strategies to pick a \emph{good} candidate that lead to different approximation guarantees.
After filling the complete table of the dynamic program, there is a path profile $\bP_{\theta}$ with $k$ paths in $D$ for each possible total-cost value $\theta\in \Theta$.
In the last step, the algorithm will choose the path profile for which the longest path is as short as possible, i.e.\ the algorithm chooses $\arg \min_{\theta \in \Theta} \makespan(\bP_{\theta},D)$.

\medskip

\noindent\textbf{DP strategy for balancedness.}
Our first strategy for selecting the candidates for the intermediate steps aims for balancedness of the solutions and is formalised in the following lemma, that reveals that the strategy yields an $H_k$-approximation.

\begin{restatable}{lemma}{DPHk}
    \label{DPHk}
    The dynamic program which always selects the candidate $\bP$ with $k'$ paths of lengths $p_1 \geq \dots \geq p_{k'}$ that minimises
    \begin{equation}
        \label{eq:DP_choice}
        \max_{i \in \{1, \dots ,k'-1\}} \left\{ \frac{1}{i} \sum_{j=1}^i p_j - p_{i+1} \right\},
    \end{equation}
    computes a solution with $DP \leq H_k \cdot \OPT$, where DP is the objective value of the computed solution and $\OPT$ is the objective value of an optimal solution.
\end{restatable}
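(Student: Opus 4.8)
The plan is to run the greedy-composition analysis of \Cref{sec:greedy-composition} along the single ``branch'' of the dynamic program that mirrors an optimal solution. Fix an arbitrary optimal path profile $\bP^*$ in $D$. For a subgraph $D'\sqsubseteq D$ let $k^*(D')$ denote the number of paths of $\bP^*$ that traverse $D'$ and $\theta^*(D')\coloneqq\cost(\bP^*,D')$ their total length in $D'$. I would first record that these quantities are additive along the decomposition tree: for a series composition $D'=D^{(1)}\serc D^{(2)}$ every $\bP^*$-path meeting $D'$ (which, by the series-parallel structure, restricts to an $(s',t')$-path of $D'$) passes through both parts, so $k^*(D^{(1)})=k^*(D^{(2)})=k^*(D')$; for a parallel composition $D'=D^{(1)}\parc D^{(2)}$ each such path lies entirely in one part, so $k^*(D^{(1)})+k^*(D^{(2)})=k^*(D')$; and in both cases $\theta^*(D')=\theta^*(D^{(1)})+\theta^*(D^{(2)})$ since $A^{(1)}$ and $A^{(2)}$ are disjoint. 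As $\theta^*(D')$ is attained by $\bP^*$, it lies in $\Theta$, so the dynamic program actually fills the cell $\bigl(D',k^*(D'),\theta^*(D')\bigr)$.

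The heart of the argument is an induction over the decomposition tree proving the invariant: the profile stored by the dynamic program in cell $\bigl(D',k^*(D'),\theta^*(D')\bigr)$ is consistent with $\bP^*$ and balanced in $D'$. The base case (a single arc $a$) is immediate: the relevant cell is $(D',1,\tau_a)=\{(a)\}$ or $(D',0,0)=\{\emptyset\}$, its profile has at most one path and is therefore vacuously balanced, and its total cost equals $\theta^*(D')$, giving consistency. For the inductive step at $D'=D^{(1)}\comp D^{(2)}$, the hypothesis supplies consistent and balanced profiles $\bP^{(1)},\bP^{(2)}$ in the cells $\bigl(D^{(j)},k^*(D^{(j)}),\theta^*(D^{(j)})\bigr)$; by the additivity above the linking constraints of the dynamic program are met (path counts match---in the series case precisely because $k^*(D^{(1)})=k^*(D^{(2)})$---and costs add), so the greedy composition $\bP^{(1)}\comp\bP^{(2)}$ is one of the candidates considered for cell $\bigl(D',k^*(D'),\theta^*(D')\bigr)$. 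By \Cref{lem:H_k-combination} this candidate is consistent and balanced in $D'$; writing its lengths in $D'$ as $p_1\ge\dots\ge p_{k'}$ with $k'=k^*(D')$, balancedness is exactly the statement that $\max_{i\in[k'-1]}\bigl\{\tfrac1i\sum_{j=1}^{i}p_j-p_{i+1}\bigr\}\le\makespan(\bP^*,D')$. Since the dynamic program stores the candidate minimising that very expression \eqref{eq:DP_choice}, the stored profile satisfies the same bound, hence is balanced in $D'$; and because it sits in the cell with total cost $\theta^*(D')=\cost(\bP^*,D')$ and uses $k^*(D')$ paths, it is also consistent with $\bP^*$. This closes the induction.

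Taking $D'=D$ in the invariant yields, for $\theta\coloneqq\theta^*(D)=\cost(\bP^*,D)$, a $k$-path profile $\bP_{\theta}$ that is consistent and balanced in $D$; \Cref{lem:H_k-approx} then gives $\makespan(\bP_{\theta},D)\le H_k\cdot\makespan(\bP^*,D)=H_k\cdot\OPT$. Since the algorithm outputs $\arg\min_{\theta'\in\Theta}\makespan(\bP_{\theta'},D)$, its objective value satisfies $DP\le\makespan(\bP_{\theta},D)\le H_k\cdot\OPT$, which is the claim.

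I expect the only delicate point to be the bookkeeping in the inductive step---verifying that the ``optimal-cost'' cell of $D'$ is fed exactly by the ``optimal-cost'' cells of its two children (which is where additivity of $k^*,\theta^*$ and the series-composition equality $k^*(D^{(1)})=k^*(D^{(2)})$ enter) and that the greedy DP choice preserves balancedness because its selection criterion \eqref{eq:DP_choice} is literally the left-hand side of the balancedness inequality \eqref{eq:inv2}. All the substantive work---that greedy composition preserves consistency and balancedness, and that these two properties force an $H_k$ bound---is already packaged in \Cref{lem:H_k-combination} and \Cref{lem:H_k-approx}.
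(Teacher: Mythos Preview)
Your proposal is correct and follows essentially the same approach as the paper's proof: track the DP cells $\bigl(D',k^*(D'),\theta^*(D')\bigr)$ induced by an optimal profile $\bP^*$, show by induction (using \Cref{lem:H_k-combination} and the fact that the selection rule \eqref{eq:DP_choice} is the left-hand side of \eqref{eq:inv2}) that the stored profile is consistent and balanced, and conclude via \Cref{lem:H_k-approx}. Your write-up is more explicit about the additivity bookkeeping for $k^*$ and $\theta^*$, but the argument is the same.
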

The proof of this lemma is by induction on the table entries of the optimal solution and by using \Cref{lem:H_k-combination} to show that we always have a balanced candidate. Together with the choice \eqref{eq:DP_choice}, the lemma follows. A formal proof can be found in \Cref{app:lem:DPHk}.
\medskip

\noindent\textbf{DP strategy for boundedness.}
Our second candidate selection strategy aims at $\phi$-boundedness of the intermediate solutions. This is formalised in the following lemma, which shows that this strategy yields a $(\phi(D) + 1)$-approximation.

\begin{restatable}{lemma}{DPphi}
    \label{DPphi}
    The dynamic program that always selects the candidate $\bP$ with $k'$ paths of lengths $p_1 \geq \dots \geq p_{k'}$ that minimises
    \begin{alignat}{2}
        &p_1 - p_{k'} \quad &&\text{ in case of a series combination, and } \label{eq:DP_phi_choice_s}\\
        &p_1 &&\text{ in case of a parallel composition} \label{eq:DP_phi_choice_p}
    \end{alignat}
    computes a solution with $DP \leq (\phi(D)+1) \cdot \OPT$, 
    where DP is the objective value of the computed solution and $\OPT$ is the objective value of an optimal solution.
\end{restatable}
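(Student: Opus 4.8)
The plan is to follow the blueprint of the proof of \Cref{DPHk}: fix an arbitrary optimal path profile $\bP^*$ in $D$, put $\OPT \coloneqq \makespan(\bP^*,D)$, and prove by bottom-up induction over the binary decomposition tree that the DP fills a distinguished \emph{witness cell} of every subgraph $D' \sqsubseteq D$ with a candidate obeying a strong invariant. For $D' \sqsubseteq D$, let $\kappa(D')$ be the number of paths $\bP^*$ uses in $D'$; the witness cell of $D'$ is $\big(D',\kappa(D'),\cost(\bP^*,D')\big)$. Since $\cost(\bP^*,\cdot)$ and $\kappa(\cdot)$ split additively over the two halves of any series or parallel composition, the witness cell of an internal node is fed exactly by the witness cells of its two children, which is what makes the induction self-contained. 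The invariant I would maintain is two-faced, mirroring the two selection rules \eqref{eq:DP_phi_choice_s}--\eqref{eq:DP_phi_choice_p}: if the decomposition-tree root of $D'$ is an \encircle{$S$}-node, the witness cell contains a candidate $\bP$ consistent with $\bP^*$ in $D'$ whose sorted path lengths $p_1 \ge \cdots \ge p_{\kappa(D')}$ satisfy $p_1 - p_{\kappa(D')} \le \phi(D') \cdot \OPT$; if the root of $D'$ is a \encircle{$P$}-node or a single arc, the witness cell contains a consistent candidate with $p_1 \le (\phi(D')+1)\cdot\OPT$.

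The one genuinely new ingredient is a \emph{spread lemma} for greedy series composition: if $\bQ$ and $\bR$ each have $\ell$ paths with sorted lengths $q_1 \ge \cdots \ge q_\ell$ and $r_1 \ge \cdots \ge r_\ell$, then the greedy series composition $\bQ \serc \bR$ (which pairs $q_i$ with $r_{\ell+1-i}$) has spread at most $\max\{q_1-q_\ell,\, r_1-r_\ell\} \le \max\{\makespan(\bQ),\makespan(\bR)\}$. This follows from a short case analysis on the pairing value $f(i) = q_i + r_{\ell+1-i}$: for indices $i \le j$ one gets $f(i)-f(j) \le q_i - q_j \le q_1-q_\ell$, and for $i > j$ one gets $f(i)-f(j) \le r_{\ell+1-i}-r_{\ell+1-j} \le r_1-r_\ell$. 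The complementary ingredient is the elementary observation (already used in the paper's other analyses) that consistency upgrades a spread bound into a makespan bound, since $p_1 = p_{\kappa(D')} + (p_1 - p_{\kappa(D')})$ and $p_{\kappa(D')} \le \tfrac{1}{\kappa(D')}\cost(\bP^*,D') \le \makespan(\bP^*,D') \le \OPT$; together with the triviality $\makespan(\bP^{(L)} \parc \bP^{(R)}, D') = \max\{\makespan(\bP^{(L)},D^{(L)}),\makespan(\bP^{(R)},D^{(R)})\}$, these suffice.

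The base case (a single arc $a$) is immediate: if $\bP^*$ uses $a$ then $\tau_a \le \OPT$, so the entry $(D',1,\tau_a)=\{(a)\}$ meets the invariant with $\phi(D')=0$; otherwise the witness cell is $(D',0,0)=\{\emptyset\}$ and there is nothing to show. For the step at $D' = D^{(L)} \comp D^{(R)}$, the induction hypothesis puts good candidates $\bP^{(L)},\bP^{(R)}$ in the children's witness cells, and since the corresponding split of $(\kappa,\cost(\bP^*,\cdot))$ is one of the valid splits the DP enumerates for the witness cell of $D'$, the DP considers their greedy composition; as it keeps the candidate minimising precisely the quantity appearing in the invariant, its stored candidate is at least as good. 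For a parallel node one combines the displayed identity for $\makespan$ with $\phi(D^{(X)}) \le \phi(D')$ (parallel composition opens no new $S$-component), upgrading a child's spread bound via consistency when that child is \encircle{$S$}-rooted, so every child contributes at most $(\phi(D')+1)\OPT$. For a series node the spread lemma reduces the spread of $\bP$ to the maximum over the two children of their spreads (if \encircle{$S$}-rooted) or makespans (otherwise); here the key fact is $\phi(D') \ge \phi(D^{(X)}) + 1$ for any non-\encircle{$S$} child (one extra $S$-component is opened when passing from the \encircle{$S$}-component down into it), which absorbs the $+1$, while an \encircle{$S$}-rooted child contributes at most $\phi(D^{(X)})\OPT \le \phi(D')\OPT$ directly. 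Hence the invariant propagates, and at the root of $D$ the witness cell $\big(D,k,\cost(\bP^*,D)\big)$ holds a candidate with makespan at most $(\phi(D)+1)\OPT$ (applying the upgrade once more if the root is an \encircle{$S$}-node); since the algorithm outputs the candidate of smallest makespan over all total-cost values, we conclude $DP \le (\phi(D)+1)\OPT$.

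I expect the main obstacle to be the bookkeeping of the parameter $\phi$ at the interfaces between \encircle{$S$}- and \encircle{$P$}-components of the decomposition tree---namely verifying $\phi(D^{(X)}) \le \phi(D')$ for a parallel node and $\phi(D^{(X)}) + 1 \le \phi(D')$ whenever a series node passes into a non-\encircle{$S$} child---and choosing the two-faced invariant so that the spread-to-makespan conversion is available exactly where a child of one type feeds a parent of the other (in particular at an \encircle{$S$}-root and whenever an \encircle{$S$}-rooted subgraph sits below a \encircle{$P$}-node). The spread lemma and the consistency upgrade are both short and self-contained, so once they are in place the remaining propagation closely parallels the argument behind \Cref{lem:phi-combination} and the proof of \Cref{DPHk}.
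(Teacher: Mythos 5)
Your proposal is correct and follows essentially the same strategy as the paper's proof of this lemma: fix an optimal profile $\bP^*$, trace the ``witness'' cells $\big(D',\kappa(D'),\cost(\bP^*,D')\big)$ through the decomposition tree, observe that greedy series composition does not increase the spread $p_1 - p_{k'}$ (your spread lemma is exactly the inequality $\hat p_1 - \hat p_{k'} \le \max\{q_1-q_{k'},\,r_1-r_{k'}\}$ used inside the paper's Claim~\ref{claim:p1-pk}), and convert a spread bound into a makespan bound via consistency and averaging (the step $p_{k'} \le \frac{1}{k'}\cost(\bP^*,D') \le \OPT$, which is also equation~\eqref{eq:phi_composition2} in the paper). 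The only organizational difference is that you carry a two-faced invariant at every binary node of the decomposition tree (spread bound if \encircle{$S$}-rooted, makespan bound otherwise), whereas the paper advances the induction one entire $S$- or $P$-component at a time and maintains $\phi$-boundedness only at component roots; these are equivalent book-keepings of the same argument, and your $\phi$-arithmetic at the \encircle{$S$}/\encircle{$P$} interfaces ($\phi(D^{(X)}) \le \phi(D')$ under a \encircle{$P$}-parent, $\phi(D^{(X)})+1 \le \phi(D')$ for a non-\encircle{$S$} child of an \encircle{$S$}-parent) matches the paper's equations~\eqref{eq:phi_constr_cut} and~\eqref{eq:phi_constr_cut_p}.
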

Similarly to the proof of \Cref{DPHk} we follow the table entries of an optimal solution. However, here we need to consider a component of the decomposition tree of $D$ at once and show that the entries for the subgraphs corresponding to segment roots are $\phi$-bounded.
The formal proof of this lemma can be found in \Cref{app:lem:DPphi}.

\medskip

\noindent\textbf{Running time.}
From the size of the dynamic programming table, we can obtain the bound on its running time formalised in the lemma below.
By a standard rounding approach, we can further ensure that the number of possible total-cost values $|\Theta|$ is polynomial in the size of the graph at a loss of a $(1+\varepsilon)$-factor in the approximation ratio, obtaining the desired polynomial running time; see \Cref{app:DP_runtime} for details.
\begin{restatable}{lemma}{DPruntime}
    \label{lem:DPruntime}
    The dynamic program runs in time $\mathcal{O}(m \cdot k^4 \cdot |\Theta|^2)$.
\end{restatable}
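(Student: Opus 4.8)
The proof is a direct counting argument over the dynamic programming table, which I would organise in three steps.

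\emph{Counting the cells.} The binary decomposition tree of $D$ has exactly $m$ leaves (one per arc), hence $m-1$ internal vertices and $2m-1$ vertices overall; in particular there are $\mathcal{O}(m)$ distinct subgraphs $D'\sqsubseteq D$. Combined with the $\mathcal{O}(k)$ choices for the number of paths $k'\in\{0,\dots,k\}$ and the $|\Theta|$ choices for the total cost $\theta'\in\Theta$, the table has $\mathcal{O}(m\,k\,|\Theta|)$ cells. The cells for single-arc subgraphs are initialised directly---each has at most the two non-empty entries $(D',1,\tau_a)=\{(a)\}$ and $(D',0,0)=\{\emptyset\}$---at total cost $\mathcal{O}(m)$, so the base cases are negligible.

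\emph{Work per composition cell.} Process the decomposition tree bottom-up, so that when a cell $(D'\comp D'',\tilde k,\tilde\theta)$ is filled, all cells of the children $D'$ and $D''$ are already available, and each cell is processed exactly once. Filling it amounts to enumerating the candidate splits: a pair $(k',\theta')$ fixes $k''=\tilde k-k'$ and $\theta''=\tilde\theta-\theta'$, so there are $\mathcal{O}(k\,|\Theta|)$ candidates for a parallel composition and only $\mathcal{O}(|\Theta|)$ for a series composition (where $k'=k''=\tilde k$ is forced); in either case $\mathcal{O}(k\,|\Theta|)$ is a valid bound. For each candidate we read the stored profiles $\bP'$ and $\bP''$ and, when both cells are non-empty, compute the greedy composition $\bP'\comp\bP''$ together with its selection key (\eqref{eq:DP_choice} for the balancedness DP, or \eqref{eq:DP_phi_choice_s}/\eqref{eq:DP_phi_choice_p} for the $\phi$-boundedness DP), keeping the best candidate. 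It suffices to store in every cell only the sorted vector of path lengths of its profile (plus back-pointers for the final reconstruction): then $\bP'\parc\bP''$ is the merge of two sorted vectors of combined length at most $k$, and $\bP'\serc\bP''$ additionally pairs the $i$-th longest length of $\bP'$ with the $i$-th shortest of $\bP''$ before re-sorting, both in $\mathcal{O}(k\log k)$; evaluating \eqref{eq:DP_phi_choice_s}/\eqref{eq:DP_phi_choice_p} is $\mathcal{O}(1)$, and even a naive evaluation of \eqref{eq:DP_choice}---recomputing each prefix sum $\sum_{j=1}^i p_j$ from scratch for $i\in[k'-1]$---costs $\mathcal{O}(k^2)$, which dominates. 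Multiplying the three bounds, $\mathcal{O}(m\,k\,|\Theta|)$ cells, $\mathcal{O}(k\,|\Theta|)$ candidates each, and $\mathcal{O}(k^2)$ per candidate, gives $\mathcal{O}(m\,k^4\,|\Theta|^2)$. The concluding step---comparing $\makespan(\bP_{\theta},D)$ over the $|\Theta|$ completed profiles with $k$ paths each and outputting the minimiser---adds only $\mathcal{O}(k\,|\Theta|)$, so the stated bound holds for both variants of the DP.

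\emph{Where the work is.} There is no genuine obstacle here; the statement is essentially bookkeeping. The points worth being explicit about are: that a binary decomposition tree has only $\mathcal{O}(m)$ vertices (so the first factor is $m$ and not larger); that a bottom-up traversal makes each child cell available before it is used and processes every cell exactly once; and that the $\mathcal{O}(k^2)$-per-candidate bound is precisely what produces the fourth power of $k$ (a sharper implementation with precomputed prefix sums would improve this to $\mathcal{O}(m\,k^3\,|\Theta|^2)$, but the weaker bound in the lemma is all that is needed downstream). Finally, as noted around \Cref{lem:DPepsruntime}, rounding the cost values so that $|\Theta|$ becomes polynomial in the input size---carried out in \Cref{app:DP_runtime}---turns this bound into the claimed polynomial running time, at the price of the $(1+\varepsilon)$-factor in the approximation guarantee.
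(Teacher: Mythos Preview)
Your proposal is correct and follows essentially the same counting argument as the paper: $\mathcal{O}(mk|\Theta|)$ cells, $\mathcal{O}(k|\Theta|)$ candidates per cell, and $\mathcal{O}(k^2)$ work per candidate to evaluate \eqref{eq:DP_choice}. Your write-up is more detailed than the paper's (you justify the $\mathcal{O}(m)$ bound on subgraphs via the binary tree, discuss initialisation and the final selection, and note the sharper $\mathcal{O}(mk^3|\Theta|^2)$ bound obtainable with prefix sums), but the underlying argument is the same.
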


\section{Conclusion}

In this paper, we introduced \ac{TMO} as a natural optimisation problem that captures some of the fundamental challenges of routing trains with a headway constraint.
By showing the existence of optimal train routings with a convoy structure, we linked the approximability of the problem to that of \ac{MinMaxDP}, for which we then provided new approximation results in \ac{SePa}-graphs.
Several interesting questions arise from our work:
\begin{itemize}
    \item Is it possible to extend the logarithmic approximation guarantee for \ac{MinMaxDP} in \ac{SePa}-graphs to arbitrary \ac{DAG}s (where it would be best possible due to \cref{thm:hardness})?
    \item Conversely, can the inapproximability result from~\cite{DOKKA23} for \ac{DAG}s be extended to \ac{SePa}-graphs~(again implying that the greedy approach is best possible in those graphs)?
    \item Which of the results presented in this paper can be extended to the setting in which trains might have distinct sources and sinks?
    Note that for general \acp{DAG} even finding disjoint paths for multiple source-sink pairs is \textsf{NP}-hard~\cite{even1976complexity} and that this still holds true in undirected \ac{SePa}-graphs~\cite{nishizeki2001edge}; however, this hardness result does not carry over to directed \ac{SePa}-graphs, so there is hope that our approximation results for those graphs extend to the multi commodity case.
\end{itemize}

\bibliography{routing}

\newpage
\noindent
{\LARGE \textbf{Appendix}}

\appendix

\section{Hardness for MinMaxDP in DAGs -- Proof of \Cref{thm:hardness}}
\label{app:hardness}

In this section, we discuss some implications of a recent inapproximability result by Dokka and Goerigk~\cite{DOKKA23} for the \acf{MLBA} problem, which is defined as follows.

\begin{definition}[\ac{MLBA}]
We are given a directed $b$-partite graph $G=(V_1 \cup \ldots \cup V_b, A)$ where each block $V_i$ has $n$ nodes and each arc is from $V_i$ to $V_{i+1}$, for $i \in [b-1]$. Let $V = \cup_{i \in [b]} V_i$. Each node $v \in V$ has a positive weight $w(v) \in \mathbb{N}$. The problem is to partition the node set $V$ into $n$ node-disjoint paths $P_1, \ldots, P_{ n}$ to minimize the maximum length $\max_{j \in[n]} \sum_{v \in P_j} w(v)$ of any path. 
\end{definition}

It is not hard to see that \ac{MLBA} constitutes a special case of \ac{MinMaxDP}, as described in the proof of the following proposition.

\begin{proposition}\label{prop:reduction-MLBA}
    An $\alpha$-approximation for the \ac{MinMaxDP} problem in DAGs gives an $\alpha$-approximation for \ac{MLBA}.
\end{proposition}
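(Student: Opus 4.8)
The plan is to show that \ac{MLBA} is literally a special case of \ac{MinMaxDP} by constructing, from any \ac{MLBA} instance, an equivalent \ac{MinMaxDP} instance on a DAG, so that an $\alpha$-approximate solution to the latter transfers back with the same guarantee. The only conceptual gap to bridge is that \ac{MLBA} has \emph{node} weights while \ac{MinMaxDP} has \emph{arc} travel times, and that \ac{MLBA} asks for a \emph{node partition} into $n$ paths while \ac{MinMaxDP} asks for $k$ \emph{arc-disjoint} $s$-$t$-paths. Both are handled by the standard node-splitting gadget.

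First I would construct the \ac{MinMaxDP} instance $(D,\tau,s,t)$ with $k=n$. For each node $v \in V$ of the \ac{MLBA} graph $G$, introduce two vertices $v^{\text{in}}, v^{\text{out}}$ in $D$ together with an arc $(v^{\text{in}}, v^{\text{out}})$ of travel time $\tau = w(v)$; for each arc $(u,v) \in A$ of $G$ (with $u \in V_i$, $v \in V_{i+1}$), introduce an arc $(u^{\text{out}}, v^{\text{in}})$ in $D$ with travel time $0$. Finally add a source $s$ with a zero-length arc $(s, v^{\text{in}})$ to each $v \in V_1$, and a sink $t$ with a zero-length arc $(v^{\text{out}}, t)$ from each $v \in V_b$. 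Since all arcs of $G$ go from $V_i$ to $V_{i+1}$, the graph $D$ is a DAG. Every $s$-$t$-path in $D$ corresponds to a path in $G$ visiting exactly one node from each block $V_1, \dots, V_b$, and its travel time equals the sum of the $w$-weights of the nodes it traverses.

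Next I would argue the correspondence between feasible solutions. A collection of $n$ arc-disjoint $s$-$t$-paths in $D$ uses $n$ distinct arcs out of $s$, hence $n$ distinct vertices of $V_1$, and therefore \emph{all} of $V_1$; inductively, since the internal arc $(v^{\text{in}}, v^{\text{out}})$ is the unique way to pass through block node $v$, arc-disjointness forces the $n$ paths to partition each block $V_i$, so they induce a partition of $V$ into $n$ node-disjoint paths in $G$ — exactly an \ac{MLBA} solution. Conversely, any \ac{MLBA} node-partition into $n$ paths lifts to $n$ arc-disjoint $s$-$t$-paths in $D$. Under this bijection the makespan $\max_{i} \tau(P_i)$ in $D$ equals $\max_{j} \sum_{v \in P_j} w(v)$ in $G$, so optimal values coincide, and applying $\mathcal{A}$ to $D$ and translating the output back gives a solution to the \ac{MLBA} instance within factor $\alpha$ of optimal; the translation is clearly polynomial-time.

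The main obstacle — really the only point needing care rather than routine bookkeeping — is verifying that \emph{arc}-disjointness in $D$ exactly captures the \emph{node}-partition requirement of \ac{MLBA} (in particular that arc-disjoint $s$-$t$-paths must cover every block vertex, not merely avoid reusing one), and conversely that node-disjoint paths in $G$ never force two lifted paths to share an arc of $D$. Both directions follow from the node-splitting construction, where each original node is represented by a single internal arc, but this equivalence is the crux of the proposition and is what I would spell out most carefully.
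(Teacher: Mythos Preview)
Your proposal is correct and follows essentially the same approach as the paper: the standard node-splitting gadget (each vertex $v$ becomes an arc $(v^{\text{in}},v^{\text{out}})$ of length $w(v)$, original arcs and source/sink arcs get length $0$, and $k=n$). If anything, you are slightly more careful than the paper in explicitly verifying that $n$ arc-disjoint $s$-$t$-paths must \emph{cover} every block vertex and thus yield a genuine partition, whereas the paper's proof simply asserts the equivalence of node-disjointness and arc-disjointness.
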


\begin{proof}
Given an instance of \ac{MLBA} $(V_1, \ldots, V_b, E)$, we obtain an instance of \ac{MinMaxDP} as follows. We replace each node $v$ with weight $w(v)$ in the \ac{MLBA} instance by two nodes $v_{in}$, $v_{out}$, and an arc $a_v = (v_{in},v_{out})$ with travel time $\tau_{a_v}=w(v)$. We also replace each arc $a = (u,v)$ in the original instance by an arc $a' = (u_{out},v_{in})$ with travel time $0$. Finally, we add a source $s$ and arcs $(s,v_{in})$ with travel time $0$ for every $v \in V_1$ in the original instance, and a sink $t$ and arcs $(v_{out},t)$ with travel time $0$ for every $v \in V_b$. 
We set $k = n$. 
Clearly, the resulting graph is a \ac{DAG}. Furthermore, any $s$-$t$-path in the \ac{MinMaxDP} instance with length $w$ corresponds to a path of weight $w$ in the \ac{MLBA} instance, and vice versa. 
Moreover, a path is node-disjoint in \ac{MLBA}
if and only if it is arc-disjoint in \ac{MinMaxDP}. Thus, the optimal value remains unchanged in the reduction, and an $\alpha$-approximation for the \ac{MinMaxDP} problem in DAGs gives an $\alpha$-approximation for \ac{MLBA} as well.
\end{proof}

Dokka and Goerigk~\cite{DOKKA23} showed the following inapproximability result for \ac{MLBA}.

\begin{theorem}[Theorem~2.1 in~\cite{DOKKA23}]
\label{thm:dokka-hardness}
    Unless $\textsf{P} = \textsf{NP}$, there is no approximation algorithm for \ac{MLBA} with a ratio strictly better than $b/3+1$, where $b$ is the number of blocks in the \ac{MLBA} instance. 
\end{theorem}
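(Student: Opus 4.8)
\medskip
\noindent\textbf{Proof proposal.}
The plan is a gap-preserving reduction from a strongly \textsf{NP}-hard problem in which the ``$/3$'' stems from dedicating three consecutive blocks to each ``decision stage'', and the linear-in-$b$ gap comes from chaining $\Theta(b)$ such stages. The crucial feature of \ac{MLBA} to exploit is that the bipartite arc set between consecutive blocks is \emph{arbitrary}: hence each of the $n$ paths can be made to behave like a token whose node in block $i{+}1$ is constrained by its node in block $i$, which lets one enforce logical consistency along the $b$ blocks. Note that one cannot hope to obtain the bound by padding a constant-gap instance, since every extra block forces weight $\ge 1$ onto \emph{every} path, so padding drives the ratio down to $1$; the $b$ blocks must genuinely do work.

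First I would build a base stage gadget on three blocks, reducing from $3$-SAT (or a suitably restricted satisfiability/colouring problem). All $n$ tokens traverse the stage; a distinguished token carries a partial variable assignment, and there is a single node of large weight $W$ (all other weights small, say $1$ after rescaling), placed so that: if the stage's clause is satisfied by the incoming assignment, the $W$-node can be routed onto a non-distinguished token, and the choice of that token can be made freely (in particular, a different token in each stage); whereas if the clause is violated, every feasible routing forces the $W$-node onto the distinguished token. The first milestone is to verify that a single stage already yields a constant-factor inapproximability gap assuming $\textsf{P} \neq \textsf{NP}$.

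The amplification is the crux. I would place $p := \lfloor (b - O(1))/3 \rfloor$ stages in series, inserting ``alignment'' blocks between consecutive stages whose arcs preserve the identity of the distinguished token. In a YES-instance, a satisfying assignment lets us choose the absorbing tokens across the $p$ stages to be pairwise distinct (this is where a mild bound such as $n \ge p$ is used), so every token ends with weight at most $W + O(b)$ and $\OPT \le W + O(b)$; in a NO-instance some stage's clause is violated and the alignment forces the distinguished token to collect $\Omega(p)$ of the heavy nodes, so $\OPT = \Omega(pW)$. Taking $W$ polynomially large compared to $b$ makes the ratio $\tfrac{\Omega(pW)}{W + O(b)} = \Omega(p) = \Omega(b)$, and a careful accounting of the padding blocks together with the exact value of $\OPT$ should sharpen the constant to exactly $b/3+1$. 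An approximation algorithm beating that ratio would then decide satisfiability of the original formula in polynomial time.

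I expect the genuinely hard part to be engineering the dichotomy in the amplified instance: one must simultaneously guarantee (i) that a violated clause propagates through the alignment blocks so that a single token is forced to collect $\Omega(p)$ of the heavy nodes \emph{and this cannot be undone by any clever re-routing}, and (ii) that a satisfying assignment admits one global routing that disperses the heavy nodes onto distinct tokens in all $p$ stages \emph{at once}. By contrast, the design of the three-block base gadget and the final normalisation (padding to exactly $b$ blocks of $n$ nodes with positive integer weights, and rescaling) should be comparatively routine.
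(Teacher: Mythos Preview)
The paper does not prove this theorem itself; it is quoted from \cite{DOKKA23}, and the paper only \emph{summarises} the construction used there: a reduction from \textsc{3-Dimensional Matching} producing an \ac{MLBA} instance with $b=3u$ blocks and all weights in $\{0,1\}$, such that $\OPT=1$ if the 3DM instance is feasible and $\OPT=u+1$ otherwise. The ratio $b/3+1=u+1$ then drops out immediately from this clean $1$-versus-$(u{+}1)$ gap.

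Your proposal differs in two substantive ways, and the second one is a genuine problem. First, the cosmetic difference: you use a large weight $W$ and aim for $\OPT_{\textsc{yes}}\le W+O(b)$ versus $\OPT_{\textsc{no}}=\Omega(pW)$, whereas the actual construction uses $0/1$ weights and achieves the \emph{exact} values $1$ and $u+1$. Your scheme would at best give a ratio approaching $p\approx b/3$, and your ``careful accounting \dots should sharpen the constant to exactly $b/3+1$'' is not supported by the setup you describe.

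Second, and more seriously, your amplification step does not work with $3$-SAT as the source problem. You write that ``in a NO-instance some stage's clause is violated and the alignment forces the distinguished token to collect $\Omega(p)$ of the heavy nodes''. But your base gadget only forces the heavy node onto the distinguished token in a stage \emph{whose clause is violated by the carried assignment}. Unsatisfiability of a $3$-CNF guarantees merely that \emph{at least one} clause is violated per assignment, not $\Omega(p)$ of them; so in a NO-instance your distinguished token is only forced to absorb a \emph{single} heavy node, and the gap collapses to a constant. Nothing in the ``alignment blocks'' as you describe them converts one violated clause into $\Omega(p)$ forced absorptions. To make this route work you would need either a gap version of satisfiability (via the PCP theorem), which costs you the sharp constant, or a source problem in which infeasibility is witnessed in \emph{every} stage simultaneously on the \emph{same} token---which is precisely what the \textsc{3-Dimensional Matching} reduction in \cite{DOKKA23} is engineered to deliver.
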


An immediate implication of \cref{thm:dokka-hardness} is that also \ac{MinMaxDP} and \ac{TMO} in \acp{DAG} do not allow for constant-factor approximation algorithms (unless $\textsf{P}=\textsf{NP}$).
We will present an alternative analysis of the construction used in \cite{DOKKA23} to derive the following logarithmic lower bound on the approximation factor in terms of the number of paths $k$.
\hardness*

Dokka and Goerigk~\cite{DOKKA23} establish their hardness-of-approximation result for \ac{MLBA} by reduction from \textsc{3-Dimensional Matching}.
Their construction takes as input an instance of \textsc{3-Dimensional Matching} and a parameter $u$, and constructs an instance of \ac{MLBA} with $b = 3u$ blocks, each containing $n$ nodes where
\begin{equation}
    \label{eq:hardness}
    n \leq u(3q + 6p^{u + 1}),
\end{equation}
where $p$ and $q$, respectively, are the number of hyperedges and elements per dimension (that is, there are $3q$ elements in total) in the \textsc{3-Dimensional Matching} instance~\cite[Lemma~2.1]{DOKKA23}.
The weights in the constructed instance are all in $\{0, 1\}$.
The \ac{MLBA} instance has the property that it has an optimal solution value of $1$ if the \textsc{3-Dimensional Matching} instance admits a feasible solution, and an optimal solution value of $u + 1$ otherwise~\cite[Proposition~2.3]{DOKKA23}.
This immediately implies \cref{thm:dokka-hardness}.

We now observe that the construction described above also implies a strict lower bound of $\log^{1-\varepsilon}(k)$ for \ac{MinMaxDP} in \acp{DAG} for any $\varepsilon > 0$, assuming that \textsf{NP} contains problems that cannot be solved in quasi-polynomial time.

\begin{proof}[Proof of \Cref{thm:hardness}]
Assume there exists a $\log^{1 - \varepsilon}(k)$-approximation for \ac{MinMaxDP} for some $\varepsilon > 0$ and let $c = 1/\varepsilon$.
We show how to use this algorithm to solve \textsc{3-Dimensional Matching} in quasi-polynomial time.
For this, take an instance of \textsc{3-Dimensional Matching} and use the construction from~\cite{DOKKA23} with $u = \lceil (\log p)^c \rceil$ to obtain an \ac{MLBA} instance with $b=3\lceil (\log p)^c \rceil$ blocks and $n \leq u (3q + 6p^{u + 1})$ nodes per block.
Note that the size of the constructed instance is quasi-polynomial in the size $p + q$ of the \textsc{3-Dimensional Matching} instance.
Without loss of generality, we can assume $p \geq q$ (otherwise, the \textsc{3-Dimensional Matching} instance is infeasible).
Note further that for fixed $c = 1/\varepsilon$ there exists a constant $p_c \in \mathbb{N}$ such that both 
\begin{align*}
    ((\log(p))^c +1) \left(3p + 6p^{(\log(p))^c+2} \right) \leq p^{(\log(p))^c +3} \quad \text{ and } \quad (\log p)^{1/c} > 4
\end{align*}
for all $p \geq p_c$.
We can assume that $p \geq p_c$ for the given \textsc{3-Dimensional Matching} instance (otherwise we can solve it by enumeration).
Thus, we obtain
\[
    n \leq u(3q  + 6p^{u+1}) \leq ((\log(p))^c +1) \left(3p + 6p^{(\log(p))^c+2} \right) \leq p^{(\log(p))^c +3},
\]
where the first inequality follows from \eqref{eq:hardness}, the second follows from definition of $u$ and $p \geq q$, and the final inequality follows from $p \geq p_c$.

We apply the $\log^{1 - \varepsilon}(k)$-approximation for \ac{MinMaxDP} (using the reduction sketched in \cref{prop:reduction-MLBA}) to get a solution to the \ac{MLBA} instance whose value is at most $\log^{1 - \varepsilon}(n)$ times the optimal solution value (note that $n$ takes the role of $k$ when interpreting \ac{MLBA} as a special case of \ac{MinMaxDP}).
If the \textsc{3-Dimensional Matching} instance is feasible, the optimal solution value of \ac{MLBA} is $1$ and we obtain a solution of cost at most $\log^{1 - 1/c}(p^{(\log p)^c + 3})$.
Observe that
\begin{align*}
    \left( \log\left(p^{(\log p)^c + 3}\right) \right)^{1 - 1/c} & = \left( (\log p)^{c + 1} + 3 \log p \right)^{1 - 1/c} \\
    &\leq \left( (\log p)^{c + 1 + 1/c} \right)^{1 - 1/c} 
    = \left( \log p \right)^{c - (\frac{1}{c})^2}
    \leq (\log p)^c < u + 1,
\end{align*}
where the first inequality follows from our assumption that $(\log p)^{1/c} > 4$.
Thus, we can distinguish between instances with optimal value $1$ or $u + 1$ arising from the construction and solve \textsc{3-Dimensional Matching} in quasi-polynomial time.
\end{proof}

\section{Existence of optimal convoy routings -- Proof of \Cref{thm:ConvoyConjecture}}
\label{app:ConvoySolution}

In this section, we formally prove the convoy theorem.
\ConvoySolution*
As mentioned in~\Cref{sec:ConvoySolutions}, the proof of this theorem is algorithmic, via the two-phase \textit{uncrossing algorithm}.
In the following, we formally describe the uncrossing algorithm and give a complete proof of the theorem.

We assume we are given an optimal routing $(\bP,\blam)$ where each path $P \in \bP$ is acyclic. Recall the definition of leaders $L_a$ for $a \in \delta^-(t)$ which is the first train using arc $a$, as well as the sets $L= \bigcup_{a \in \delta^-(t)}L_a$, and $S_a$ for each $a \in A$ which is the set of trains using arc $a$ (see \Cref{sec:ConvoySolutions}).
For trains $x,y \in S_a$, we write $x \prec_a y$ ($x$ precedes $y$ on arc $a$) to denote that $x$ uses arc $a$ before $y$ (thus, $\lambda_x(a) < \lambda_y(a)$).
Similarly, we write $x \succ_a y$ ($x$ succeeds or follows $y$ on arc $a$) if $x$ uses $a$ after train $y$. Let $\bar{S}_a$ be the \emph{ordered tuple} of trains using arc $a$. Thus if $x \prec_a y$, then $x$ appears before $y$ in $\bar{S}_a$.
For a train $x \in S_a$, define $\bar{S}^x_a$ be the ordered tuple of trains that use arc $a$ \emph{before} train $x$ (and including train $x$).
For train $x$, we say arcs $a$, $a' \in \Path{x}$ are \emph{congruent} if $\bar{S}_a^x = \bar{S}_{a'}^x$, \ie\ the trains preceding $x$ on $a$ also precede $x$ on $a'$, in the same order.
We overload notation and use $a \prec_x a'$ to denote that $a$ precedes $a'$ in path $\Path{x}$. The operators $\Prec_x(a)$ and $\Succ_x(a)$ denote the arcs just before and just after arc $a$ on the path $\Path{x}$.

For a train $x$, let $\ell_x \in \delta^-(t)$ be the last arc in $\Path{x}$. An important definition for each train in our algorithm is the \emph{transition arc} $a^{tr}_x$: for train $x$, the transition arc is the last arc $a \in \Path{x}$ that is not congruent with $\ell_x$, \ie\ for which $\bar{S}^x_a \neq \bar{S}^x_{\ell_x}$. The \emph{transition node} $v^{tr}_x$ is the node at the head of the arc $a^{tr}_x$. \Cref{fig:conveyexample} depicts an example, showing a transition arc and transition node. Thus $a^{tr}_x = \Path{x} \cap \delta^-(v^{tr}_x)$. By definition, $a^{tr}_x \neq \ell_x$. Further, all arcs following $a^{tr}_x$ on $\Path{x}$ are congruent for $x$ -- the same trains precede $x$ in the same order. If $\bar{S}^x_a = \bar{S}^x_{\ell_x}$ on all arcs in $\Path{x}$, \ie\ all arcs in $\Path{x}$ are congruent for train $x$, we define $a^{tr}_x = v^{tr}_x = \bot$. In this case, we say train $x$ \emph{moves in a convoy}. Define $Q_x$ as the sub-path of $\Path{x}$ from $v^{tr}_x$ to $t$. In particular, $a^{tr}_x \not \in Q_x$, and all arcs in $Q_x$ are congruent for $x$. Note that if train $x$ is a leader, then on all arcs in $Q_x$, train $x$ is the first train.

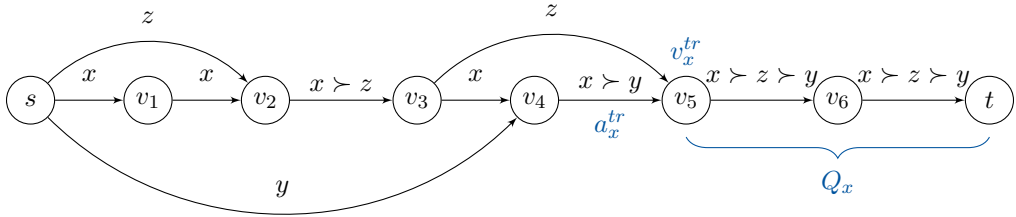
\begin{figure}[htbp]
\centering
\begin{tikzpicture}[scale=0.9, transform shape, every node/.style={inner sep=0pt,minimum size=20pt, draw, circle, align=center}, 
    edge from parent/.style={draw, thick, -{Circle[open]}, font=\small}]
    \node (s) {$s$};
    \node (v1) [right=of s] {$v_1$};
    \node (v2) [right=of v1] {$v_2$};
    \node (v3) [right=of v2, xshift=0.5cm] {$v_3$};
    \node (v4) [right=of v3] {$v_4$};
    \node (v5) [right=of v4, xshift=0.5cm] {$v_5$};
    \node (v6) [right=of v5, xshift=0.5cm] {$v_6$};
    \node (trv) [above=of v5, yshift=-1cm, draw=none, fill=none, color=rwthblue] {$v^{tr}_x$};
    \node (t) [right=of v6, xshift=0.5cm] {$t$};
    
    \draw[->] (s) -- (v1) node[draw=none,fill=none,midway,above] {$x$};
    \draw[->] (v1) -- (v2) node[draw=none,fill=none,midway,above] {$x$};
    \draw[->] (v2) -- (v3) node[draw=none,fill=none,midway,above,yshift=-0.25cm] {$x \succ z$};
    \draw[->] (v3) -- (v4) node[draw=none,fill=none,midway,above] {$x$};
    \draw[->] (v4) -- (v5) node[draw=none,fill=none,midway,above,yshift=-0.25cm] {$x \succ y$} node[fill=none,draw=none,midway,below,color=rwthblue] {$a^{tr}_x$};
    \draw[->] (v5) -- (v6) node[draw=none,fill=none,midway,above,yshift=-0.5cm] {$x \succ z \succ y$};
    \draw[->] (v6) -- (t) node[draw=none,fill=none,midway,above,yshift=-0.5cm] {$x \succ z \succ y$};

    \draw[->, bend left=45] (s) to node[draw=none, fill=none, above, inner sep=1pt] {$z$} (v2);
    \draw[->, bend right=45] (s) to node[draw=none, fill=none, above, inner sep=1pt] {$y$} (v4);
    \draw[->, bend left=45] (v3) to node[draw=none, fill=none, above, inner sep=1pt] {$z$} (v5);
    
    \draw [rwthblue,decorate,decoration={brace,amplitude=8pt,mirror,raise=1ex}] (v5.south) -- (t.south) node[draw=none, fill=none, midway,below,yshift=-0.5cm,color=rwthblue] {$Q_x$};
\end{tikzpicture}
\caption{An example of a train routing, showing the transition arc $a^{tr}_x$, the transition node $v^{tr}_x$, and the segment $Q_x$ for train $x$. In this routing, $x$ is not a leader, since $y$ and $z$ precede $i$ on the last arc. In fact, train $y$ is a leader, and $a^{tr}_y = \bot$, since it is the first train on every arc in $\Path{y}$.}
\label{fig:conveyexample}
\end{figure}

The algorithm proceeds as follows.
Given an optimal routing $(\bP,\blam)$, the objective in the first phase is to obtain another optimal routing where the set of leaders $L$ all follow arc-disjoint paths.
If $a^{tr}_x = \bot$ for all leaders $x \in L$, then each leader $x$ is the first train on every arc in its path, and thus all leaders must already be using arc-disjoint paths.
If not, let $x \in L$ be such that $a^{tr}_x \neq \bot$. The algorithm then reroutes all the trains on $a^{tr}_x$ that precede $x$, in this order, to use the path $Q_x$. Prior to the rerouting, $x$ was the first train on each arc in $Q_x$. After the rerouting, $x$ is preceded by all the trains in $\bar{S}^x_{a^{tr}_x}$. Phase 1 repeats this process --- picking a leader $x$ with $a^{tr}_x \neq \bot$, and rerouting all trains in $\bar{S}^x_{a^{tr}_x}$ onto the path $Q_x$ --- until $a^{tr}_x = \bot$ for each leader $x \in L$. Note that in this process, the new transition arc $a^{tr}_x$ for train $x$ moves closer to $s$. We use this in~\Cref{lem:uncrossingruntime} to prove that the algorithm for this phase terminates.

In Phase 2, we simply take each arc $a \in \delta^-(t)$, and re-route all trains using this arc to follow the path of the leader on this arc. We show that this gives us the necessary routing.

The proof of~\Cref{thm:ConvoyConjecture} then follows from~\Cref{lem:convoyyes} (which shows that if~\Cref{alg:uncrossing} terminates, it returns an optimal convoy routing) and~\Cref{lem:uncrossingruntime} (which shows that~\Cref{alg:uncrossing} indeed terminates in finite time).
\begin{algorithm}[h]
    \algorithmInit
    \KwIn{An optimal routing $(\bP,\blam)$}
    \KwOut{An optimal convoy routing $(\bR, \bsigma)$}
    \tcp{Phase 1: Re-route leaders to use disjoint paths}
    $L \coloneqq \{x \in [d] \mid \exists\ a \in \delta^-(t) \text{ with } x \in S_a \text{ and } \forall y \in S_a, \, x \preceq_a y\}$ \tcp*{$L$ is the set of leaders.}
    \While{$\exists\ x \in L \text{ for which } a^{tr}_x \neq \bot$} {
        \tcp{Re-route all trains preceding $x$ on $a^{tr}_x$ to use the path $Q_x$, in the same order}
        \For{$y \in \bar{S}^x_{a^{tr}_x}$}{
            $\Path{y} \coloneqq (\Path{y}[s,v_x^{tr}],\ Q_x)$ \\
            \tcp{Adapt the entry functions $\lambda_y$}
            $a \coloneqq a^{tr}_x$ \tcp*{$\lambda_y(a)$ is unchanged for arcs $a \in \Path{y}[s,v^{tr}_x]$} 
            \While{$\exists \, \Succ_x(a)$} {
                $a' \coloneqq \Succ_x(a)$,\\
                $\lambda_y(a') \coloneqq \lambda_y(a) + \tau_a$,\\
                $a \coloneqq a'$
            }
        }
        \tcp{Each train $y$ is rerouted onto $Q_x$, and follows the same order as on $a^{tr}_x$}
		Update the set of leaders $L$ \\
    }
    \tcp{Phase 2: Re-route all trains to follow their leaders}
    $\textrm{count} = 1$\\
    \For{\text{each leader} $x \in L$}{
        $R_{\textrm{count}} \coloneqq \Path{x}$, $\sigma_{\textrm{count}} \coloneqq |S_{\ell_x}|$ \tcp*{All trains using arc $\ell_x$ will use $\Path{x}$ in $(\bR,\bsigma)$} 
        $\textrm{count} \coloneqq \textrm{count}+1$\\
    }
    \Return convoy routing $(\bR,\bsigma)$
	\caption{Uncrossing algorithm to transform an optimal solution to a convoy routing}
    \label{alg:uncrossing} 
\end{algorithm}

\begin{lemma}
    If~\Cref{alg:uncrossing} terminates, it returns an optimal routing for the given instance where all trains travel in convoys.
    \label{lem:convoyyes}
\end{lemma}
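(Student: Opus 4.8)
The plan is to verify two things about the output of Algorithm~\ref{alg:uncrossing}, assuming it terminates: first, that the returned object $(\bR,\bsigma)$ is a feasible convoy routing in which all trains move in convoys; second, that its makespan does not exceed that of the input routing $(\bP,\blam)$, which is optimal by assumption, so that $(\bR,\bsigma)$ is itself optimal. Since Phase 2 only produces arc-disjoint leader paths and an assignment vector $\bsigma$ summing to $d$, the ``convoy'' structure is essentially built in; the real content is (a) that at the end of Phase~1 the leaders genuinely use pairwise arc-disjoint paths, (b) that none of the rerouting steps in either phase increases the makespan, and (c) that each intermediate routing remains feasible (respects travel times and the headway $\Delta$).

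First I would argue that Phase~1 terminates only when $a^{tr}_x = \bot$ for every leader $x \in L$, and that this termination condition forces the leaders' paths to be pairwise arc-disjoint: if $a^{tr}_x = \bot$ then every arc of $\Path{x}$ is congruent with $\ell_x$ for $x$, and since $x$ is the first train on $\ell_x$ (being a leader), $x$ is the first train on \emph{every} arc of $\Path{x}$; hence if two leaders $x, x'$ shared an arc $a$, both would have to be the first train on $a$, a contradiction. (One must also check that the set of leaders, while it may change identity as trains are rerouted, always has exactly $|\delta^-(t)|$ members with one leader per arc entering $t$, which is immediate from the definition of $L$.) Next, for feasibility and the makespan bound, I would examine a single rerouting step: when we pick leader $x$ with $a^{tr}_x \neq \bot$ and reroute each $y \in \bar S^x_{a^{tr}_x}$ onto $Q_x$, the new entry functions are defined so that train $y$ traverses $Q_x$ at constant travel speed starting from $\lambda_y(a^{tr}_x)$, which keeps $y$ at the same temporal offset relative to $x$ that it had on $a^{tr}_x$. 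Because the trains in $\bar S^x_{a^{tr}_x}$ retain their mutual temporal offsets (they all shift onto $Q_x$ in the same order with the same spacings they had on $a^{tr}_x$), no headway violation arises among them; because $x$ was the \emph{first} train on every arc of $Q_x$ before the step, the rerouted trains (which now all precede $x$ there) do not collide with any train that was already on those arcs; and because $x$'s own route and schedule are untouched and $x$ arrives last among $\{x\} \cup \bar S^x_{a^{tr}_x}$ on $\ell_x$ both before and after, no arrival time at $t$ increases. This gives feasibility and the makespan bound for Phase~1, and I would note that the arcs of $\Path{y}[s, v^{tr}_x]$ keep their old schedule, so feasibility on that prefix is inherited.

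For Phase~2, I would fix an arc $a = \ell_x \in \delta^-(t)$ for a leader $x$ and compare the old routing restricted to $S_a$ with the convoy $(\Path{x}, \sigma_x)$ where $\sigma_x = |S_a|$: in the convoy we send $|S_a|$ trains single-file along $\Path{x}$ starting immediately at $s$ with spacing exactly $\Delta$, so the first train reaches any arc of $\Path{x}$ no later than $x$ did in $(\bP,\blam)$, and the last train's arrival at $t$ is $\tau(\Path{x}) + (|S_a|-1)\Delta$; in $(\bP,\blam)$ the trains of $S_a$ all used arc $a = \ell_x$ with pairwise headway $\geq \Delta$, so the last of them entered $a$ at least $(|S_a|-1)\Delta$ after $x$ did, hence arrived at $t$ at least $(|S_a|-1)\Delta$ after $x$'s arrival $\geq$ (first convoy train's arrival at $t$) $+ (|S_a|-1)\Delta = $ last convoy train's arrival. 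Summing this comparison over all leaders shows $C_{\max}(\bR,\bsigma) \leq C_{\max}(\bP,\blam) = \OPT$. Finally, since every train uses exactly one arc of $\delta^-(t)$, the assignment partitions $[d]$, so $\sum_x \sigma_x = d$ and $(\bR,\bsigma)$ is a genuine convoy routing; combined with the makespan bound, it is optimal.

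I expect the main obstacle to be the careful bookkeeping in the Phase~1 feasibility argument: one must be precise that rerouting the block $\bar S^x_{a^{tr}_x}$ onto $Q_x$ cannot create headway conflicts with trains \emph{not} in that block that also traverse arcs of $Q_x$. The key point making this work is that $x$ being the first train on all of $Q_x$ means $Q_x$ is, in the relevant sense, ``empty ahead of $x$'', so inserting the block of trains that were all ahead of $x$ on $a^{tr}_x$ just in front of $x$ is safe --- but writing this cleanly, and confirming that the definition of the transition arc $a^{tr}_x$ (via congruence of the ordered tuples $\bar S^x_\cdot$) is exactly what guarantees it, is where the proof needs the most care. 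Termination itself is deferred to Lemma~\ref{lem:uncrossingruntime}, so it is not part of this lemma.
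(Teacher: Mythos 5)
Your proposal matches the paper's proof essentially step for step: Phase 1 iterations preserve feasibility and makespan because $x$ is the first train on every arc of $Q_x$ and the rerouted block keeps its mutual offsets, and Phase 2 does not increase the makespan because the last original train on $\ell_x$ must enter $t$ at least $(|S_a|-1)\Delta$ after $x$, which is no earlier than the last convoy train's arrival $\tau(\Path{x})+(|S_a|-1)\Delta$. One small inaccuracy: the set of leaders need not have exactly $|\delta^-(t)|$ members (an arc into $t$ may be used by no train, and the set of used arcs can shrink during Phase 1), but this parenthetical is not used anywhere in your argument, so the proof stands.
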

\begin{proof}
    We will show that the routing obtained after each phase remains feasible, and that the makespan does not increase. It follows that if the algorithm terminates, it is an optimal routing where the trains travel in convoys.

    Consider first a single iteration of the while loop in Phase 1. Let $x \in L$ be the train picked by the algorithm.
    Then prior to re-routing the trains preceding $x$ in the for loop, train $x$ was the first train on each arc in $Q_x$. In the re-routing step, each train $y$ that preceded $x$ on $a^{tr}_x$ is rerouted onto the path $Q_x$, in the same order. Further for any trains $y$, $y'$ that are rerouted and any arc $a \in Q_x$, $\lambda_y(a) - \lambda_{y'}(a) = \lambda_y(a^{tr}_x) - \lambda_{y'}(a^{tr}_x)$. The algorithm thus maintains the gap between rerouted trains. Clearly, this is a feasible routing. Further, all the rerouted trains arrive at $t$ before train $x$, and hence the makespan does not increase in Phase 1.

    In Phase 2, consider the makespan prior to the phase. On an arc $a \in \delta^-(t)$, if $x$ is the leader, then $x$ arrives at time at least $\sum_{a \in \Path{x}} \tau_a$. Further since $|S_a|$ trains use this arc to enter $t$, the last train enters $t$ at time at least $\Delta \cdot (|S_a|-1)$ after train $x$. Thus the last train enters $t$ at time at least $\sum_{a \in \Path{x}} \tau_a + \Delta \cdot (|S_a|-1)$.
    
    All trains in $S_a$ are rerouted to use the path $\Path{x}$, and leave $s$ at times $0, \Delta, \dots, \Delta \cdot (|S_a|-1)$. Clearly, all of these trains arrive at $t$ by time $\sum_{a \in \Path{x}} \tau_a + \Delta \cdot (|S_a|-1)$. Hence, the makespan does not increase in Phase 2. Since prior to Phase 2 the leaders use disjoint paths, after rerouting the trains travel in convoys, completing the proof.
\end{proof}

It remains to show that Phase 1 in the algorithm indeed terminates. To show this, we first need two propositions on the paths used by trains. Throughout the remainder of this section, we fix a particular iteration of the while loop in which train $x$ is picked. The routing $(\bP,\blam)$ and associated variables $Q_x$, $e^{tr}_x$, $v^{tr}_x$ etc. refer to the routing before the trains preceding $x$ on $a^{tr}_x$ are rerouted, \ie\ before the execution of the for loop. We will use the superscript ``$\new$'' to refer to quantities following the termination of the for loop, after all the trains preceding $x$ on $a^{tr}_x$ are rerouted onto $Q_x$.

\begin{proposition}
    \label{prop:uncrossing_circle}
    Let $y$ be the train picked for rerouting in the for loop in Phase 1. Then $\Path{y}[s,v^{tr}_x]$ and $Q_x$ are arc-disjoint.    
\end{proposition}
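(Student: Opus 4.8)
The plan is to derive a contradiction from the assumption that some arc $a$ lies in both $\Path{y}[s,v^{tr}_x]$ and $Q_x$. Recall that $y$ is one of the trains preceding $x$ on the transition arc $a^{tr}_x$, i.e.\ $y \in \bar{S}^x_{a^{tr}_x}$, and that $Q_x$ is the subpath of $\Path{x}$ from $v^{tr}_x$ to $t$, so every arc of $Q_x$ is used by $x$ and $x$ is the first train on each such arc (since $x$ is a leader). In particular, if $a \in Q_x$ then $a \in \Path{x}$ and $x \prec_a y$.

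First I would observe that, since the original routing $(\bP,\blam)$ is acyclic, the two subpaths $\Path{y}[s,v^{tr}_x]$ and $Q_x$ cannot share an arc if $Q_x$ is a suffix of $\Path{y}$ as well — but more directly, I would argue via the entry times. If $a \in \Path{y}[s,v^{tr}_x]$, then train $y$ traverses $a$ strictly before it reaches $v^{tr}_x$, hence before it would traverse $a^{tr}_x$ on $\Path{x}$'s continuation; concretely, the travel-time consistency constraint $\lambda_y(a) + (\text{travel time from }a\text{ to }a^{tr}_x\text{ along }\Path{y}) \le \lambda_y(a^{tr}_x)$ forces $\lambda_y(a) < \lambda_y(a^{tr}_x)$ (using that no arc has a cycle and at least the arc $a^{tr}_x$ itself lies strictly after $a$ on $\Path{y}$). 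On the other hand, if $a \in Q_x$, then $a$ lies on $\Path{x}$ strictly after $a^{tr}_x$, so by travel-time consistency for $x$ we have $\lambda_x(a^{tr}_x) < \lambda_x(a)$; combined with $x \prec_{a^{tr}_x} y$ (i.e.\ $\lambda_x(a^{tr}_x) < \lambda_y(a^{tr}_x)$) this does not immediately close things, so the cleaner route is to compare $y$ against $x$ on the shared arc $a$.

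The key step is then: on the shared arc $a \in Q_x$ we know $x$ is the first train, so $\lambda_x(a) < \lambda_y(a)$. I would derive a contradiction with the fact that $y \prec_{a^{tr}_x} x$ (recall $y$ precedes $x$ on $a^{tr}_x$), i.e.\ $\lambda_y(a^{tr}_x) < \lambda_x(a^{tr}_x)$. Since $a \in \Path{y}[s, v^{tr}_x]$, arc $a$ is visited by $y$ before $a^{tr}_x$ is visited by $y$ along $\Path{y}$ is \emph{false} in general — $a^{tr}_x$ need not lie on $\Path{y}$ at all except that $v^{tr}_x \in \Path{y}$. Here is where the argument must be made carefully: $\Path{y}[s,v^{tr}_x]$ ends at $v^{tr}_x$, which is also the head of $a^{tr}_x$ and the start of $Q_x$. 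So if $a \in \Path{y}[s,v^{tr}_x]$ then $y$ reaches $v^{tr}_x$ only after traversing $a$, and since $y$ uses $a^{tr}_x$ (entering $v^{tr}_x$ via $a^{tr}_x$) we get $\lambda_y(a) < \lambda_y(a^{tr}_x) < \lambda_x(a^{tr}_x) < \lambda_x(a)$, where the middle inequality is $y \prec_{a^{tr}_x} x$ and the last is travel-time consistency for $x$ along $Q_x$ (as $a$ comes after $a^{tr}_x$ on $\Path{x}$ and travel times are nonnegative, with $a \ne a^{tr}_x$ forcing strictness only if some arc on the subpath has positive travel time — if all travel times are $0$ one argues with $\le$ and uses the headway/ordering to still separate). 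Thus $\lambda_y(a) < \lambda_x(a)$, contradicting that $x$ is the first train on $a \in Q_x$.

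The main obstacle I anticipate is the degenerate case of zero travel times, where several of the strict inequalities above become non-strict; there one must lean on the headway constraint $|\lambda_y(a) - \lambda_x(a)| \ge \Delta$ on shared arcs together with the fixed ordering on $a^{tr}_x$ to recover a genuine separation, or alternatively argue purely combinatorially that $Q_x$ together with $\Path{y}[s,v^{tr}_x]$ would form a closed walk in the (acyclic) routing, contradicting acyclicity of $\Path{y}$ once one notes $Q_x$ would be appended to $\Path{y}[s,v^{tr}_x]$ in the rerouting and an arc of $Q_x$ appearing earlier in $\Path{y}$ would create a cycle in the new path $\Path{y}^{\new} = (\Path{y}[s,v^{tr}_x], Q_x)$. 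The cleanest writeup is probably this last, purely structural observation: a repeated arc in $\Path{y}^{\new}$ means $\Path{y}^{\new}$ is non-simple, but one should instead show directly that no arc of $Q_x$ can precede $v^{tr}_x$ on $\Path{y}$, which is exactly the statement, so the acyclicity of the \emph{resulting} path is the consequence rather than the hypothesis — hence the time-based argument above is the one to commit to, handling zero-travel-time arcs via the headway inequality.
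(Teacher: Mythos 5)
Your core argument — deriving $\lambda_y(a) < \lambda_x(a)$ from the chain $\lambda_y(a) \le \lambda_y(a^{tr}_x) < \lambda_x(a^{tr}_x) \le \lambda_x(a)$ and then contradicting the fact that $x$ is first on every arc of $Q_x$ — is exactly the paper's proof. The zero-travel-time worry you flag at the end never actually arises: the two outer inequalities need only be non-strict (travel-time consistency along $\Path{y}$ and $\Path{x}$, noting $a^{tr}_x$ is the last arc of $\Path{y}[s,v^{tr}_x]$ and the arc preceding $Q_x$ on $\Path{x}$), and the single strict inequality in the middle comes from $y \prec_{a^{tr}_x} x$, which already suffices.
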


\begin{proof}
    Suppose for a contradiction that some arc $a \in \Path{y}[s,v^{tr}_x] \cap Q_x$.
    Then since $a \in \Path{y}[s,v^{tr}_x]$, we know that $\lambda_y(a) \leq \lambda_y(a^{tr}_x)$, \ie\ train $y$ uses arc $a$ before it uses arc $a^{tr}_x$.
    Since $y$ precedes $x$ on arc $a^{tr}_x$, $\lambda_y(a^{tr}_x) < \lambda_x(a^{tr}_x)$.
    Further, since $a \in Q_x$, $\lambda_x(a^{tr}_x) \leq \lambda_x(a)$. Putting these together, it must be true that $\lambda_y(a) < \lambda_x(a)$. On every arc in $Q_x$, however, train $x$ is the first train, and hence $\lambda_x(a) < \lambda_y(a)$, giving a contradiction.
\end{proof}

\begin{proposition}
    \label{prop:conflict_free}
    For trains $y$ and $z$, if $Q_y \cap Q_z \neq \emptyset$, then either $Q_y \subseteq Q_z$ or $Q_y \supseteq Q_z$. Further, if $y$ is a leader and $Q_y \cap Q_z \neq \emptyset$, then $Q_y \supseteq Q_z$. 
\end{proposition}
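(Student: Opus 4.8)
The plan is to exploit the congruence structure of the $Q$-paths together with the fact that the routing is acyclic, so that all $\Path{x}$ are simple paths. If $Q_y\cap Q_z=\emptyset$ there is nothing to prove (and the case $y=z$ is trivial), so assume $y\neq z$ and fix an arc $a\in Q_y\cap Q_z$. Since both trains use $a$, the headway constraint gives $\lambda_y(a)\neq\lambda_z(a)$, so after possibly swapping the names $y$ and $z$ I may assume $y\prec_a z$; note that if $y$ is a leader this already holds without swapping, since $y$ is the first train on $a\in Q_y$.

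The first key step is to propagate this order along all of $Q_z$. Because $a\in Q_z$, every arc $a'\in Q_z$ is congruent with $a$ for $z$, i.e.\ $\bar{S}^z_{a'}=\bar{S}^z_a$; since $y$ occurs in $\bar{S}^z_a$ strictly before $z$, the same is true in $\bar{S}^z_{a'}$, and hence $y$ traverses $a'$ with $y\prec_{a'}z$. So $y$ uses \emph{every} arc of $Q_z$. The second step turns this into a structural statement: since $\Path{y}$ is a simple $s$-$t$-path containing all arcs of $Q_z$, walking along $Q_z=(b_1,\dots,b_r)$ shows that $b_r$ (which enters $t$) must be the last arc $\ell_y$ of $\Path{y}$, and that once $\Path{y}$ traverses $b_i$ it sits at the tail of $b_{i+1}$ and cannot leave and re-enter that node, so $b_{i+1}$ is the very next arc of $\Path{y}$. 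Thus $Q_z$ is a suffix of $\Path{y}$. As $Q_y$ is a suffix of $\Path{y}$ by definition, the two are nested, which proves the first claim. (In the symmetric case $z\prec_a y$ one obtains instead that $Q_y$ and $Q_z$ are both suffixes of $\Path{z}$.)

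For the refinement, let $y$ be a leader; then $y\prec_a z$ and both steps above apply. Suppose for contradiction that $Q_y\subsetneq Q_z$. (If $a^{tr}_y=\bot$ then $Q_y=\Path{y}$ and this cannot happen, so assume $a^{tr}_y\neq\bot$.) Since $Q_y$ and $Q_z$ are nested suffixes of $\Path{y}$ with $Q_y$ the shorter one, the node $v^{tr}_y$ lies strictly inside $Q_z$, so the arc $a^{tr}_y$ entering it belongs to $Q_z$. I would then show that $y$ is in fact the \emph{first} train on every arc of $Q_z$: $y$ is first on $a$ (as $y$ is a leader and $a\in Q_y$), hence $y$ is the first entry of $\bar{S}^z_a=\bar{S}^z_{a'}$ for every $a'\in Q_z$; and any train $w$ with $\lambda_w(a')<\lambda_y(a')$ would, by the first step, satisfy $\lambda_w(a')<\lambda_z(a')$ and so appear in $\bar{S}^z_{a'}$ before $y$ --- impossible. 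In particular $\bar{S}^y_{a^{tr}_y}=(y)$; but also $\bar{S}^y_{\ell_y}=(y)$ since $y$ is a leader, so $a^{tr}_y$ is congruent with $\ell_y$ for $y$, contradicting the choice of $a^{tr}_y$ as the last non-congruent arc on $\Path{y}$. Hence $Q_z\subseteq Q_y$.

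The main obstacle is the second step: converting ``$y$ uses all arcs of $Q_z$'' into ``$Q_z$ is a contiguous suffix of $\Path{y}$'', which is exactly where acyclicity is essential and which is what makes the two $Q$-paths comparable in the first place. For the leader part, the only subtlety is the observation that congruence upgrades ``$y$ precedes $z$ on $Q_z$'' to ``$y$ is first on $Q_z$'', so that the transition arc $a^{tr}_y$ cannot lie inside $Q_z$; everything else is bookkeeping.
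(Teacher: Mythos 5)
Your proof is correct. The first claim is handled essentially as in the paper: from $y\prec_a z$ on $a\in Q_z$ and the congruence of $Q_z$ for $z$, conclude $y$ precedes $z$ on all of $Q_z$, hence $Q_z\subseteq\Path{y}$, and then use simplicity of $\Path{y}$ to upgrade arc-containment to ``$Q_z$ is a suffix of $\Path{y}$'' --- your spelled-out acyclicity step is exactly the content the paper compresses into ``since $t$ is the last node in $Q_z$, $Q_z$ is a suffix of $\Path{y}$.'' Since $Q_y$ is also a suffix of $\Path{y}$, the two are nested.

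For the leader refinement you take a genuinely different route. The paper actually proves a stronger intermediate statement: under the WLOG $y\prec_a z$, one always has $Q_z\subseteq Q_y$, even when $y$ is not a leader. The mechanism there is a congruence-propagation argument: if $a^{tr}_y$ and $a'=\Succ_y(a^{tr}_y)$ both lay in $Q_z$, then since $\bar S^y_b$ is a prefix of $\bar S^z_b$ on every $b$ where $y\prec_b z$, non-congruence of $a^{tr}_y$ and $a'$ for $y$ forces non-congruence for $z$, contradicting the definition of $Q_z$; hence $a^{tr}_y\notin Q_z$, which is exactly $Q_z\subseteq Q_y$. Both claims of the proposition then drop out at once. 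You instead only establish nesting in general, and for the leader case argue separately: you show $y$ must be the first train on every arc of $Q_z$ (using that $y$ is first on $a$ and that $\bar S^z_{a'}=\bar S^z_a$ for $a'\in Q_z$), so if $a^{tr}_y\in Q_z$ then $\bar S^y_{a^{tr}_y}=(y)=\bar S^y_{\ell_y}$, contradicting the choice of $a^{tr}_y$. This is valid. The paper's version buys a unified treatment and a stronger conclusion (useful, incidentally, in the Case 1/Case 2 analysis of Lemma~\ref{lem:uncrossingruntime}); your version is a clean, self-contained argument that uses the leader property more directly but proves only what the proposition states.
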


\begin{proof}
    Suppose arc $a \in Q_y \cap Q_z$, and assume \WLOG that train $y$ precedes train $z$ on arc $a$, \ie\ $\lambda_y(a) < \lambda_z(a)$.
    We will then show that $Q_z \subseteq Q_y$.
    By definition of $Q_z$, and since $y$ precedes $z$ on $a$, train $y$ must precede $z$ on every arc in $Q_z$, and hence $Q_z \subseteq \Path{y}$. In fact, since $t$ is the last node in $Q_z$, $Q_z$ is a suffix of $\Path{y}$. Since $Q_y$, $Q_z$ are both suffixes of $\Path{y}$, either $Q_y \subset Q_z$ if $y$'s transition arc $a^{tr}_y \in Q_z$, or $Q_z \subseteq Q_y$ otherwise. We will show that $a^{tr}_y \not \in Q_z$, and hence $Q_z \subseteq Q_y$.

    Let $a' = \Succ_y(a^{tr}_y) $.
    Then $a^{tr}_y$ and $a'$ are not congruent for $y$: either the order or the set of trains preceding $y$ differs on $a^{tr}_y$ and $a'$.
    Suppose for a contradiction that $a^{tr}_y \in Q_z$, then $a' \in Q_z$ as well.
    But then since $y$ precedes $z$ on $Q_z$, that implies that also the order or the set of trains before $z$ changes, \ie $a_y^{tr}$ and $a'$ are not congruent for $z$. However, $a_y^{tr}, a' \in Q_z$, which contradicts the definition of $Q_z$.
    Hence, $y$'s transition arc $a^{tr}_y \not \in  Q_z$, and hence $Q_z \subseteq Q_y$.

    Note that if $y$ is a leader, then $y$ precedes $z$ on $a \in Q_y \cap Q_z$. This means the assumption just taken must in fact be true in this case, and hence $Q_z \subseteq Q_y$.
\end{proof}

We now show the following theorem mainly to prove that  Algorithm~\ref{alg:uncrossing} terminates. The given runtime is not polynomial. However, we just use the existence result not the actual algorithm in the following. 

\begin{lemma}
    \label{lem:uncrossingruntime}
    Algorithm~\ref{alg:uncrossing} terminates in $O(mnd^2)$ time, where $d$ is the number of trains.
\end{lemma}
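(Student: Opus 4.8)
The plan is to argue termination via a potential function on the current routing, then bound the number of iterations and the cost of each iteration to get the $O(mnd^2)$ bound. As sketched in \Cref{sec:ConvoySolutions}, I would extend the transition-arc definition to \emph{all} trains: for an arbitrary train $y$, let $a^{tr}_y$ be the last arc on $\Path{y}$ that is not congruent (for $y$) with $\ell_y$, and set $a^{tr}_y = \bot$ if all arcs of $\Path{y}$ are congruent for $y$. Define the potential of train $y$ as the number of arcs of $\Path{y}$ strictly before $a^{tr}_y$ (equivalently, the number of arcs on $\Path{y}[s, v^{tr}_y]$ minus one), with the convention that this is $0$ when $a^{tr}_y = \bot$. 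Let $\Phi$ be the sum of these potentials over all $d$ trains. Since each path is acyclic with at most $m$ arcs, we have $0 \le \Phi \le md$ at all times, so it suffices to show $\Phi$ strictly decreases in every iteration of the while loop in Phase~1.

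First I would analyze the train $x$ picked in the iteration. Before the for loop, $a^{tr}_x \neq \bot$; I claim that after all trains of $\bar S^x_{a^{tr}_x}$ are rerouted onto $Q_x$, the new transition arc $a^{tr,\new}_x$ lies strictly closer to $s$, so $x$'s potential strictly decreases. The reason: on $Q_x$ train $x$ was previously the unique first train, and after rerouting every arc of $Q_x$ is preceded (for $x$) by exactly the trains of $\bar S^x_{a^{tr}_x}$ in the same fixed order (the gaps on $a^{tr}_x$ are preserved, by the entry-function update in the inner while loop). Hence every arc of $Q_x$ is now congruent with $\ell_x$ for $x$, so $a^{tr,\new}_x$ must lie on $\Path{x}[s, v^{tr}_x]$, i.e.\ at least one arc earlier; this gives statement (i). For statement (ii)---that no other train's potential increases---I would use \Cref{prop:uncrossing_circle} and \Cref{prop:conflict_free}: a rerouted train $y \in \bar S^x_{a^{tr}_x}$ gets new path $(\Path{y}[s,v^{tr}_x], Q_x)$, and by \Cref{prop:uncrossing_circle} these two pieces are arc-disjoint, so no cycle is created; moreover on $Q_x$ train $y$ keeps exactly the trains of $\bar S^x_{a^{tr}_x}$ that preceded it, in fixed order, so all of $Q_x$ is congruent for $y$ and $y$'s new transition arc is at $a^{tr}_x$ or earlier---its potential does not exceed its old potential (which was at least the length of $\Path{y}[s, v^{tr}_x]$ minus one, since $a^{tr}_x$ was a conflict arc for $y$ too, being a conflict arc for the leader $x$ that $y$ precedes). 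For non-rerouted trains, one checks via \Cref{prop:conflict_free} that the segments $Q_{\cdot}$ form a laminar family so the rerouting of $x$'s predecessors does not alter which trains precede a non-rerouted train, in what order, on any arc it still uses. Together (i) and (ii) give $\Phi^{\new} < \Phi$.

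Finally I would bound the running time. The number of while-loop iterations is at most $md$ by the potential argument (each decreases $\Phi$ by at least one, and $\Phi \le md$). Within one iteration, the for loop reroutes at most $d$ trains, and for each the path update plus the inner while loop over arcs of $Q_x$ costs $O(m)$; recomputing congruence/transition arcs and the leader set can be done in $O(md)$ time by scanning all arcs and the at most $d$ trains on each. So each iteration costs $O(md + dm) = O(md)$, giving total time $O(md \cdot md) = O(m^2 d^2)$; a slightly more careful accounting that charges the $O(n)$ nodes per path (rather than $O(m)$ arcs) for the rerouting bookkeeping yields the stated $O(mnd^2)$. Phase~2 trivially runs in $O(md)$. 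The main obstacle I anticipate is part (ii)---carefully ruling out that rerouting $x$'s predecessors onto $Q_x$ could \emph{create} a new, later conflict for some third train and thereby push its transition arc toward the sink; this is exactly where the laminarity of the $Q_{\cdot}$ segments (\Cref{prop:conflict_free}) and the arc-disjointness of \Cref{prop:uncrossing_circle} must be invoked with care, together with the observation that $x$ being a leader forces $Q_x$ to be maximal among overlapping segments.
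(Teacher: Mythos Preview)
Your high-level strategy matches the paper's: the same potential $\Phi = \sum_y |\Path{y}[s,v^{tr}_y]|$ (up to an off-by-one), the same claims (i) and (ii), and the same running-time arithmetic. Claim (i) is fine. However, both halves of your argument for (ii) contain a step that is actually false, not merely sketchy.

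\textbf{Rerouted trains.} You assert that the old potential of a rerouted train $y$ ``was at least the length of $\Path{y}[s,v^{tr}_x]$ minus one, since $a^{tr}_x$ was a conflict arc for $y$ too.'' This is not true. The arc $a^{tr}_x$ need not be an arc where $y$'s own congruence breaks: if $y$ leaves the path of $x$ immediately after $a^{tr}_x$ (so $\Succ_y(a^{tr}_x) \notin Q_x$) and $y$ is first on both arcs, then $a^{tr}_x$ is congruent for $y$ and $a^{tr}_y$ can lie strictly earlier---exactly the situation drawn in \Cref{fig:reroutedtrain}. In that case your bound ``new potential $\le$ position of $a^{tr}_x$'' does not dominate the old potential. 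The paper closes this gap with an extra argument: when $a^{tr}_y \prec_y a^{tr}_x$, one shows that every arc on the segment of $\Path{y}$ between $v^{tr}_y$ and $v^{tr}_x$ remains congruent for $y$ after the rerouting (using \Cref{prop:uncrossing_circle} to rule out any rerouted train re-entering that segment), so that $a^{tr,\new}_y \preceq_y a^{tr}_y$.

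\textbf{Non-rerouted trains.} You claim the rerouting ``does not alter which trains precede a non-rerouted train, in what order, on any arc it still uses.'' This is also false. In the case $Q_z \subseteq Q_x$, trains that were not on $Q_z$ before can be moved onto $Q_x \supseteq Q_z$ and become new predecessors of $z$; in the case $Q_z \cap Q_x = \emptyset$, rerouted trains that used to precede $z$ on $Q_z$ are removed. What one must show is not that $\bar S^z_a$ is fixed, but that it changes \emph{uniformly} across all $a \in Q_z$, so the arcs of $Q_z$ remain mutually congruent for $z$. The paper does this by the two-case analysis you allude to, but the content of each case is precisely establishing this uniformity, not the invariance you state. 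Laminarity from \Cref{prop:conflict_free} gives you the case split; it does not by itself give the conclusion.
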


\begin{proof}
    Clearly, Phase 2 terminates in $|L|$ iterations. Each leader is the first train on some arc $a \in \delta^-(t)$, and hence $|L| \le m$. 

    To show the bound for the while loop in Phase 1, we claim that in each iteration of the while loop, if $x$ is the train picked, the number of arcs in the path $\Path{x}[s,v^{tr}_x]$ strictly decreases.
    Further, for any other train $y$, the number of arcs in $\Path{y}[s,v^{tr}_y]$ does not increase.
    Since there are $d$ trains and each path has length at most $n-1$ (since we start with an acylic optimal routing), the bound follows.

    Consider first the leader $x$, chosen in an iteration of the while loop. Then in the while loop, all trains preceding $x$ on $a^{tr}_x$ are re-routed to use $Q_x$, in the same order. Thus $a^{tr}_x$ is now congruent for $x$ with the edges in $Q_x$, and hence the new transition arc $a^{tr,\new}_x \prec_x a^{tr}_x$. This shows the claim for train $x$.

    Next, consider any train $y$ that is rerouted in the for loop. By the algorithm, prior to the for loop, all trains that precede $y$ on $a^{tr}_x$, will also precede $y$ in the same order on all arcs in $Q_x$ when the for loop terminates, \ie\ arcs $a^{tr}_x$ and $Q_x$ are congruent for $y$ when the for loop terminates.
    Hence, $a^{tr,\new}_y \prec_y a^{tr}_x$.
    If $a^{tr}_x \preceq_y a^{tr}_y$, then clearly the claim holds for $y$ as well.
    So assume $a^{tr}_y \prec_y a^{tr}_x$.
    Thus, the arc $a^{tr}_x \in Q_y$.
    \Cref{fig:reroutedtrain} depicts an example of this case. 

    Thus after the for loop, $a^{tr}_x$ and all arcs on $\Path{y,\new}$ after $a^{tr}_x$, are congruent for train $y$. We need to show this is true for the segment between $a^{tr}_y$ and $a^{tr}_x$ as well, \ie\ all arcs in this segment are congruent with $a^{tr}_x$ for $y$. For this, let $a$ be any arc such that $a^{tr}_y \prec_y a \preceq_y a^{tr}_x$. We claim that after the for loop,  arcs $a$ and $a^{tr}_x$ are congruent for train $y$, \ie
    \begin{align} \label{eqn:rerouted}
        \bar{S}^{y,\new}_a = \bar{S}^{y,\new}_{a^{tr}_x} \, .
    \end{align}

    If \eqref{eqn:rerouted} holds, then after the for loop, all arcs following $a^{tr}_y$ are congruent for $y$, and hence $a^{tr,\new}_y \preceq_y a^{tr}_y$.

    To see that \eqref{eqn:rerouted} holds, note that the sequence of trains on $a^{tr}_x$ is unchanged, and in particular, $\bar{S}^{y,\new}_{a^{tr}_x} =  \bar{S}^{y}_{a^{tr}_x}$.
    In turn, since both $a$ and $a^{tr}_x$ are arcs in $Q_y$, these arcs are congruent for $y$.
    Thus $\bar{S}^{y,\new}_{a^{tr}_x} = \bar{S}^y_a$.
    To prove~\Cref{eqn:rerouted}, we thus need to show that $\bar{S}^{y,\new}_a = \bar{S}^y_a$.

    Any train $z$ that precedes $y$ on $a$, precedes $y$ on $a^{tr}_x$ as well.
    Hence, it gets rerouted.
    By \Cref{prop:uncrossing_circle}, $\Path{z}[s,v^{tr}_x] \cap Q_x = \emptyset$, so $\Path{z}[s, v^{tr}_x]$ is not changed.
    Thus, these trains continue to precede $y$ on $a$ in the same order.

    Since $a \not \in Q_x$, and trains only get re-routed onto $Q_x$, no train that previously did not use $a$, uses arc $a$ after the rerouting. Thus, on arc $a$, $\bar{S}^{y,\new}_a = \bar{S}^y_a$, completing the proof of \eqref{eqn:rerouted} and thus $a^{tr,\new}_y \preceq_y a^{tr}_y$.

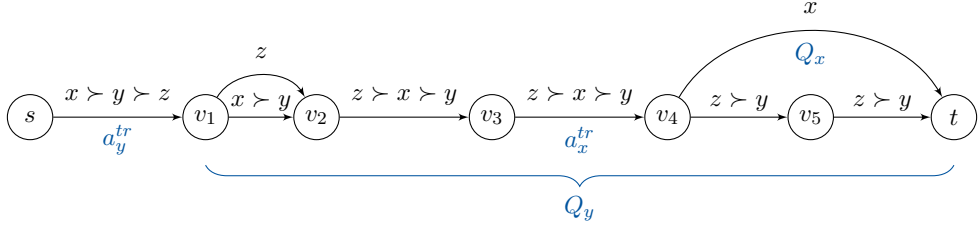
\begin{figure}[t]
\centering
\begin{tikzpicture}[scale=0.85, transform shape,
    every node/.style={draw, circle, align=center,inner sep=0pt,minimum size=20pt}, 
    edge from parent/.style={draw, thick, -{Circle[open]}, font=\small}]
    \node (s) {$s$};
    \node (v1) [right=of s, xshift=1cm] {$v_1$};
    \node (v2) [right=of v1] {$v_2$};
    \node (v3) [right=of v2, xshift=1cm] {$v_3$};
    \node (v4) [right=of v3, xshift=1cm] {$v_4$};
    \node (v5) [right=of v4, xshift=0.5cm] {$v_5$};
    \node (t) [right=of v5, xshift=0.5cm] {$t$};

    \draw[->] (s) -- (v1) node[draw=none,fill=none,midway,above,yshift=-0.5cm] {$x \succ y \succ z$}  node[fill=none,draw=none,midway,below,color=rwthblue] {$a^{tr}_y$};
    \draw[->] (v1) -- (v2) node[draw=none,fill=none,midway,above,yshift=-0.25cm] {$x \succ y$};
    \draw[->] (v2) -- (v3) node[draw=none,fill=none,midway,above,yshift=-0.5cm] {$z \succ x \succ y$};
    \draw[->] (v3) -- (v4) node[draw=none,fill=none,midway,above,yshift=-0.5cm] {$z \succ x \succ y$} node[fill=none,draw=none,midway,below,color=rwthblue] {$a^{tr}_x$};
    \draw[->] (v4) -- (v5) node[draw=none,fill=none,midway,above,yshift=-0.25cm] {$z \succ y$} ;
    \draw[->] (v5) -- (t) node[draw=none,fill=none,midway,above,yshift=-0.25cm] {$z \succ y$};

    \draw[->, bend left=60] (v1) to node[draw=none, fill=none, above, inner sep=1pt] {$z$} (v2);
    \draw[->, bend left=60] (v4) to node[draw=none, fill=none, above, inner sep=1pt] {$x$} node[draw=none, fill=none, below, inner sep=1pt,color=rwthblue] {$Q_x$} (t);
    
    \draw [rwthblue,decorate,decoration={brace,amplitude=8pt,mirror,raise=2ex}] (v1.south) -- (t.south) node[draw=none, fill=none, midway,below,yshift=-0.7cm,color=rwthblue] {$Q_y$};
\end{tikzpicture}
\caption{An example where $a^{tr}_y \prec_y a^{tr}_x$ for a rerouted train $y$.}
\label{fig:reroutedtrain}
\end{figure}

Finally, consider a train $z \neq x$ that is not rerouted in the for loop. Note that $x$ is a leader in the routing before the for loop, and hence by~\Cref{prop:conflict_free}, $Q_z$ and $Q_x$ are either arc-disjoint, or $Q_z \subseteq Q_x$. We consider these two cases separately. \Cref{fig:reroutedtraink1} depicts the first case, where $Q_z \subseteq Q_x$.

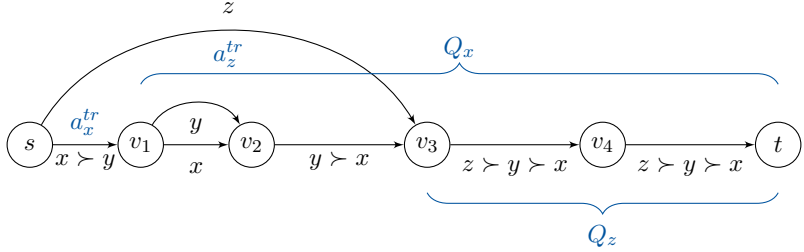
\begin{figure}[t]
\centering
\begin{tikzpicture}[scale=0.85, transform shape, every node/.style={draw, circle, align=center,inner sep=0pt,minimum size=20pt}, 
    edge from parent/.style={draw, thick, -{Circle[open]}, font=\small}]

    \node (s) {$s$};
    \node (v1) [right=of s] {$v_1$};
    \node (v2) [right=of v1] {$v_2$};
    \node (v3) [right=of v2, xshift=1cm] {$v_3$};
    \node (v4) [right=of v3, xshift=1cm] {$v_4$};
    \node (t) [right=of v4, xshift=1cm] {$t$};

    \draw[->] (s) -- (v1) node[draw=none,fill=none,midway,below,yshift=0.25cm] {$x \succ y$} node[fill=none,draw=none,midway,above,color=rwthblue] {$a^{tr}_x$};
    \draw[->] (v1) -- (v2) node[draw=none,fill=none,midway,below] {$x$} ;
    \draw[->] (v2) -- (v3) node[draw=none,fill=none,midway,below,yshift=0.25cm] {$y \succ x$};
    \draw[->] (v3) -- (v4) node[draw=none,fill=none,midway,below,yshift=0.5cm] {$z \succ y \succ x$};
    \draw[->] (v4) -- (t) node[draw=none,fill=none,midway,below,yshift=0.5cm] {$z \succ y \succ x$} ;

    \draw[->, bend left=60] (s) to node[draw=none, fill=none, above, inner sep=1pt] {$z$} node[draw=none, fill=none, below, inner sep=1pt,color=rwthblue] {$a^{tr}_z$} (v3);
    \draw[->, bend left=60] (v1) to node[draw=none, fill=none, below, inner sep=1pt] {$y$} (v2);
    
    \draw [rwthblue,decorate,decoration={brace,amplitude=8pt,mirror,raise=2ex}] (v3.south) -- (t.south) node[draw=none, fill=none, midway,below,yshift=-0.7cm,color=rwthblue] {$Q_z$};
    \draw [rwthblue,decorate,decoration={brace,amplitude=8pt,raise=3ex}] (v1.north) -- (t.north) node[draw=none, fill=none, midway,above,color=rwthblue,yshift=0.8cm] {$Q_x$};
\end{tikzpicture}
\caption{An example where $Q_z \subseteq Q_x$. Since $y$ precedes $x$ on $a^{tr}_x$, after rerouting, the order will be $z \succ x \succ y$ on both arcs in $Q_z$.}
\label{fig:reroutedtraink1}
\end{figure}

\begin{itemize}
    \item Case 1: $Q_z \subseteq Q_x$. Then prior to the rerouting, $x$ was the first train on all arcs in $Q_z$.
    Let $a \in Q_z$.
    Then $\bar{S}^z_a$ is the ordered tuple of trains preceding $z$ on $a$ (including $z$).
    Note that $x \in \bar{S}_a^z$. Some of these trains may precede train $x$ on arc $a^{tr}_x$, and hence may be rerouted. Let $T$ be the set of trains that are rerouted in the for loop. Following the rerouting, the rerouted trains proceed in the same order on all arcs of $Q_z$, and precede $x$. The trains in $\bar{S}_a^z \setminus T$ are not rerouted, and hence maintain the same order as before the rerouting. Train $x$ precedes the other trains in $\bar{S}_a^z \setminus T$, and is preceded by the rerouted trains. Overall, the arcs of $Q_z$ remain congruent for train $z$ after rerouting, and hence $a^{tr,\new}_z \prec_z a^{tr}_z$.
    
    \item Case 2: $Q_x$ and $Q_z$ are arc-disjoint. Again, let $a \in Q_z$, and $\bar{S}^z_a$ is the ordered tuple of trains preceding $z$ on $a$ (including train $z$).
    Some of these trains may precede train $x$ on arc $a^{tr}_x$, and hence may be rerouted. Let $T$ be the set of trains that are rerouted in the for loop. We will show that on $Q_z$, following the rerouting in the for loop, exactly the remaining trains $\bar{S}^z_a \setminus T$ precede $z$, and in the same order as prior to the rerouting. Thus, again, $a^{tr,\new}_z \prec_z a^{tr}_z$.
    
    Firstly, since all re-routed trains use arcs in $Q_x$ after the rerouting, there are no new trains on the arcs of $Q_z$ after the rerouting. Thus we just need to show that the trains in $T$ that preceded $z$ on $Q_z$, do not use any arc in $Q_z$ after rerouting. 

    Note first that $x$'s transition arc $a^{tr}_x \not \in Q_z$. To see this, assume for a contradiction that $a^{tr}_x \in Q_z$.
    If $z$ precedes $x$ on $a^{tr}_x$, then train $z$ must be rerouted in the while loop. If $x$ precedes $z$ on $a^{tr}_x$, then $Q_z$ has an arc on which $x$ precedes $z$, and an arc which $x$ does not use (the last arc of $Q_z$, since $Q_z$ and $Q_x$ are arc-disjoint). This clearly violates the definition of $Q_z$. Thus, $a^{tr}_x \not \in Q_z$.

    Now consider a train $y \in T$ that preceded $z$ on $Q_z$ before being rerouted. Suppose for a contradiction that $y$ uses an arc in $Q_z$ after being rerouted. Let $a$ be the last arc in $Q_z$ used by $y$ after rerouting. Since $Q_z$ and $Q_x$ are arc-disjoint, $a$ cannot be the last arc in $Q_z$. Then let $a' = \Succ_z(a)$ be the arc immediately following $a$ in $Q_z$. But then since $y$ uses arc $a'$ prior to the rerouting, but not after it, and $a,a' \not \in Q_x$, the only possibility is that arc $a = a^{tr}_x$. But then $a^{tr}_x \in Q_z$, giving a contradiction.
    \qedhere
\end{itemize}
\end{proof}

Together,~\Cref{lem:convoyyes} and~\Cref{lem:uncrossingruntime} complete the proof of~\Cref{thm:ConvoyConjecture}.

\section{A flow-based additive approximation for TMO – Proof of \Cref{thm:FoT-APX}}
\label{app:FlowAlgo}

In this section, we formally prove \Cref{thm:FoT-APX} by describing how to compute a convoy routing for a given \ac{TMO} instance with makespan at most $\OPT + \Delta$ in polynomial time.
\FoTapprox*

To show this theorem, we make use of the connection between convoy routings and temporally repeated flows in \Cref{lem:routing-flow} and \Cref{lem:flow-routing} below.

\begin{lemma}
	\label{lem:routing-flow}
	Let $\OPT$ be the optimal makespan for \ac{TMO} on $(D,\tau,d,\Delta)$. Then, there exists a flow over time $f$ with time horizon $\OPT+\Delta$ and value $d \cdot \Delta$ in graph $D$ with unit capacities.
\end{lemma}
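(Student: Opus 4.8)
The plan is to combine the Convoy theorem with the classical temporally repeated flow construction of Ford and Fulkerson.

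First I would apply \Cref{thm:ConvoyConjecture} to replace the optimal solution by an optimal convoy routing $(\bP,\bsigma)$ with pairwise arc-disjoint $s$-$t$-paths $P_1,\dots,P_k$ and multiplicities $\sigma_1,\dots,\sigma_k\in\mathbb{Z}_{\geq 1}$ with $\sum_{i=1}^k\sigma_i=d$. Its makespan is $\OPT=\max_{i\in[k]}\{\tau(P_i)+(\sigma_i-1)\Delta\}$, so that
\[
  \OPT+\Delta-\tau(P_i)\ \ge\ \sigma_i\Delta\ \ge\ \Delta\ \ge\ 1 \qquad\text{for every }i\in[k].
\]

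Next, set the time horizon $T:=\OPT+\Delta$ and build the temporally repeated flow $f$ that injects flow at unit rate into each path $P_i$ during the half-open interval $[0,\,T-\tau(P_i))$ and propagates it along $P_i$ according to the travel times. Concretely, for an arc $a$ lying on the (unique, by arc-disjointness) path $P_i$, let $\pi_a^i$ be the total travel time of the arcs of $P_i$ strictly before $a$, and set $f_a(\theta)=1$ for $\theta\in[\pi_a^i,\,\pi_a^i+T-\tau(P_i))$ and $f_a(\theta)=0$ otherwise; arcs on no path carry no flow. I would then verify the three routine facts: (i) unit capacities are respected, since the paths are arc-disjoint and hence each arc carries at most one unit of flow at any time; (ii) flow conservation holds, since flow entering a path simply moves along it with the prescribed delays and no storage at intermediate nodes is used; (iii) all flow injected into $P_i$ before $T-\tau(P_i)$ reaches $t$ strictly before time $T$, so $f$ indeed has time horizon $T$ (and each injection interval is nonempty by the displayed inequality, since $\tau(P_i)<T$).

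Finally, I would compute the value of $f$: the amount sent along $P_i$ equals $T-\tau(P_i)=\OPT+\Delta-\tau(P_i)\ge\sigma_i\Delta$, so the total value is $\sum_{i=1}^k(T-\tau(P_i))\ge\sum_{i=1}^k\sigma_i\Delta=d\cdot\Delta$. Since a flow over time with horizon $T$ and value at least $d\cdot\Delta$ can always be throttled to one with horizon $T$ and value exactly $d\cdot\Delta$ (stop injecting flow on some path once the cumulative amount reaches $d\cdot\Delta$), this yields the desired flow. I do not expect a genuine obstacle here; the only points needing care are pinning down a precise flow-over-time model (the argument works in any standard variant, with or without intermediate storage, since $f$ uses none) and the bookkeeping that converts ``value $\ge d\cdot\Delta$'' into ``value $=d\cdot\Delta$''.
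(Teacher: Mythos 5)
Your proposal is correct and follows essentially the same approach as the paper: invoke the Convoy theorem to obtain an optimal convoy routing $(\bP^*,\bsigma^*)$, observe that $\tau(P_i^*)\le \OPT-(\sigma_i^*-1)\Delta$, and construct a temporally repeated flow with horizon $\OPT+\Delta$ whose value along $P_i^*$ is $(\OPT+\Delta)-\tau(P_i^*)\ge\sigma_i^*\Delta$, summing to at least $d\Delta$. The paper states the flow value as $\ge d\cdot\Delta$ without explicitly mentioning the throttling step; your note about reducing to exactly $d\cdot\Delta$ is a small additional detail, not a departure in method.
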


\begin{proof}
	Let $(\bP^*, \bsigma^*)$ be an optimal convoy routing and let $f$ be the temporally repeated flow obtained by sending flow at a rate of one unit per time unit along $P_i^* \in \bP^*$ for $(\OPT+\Delta)- \tau(P_i^*)$ time units.
    Note that $f$ obeys unit capacities as $\bP^*$ is a convoy routing, \ie the paths are arc-disjoint.
	As $\sigma_i^*$ trains traverse $P_i^*$ until time $\OPT$, the length of $P_i^*$ is bounded by $\OPT - (\sigma_i^* -1) \cdot \Delta$, since the trains need to respect the headway.
    Hence, $f$ sends at least $\sigma_i \cdot \Delta$ flow units  along $P_i^*$ as the following estimation shows
	\[
		(\OPT+\Delta)- \tau(P_i^*) \geq (\OPT+\Delta) - (\OPT - (\sigma_i^* -1) \cdot \Delta) = \sigma_i^* \cdot \Delta.
	\]
	Summing this up for all $P_i^* \in \bP^*$, we get
	\[
		\val(f) \geq \sum_{i} \sigma_i^* \cdot \Delta = d \cdot \Delta. \qedhere
	\]  
\end{proof}

\begin{lemma}
	\label{lem:flow-routing}
	Let $f$ be a flow over time in $D$ with time horizon $T$ and unit capacities, which sends $\val(f)$ flow. There exists a strongly polynomial time algorithm which computes a convoy routing with makespan $T$ and value $\frac{1}{\Delta} \cdot \val(f)$.
\end{lemma}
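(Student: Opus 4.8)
The plan is to first replace the given flow over time by one with a more rigid structure, and then read off a convoy routing from it. By the Ford--Fulkerson theorem on maximum flows over time, a maximum flow over time in $D$ with time horizon $T$ can be computed in strongly polynomial time and, moreover, taken to be \emph{temporally repeated}: it arises from a static $s$-$t$-flow $x$ in $D$ by sending, for each flow-carrying path, one unit along that path at every integer time step for which the unit still reaches $t$ within the horizon $T$. Let $g$ be such a maximum temporally repeated flow over time with horizon $T$; then $\val(g) \ge \val(f)$. We may assume the underlying static flow $x$ is acyclic, since removing flow on cycles does not decrease the value of the temporally repeated flow as transit times are nonnegative. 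As all arc capacities are one, $x$ is then $\{0,1\}$-valued and decomposes, in strongly polynomial time, into pairwise arc-disjoint $s$-$t$-paths $\Path{1},\dots,\Path{k}$; the temporally repeated flow sends one unit along $\Path{i}$ at each of the times $0,1,\dots,T-\tau(\Path{i})$, so that $\val(g)=\sum_{i=1}^k\bigl(T-\tau(\Path{i})+1\bigr)$.

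It remains to turn $(\Path{1},\dots,\Path{k})$ into a convoy routing. On path $\Path{i}$ we place $\sigma_i \coloneqq \bigl\lceil (T-\tau(\Path{i})+1)/\Delta\bigr\rceil$ trains, dispatched from $s$ at times $0,\Delta,2\Delta,\dots,(\sigma_i-1)\Delta$. Using the elementary identity $\lceil (r+1)/\Delta\rceil = \lfloor r/\Delta\rfloor +1$, valid for all integers $r\ge 0$ and $\Delta\ge 1$, we obtain $(\sigma_i-1)\Delta \le T-\tau(\Path{i})$, so the last train on $\Path{i}$ reaches $t$ by time $\tau(\Path{i})+(\sigma_i-1)\Delta \le T$; hence the resulting convoy routing $(\bR,\bsigma)$ with $\bR=(\Path{1},\dots,\Path{k})$ has makespan at most $T$. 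Summing, $\sum_{i=1}^k \sigma_i \ge \sum_{i=1}^k (T-\tau(\Path{i})+1)/\Delta = \val(g)/\Delta \ge \val(f)/\Delta$. If $\sum_i\sigma_i$ strictly exceeds $\tfrac1\Delta\val(f)$ (in particular whenever $g$ improves on $f$), we remove surplus trains one at a time, decreasing some $\sigma_i$ and deleting a path once its multiplicity would drop to $0$, until exactly $\tfrac1\Delta\val(f)$ trains remain; this only lowers the makespan and keeps the profile arc-disjoint, giving a convoy routing of value $\tfrac1\Delta\val(f)$ and makespan at most $T$.

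For the running time, the maximum flow over time with horizon $T$ is obtained through the classical reduction to a static min-cost flow computation in $D$ with arc costs $\tau$ (this does \emph{not} require constructing a time-expanded network, and is strongly polynomial), the path decomposition of an acyclic $\{0,1\}$-flow is strongly polynomial, and the remaining bookkeeping (computing the $\sigma_i$ and trimming) is trivial; the output is compact, with $k\le m$ paths and multiplicities stored in binary. The main points to get right, beyond correctly invoking the temporally-repeated structure of maximum flows over time in strongly polynomial time, are the packing count in the second paragraph --- that the $T-\tau(\Path{i})+1$ flow units available on $\Path{i}$ are exactly enough to accommodate $\sigma_i$ headway-$\Delta$-separated trains finishing by time $T$ --- and the observation that these counts sum to at least $\tfrac1\Delta\val(f)$, so that trimming back to the exact target value is possible.
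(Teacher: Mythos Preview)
Your proof is correct and follows essentially the same approach as the paper: replace $f$ by a maximum temporally repeated flow, decompose its underlying integral unit-capacity static flow into arc-disjoint paths, and place $1+\lfloor (T-\tau(P_i))/\Delta\rfloor$ trains on each path. The only cosmetic difference is that the paper computes the temporally repeated flow for horizon $T+1$ and uses the formula $(T{+}1)-\tau(P_i)$ for the number of units per path, whereas you use horizon $T$ with the convention that gives $T-\tau(P_i)+1$ units; the two are the same quantity, and your ceiling/floor identity recovers exactly the paper's definition of $\sigma_i$.
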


\begin{proof}
	Let $f^*$ be an optimal, integral temporally repeated flow for time horizon $T+1$. Such a flow can be computed in strongly polynomial time (see \cite{ahuja1988network, Skutella2009}).
	Then, $\val(f^*)\geq \val(f)$ as there exists a temporally repeated flow that has maximum value \cite{FordFulkerson1958,ford2015flows}.
	
	Let $P_1,\dots,P_k$ be the (simple) paths along which flow is sent in $f^*$.
	Note that these paths are pairwise arc disjoint since we have unit capacities. Moreover, the aforementioned algorithm for computing $f^*$ also computes such a path decomposition of $f^*$ in strongly polynomial time.
	We construct a convoy routing $(\bP, \bsigma)$ by sending
	\[
		\sigma_i \coloneqq 1 + \left\lfloor \frac{T - \tau(P_i)}{\Delta} \right\rfloor
	\]
	trains through $P_i$, \ie $\bP=(P_1,\dots,P_k)$ and $\bsigma=(\sigma_1,\dots,\sigma_k)$.
	Note that all trains arrive until time horizon $T$. This is even the case if $\tau(P_i)=T$, as we say a train arrives as soon as its nose arrives. Moreover, given a time horizon $T+1$ and integral transit times, then no optimal, integral, temporally repeated flow uses a path with length longer than $T$.
	
	Now, we compare $\val(f)$ with the number of trains that are sent in the convoy routing:
	\begin{align*}
        \val(f)
        &\leq \val(f^*)
        = \sum_{i \in [k]} (T +1 - \tau(P_i))
        = \Delta \cdot  \sum_{i \in [k]} \left(\frac{T +1 - \tau(P_i)}{\Delta} \right)\\
        &\leq \Delta \cdot  \sum_{i \in [k]} \left( 1+ \left\lfloor \frac{T - \tau(P_i)}{\Delta} \right\rfloor \right) = \Delta \cdot \sum_{i \in [k]} \sigma_i,
    \end{align*}
    Here the last inequality follows from integrality of $T - \tau(P_i)$ and $\Delta$.
    Thus, the convoy routing sends at least $\frac{1}{\Delta} \cdot \val(f)$ trains.
 (We can remove some trains to achieve exactly $\frac{1}{\Delta} \cdot \val(f)$ if we have too many in our computed routing.) 
\end{proof}

	Using \Cref{lem:routing-flow} and \Cref{lem:flow-routing}, we get an additive $\Delta$ approximation:

\begin{proof}[Proof of \Cref{thm:FoT-APX}]
	First, we solve the quickest flow over time problem on $D$ with unit capacities and demand $\Delta \cdot d$ in strongly polynomial time~\cite{saho2017cancel}.
	By \Cref{lem:routing-flow}, this returns a time horizon $T \leq \OPT + \Delta$.
	Then applying \Cref{lem:flow-routing} yields a convoy routing with at least $d = \frac{1}{\Delta} \cdot (\Delta \cdot d)$ trains and time horizon (at most) $T\leq \OPT + \Delta$.
\end{proof}

\section{Reduction to MinMaxDP – Proof of \Cref{thm:BlackBoxReduction}}
\label{app:Reduction}
In this section, we formally prove \Cref{thm:BlackBoxReduction} from \Cref{sec:Reduction}.
\BlackBoxReduction*

We start showing that (for the correct choice of $k$), we can transform the instance of \ac{TMO} to an instance of \ac{MinMaxDP}.
Therefore, for $k \in [m]$, let $D'_{k}$ be the graph obtained by adding the gadget $G_k$ from \Cref{fig:prefix_gadget} prior to the graph $D$ of our \ac{TMO} instance, \ie the source of $G_k$ is the source $s'$ of $D_k'$ and the sink of $G_k$ gets contracted with the source $s$ of $D$.

\begin{lemma}
    \label{lem:BlackBoxGadget}
    Let $k \in [m]$. A convoy routing in $D$ using $k$ disjoint paths can be transformed into a solution to \ac{MinMaxDP} in $D_k'$ with the same objective value, and vice versa.
\end{lemma}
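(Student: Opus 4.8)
The plan is to show that the gadget $G_k$ exactly encodes the choice of the multiplicities $\sigma_i$ of a convoy routing into arc-disjoint path choices, while contributing the correct additive term $(\sigma_i - 1)\Delta$ to the length of each path. First I would recall the structure of $G_k$: it is a bundle graph consisting of $d-k$ consecutive bundles, each bundle having one arc of travel time $\Delta$ and $k-1$ arcs of travel time $0$, all in parallel between the same pair of nodes. Since $G_k$ admits exactly $k$ arc-disjoint $s'$-$s$-paths, any solution to \ac{MinMaxDP} in $D_k'$ must consist of exactly $k$ arc-disjoint $s'$-$t$-paths; each such path traverses exactly one arc of each of the $d-k$ bundles, and then continues as an $s$-$t$-path in $D$. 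The paths in $D$ obtained this way are pairwise arc-disjoint, since the full paths are arc-disjoint in $D_k'$ and the arc sets of $G_k$ and $D$ are disjoint.

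The key observation is a counting argument on the bundles. In each bundle there are $k$ parallel arcs, of which exactly one has length $\Delta$; so when $k$ disjoint paths are routed through the bundle, exactly one of them picks up an extra $\Delta$ in that bundle and the other $k-1$ pick up $0$. Summing over the $d-k$ bundles, the total extra length distributed among the $k$ paths is exactly $(d-k)\Delta$, and each path picks up a nonnegative integer multiple of $\Delta$. Hence, given a \ac{MinMaxDP} solution in $D_k'$ with paths $P_1', \dots, P_k'$, if path $P_i'$ traverses $m_i$ of the length-$\Delta$ arcs in the bundles, then $\sum_{i=1}^k m_i = d-k$, so setting $\sigma_i \coloneqq m_i + 1$ gives $\sum_{i=1}^k \sigma_i = d$ with each $\sigma_i \geq 1$, i.e., a valid convoy routing $(\bP,\bsigma)$ on $D$ where $\bP = (P_1, \dots, P_k)$ is the restriction of the paths to $D$. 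Its makespan is $\max_i\{\tau(P_i) + (\sigma_i-1)\Delta\} = \max_i\{\tau(P_i) + m_i \Delta\} = \max_i \tau(P_i')$, the objective value of the \ac{MinMaxDP} solution.

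For the converse direction, given a convoy routing $(\bP, \bsigma)$ on $D$ with $k$ arc-disjoint paths and $\sum_i \sigma_i = d$, I route path $P_i$ through the bundles of $G_k$ so that it traverses exactly $\sigma_i - 1$ of the length-$\Delta$ arcs (and length-$0$ arcs in the remaining $d-k - (\sigma_i-1)$ bundles). This is possible because $\sum_{i=1}^k (\sigma_i - 1) = d - k$ equals the number of bundles, so one can assign the length-$\Delta$ arc of each bundle to a distinct path among the $P_i$ (using the length-$\Delta$ arc of exactly $\sigma_i - 1$ bundles for $P_i$), and within each bundle the remaining $k-1$ paths take distinct length-$0$ arcs (there are exactly $k-1$ of them). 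This yields $k$ arc-disjoint $s'$-$t$-paths in $D_k'$, and the length of the extended path $P_i'$ is $\tau(P_i) + (\sigma_i - 1)\Delta$, so the \ac{MinMaxDP} objective on $D_k'$ equals the makespan $\max_i\{\tau(P_i) + (\sigma_i-1)\Delta\}$ of the convoy routing.

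The main obstacle is the bookkeeping of the bundle assignment: one has to be careful that a path can indeed be routed through a prescribed number of length-$\Delta$ arcs without running out of length-$0$ arcs in the other bundles, and that the arc-disjointness is preserved bundle by bundle. The counting identities $\sum_i(\sigma_i - 1) = d-k$ = (number of bundles) and the fact that each bundle has exactly $k-1$ length-$0$ arcs and $k$ arcs total make this routine, but it is the part that needs to be written out carefully; the rest is immediate from the series-parallel structure of $D_k'$ and the identity $\tau(P_i') = \tau(P_i) + (\sigma_i - 1)\Delta$.
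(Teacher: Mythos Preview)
Your proposal is correct and follows essentially the same approach as the paper: both directions rely on the counting identity $\sum_i(\sigma_i-1)=d-k$ matching the number of bundles, with $\sigma_i-1$ encoding the number of length-$\Delta$ arcs picked up by path $i$ in the gadget. Your write-up is in fact a bit more explicit than the paper's about why the bundle assignment in the converse direction is feasible and why the objective values coincide.
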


\begin{proof}   
    By construction, any feasible solution $\bP'=(P'_1, \dots, P'_k)$ to \ac{MinMaxDP} in $D_k'$ uses all the $d-k$ arcs in $G_k$ that have length $\Delta$. This allows us to construct a feasible convoy routing $(\bP=(P_1, \ldots, P_k), \bsigma)$ to
    \ac{TMO} in $D$ with the same objective value. 
We define $P_i$ as the restriction of $P_i'$ to $D$, and $\sigma_i=|\{ a \in (G_k\cap P_i) \mid \tau_a = \Delta\}|+1.$

    Conversely, any feasible convoy routing $(\bP=(P_1, \dots, P_k), \bsigma)$   to \ac{TMO} in $D$ can be extended to $k$ pairwise arc-disjoint paths in $D'$ via the following rule: initially, mark all arcs in the gadget $G_k$ depicted in Figure~\ref{fig:prefix_gadget}  as \emph{unused}. Iteratively, for $i=1$ to $k$, select $\sigma_i-1$ \emph{unused} arcs with travel time $\Delta$ from $G_k$ and extend them using \emph{unused} arcs with zero travel time to an $s'$-$s$-path $Q_i$ (then mark all the used arcs as \emph{used}).
    Since $\sum_{i \in [k]} \sigma_i =d$, this construction is feasible. Moreover, by construction, the paths $P_i':=Q_i\cup P_i,\ i\in [k]$, form pairwise disjoint paths in $D_k'$, whose maximum travel time coincides with the makespan of the convoy routing $(\bP, \bsigma)$.
\end{proof}

Now, we are ready to prove \Cref{thm:BlackBoxReduction}
\begin{proof}[Proof of Theorem~\ref{thm:BlackBoxReduction}]
    First, note that the number of disjoint paths used in any solution is bounded by the number of arcs $m$. Moreover, let $\OPT$ be the optimal objective value of the \ac{TMO}-instance $(D,\tau,d,\Delta)$.
    We distinguish two cases:
    
    If $d > m (1+ \frac{1}{\epsilon})$, any optimal convoy routing -- which exists by Theorem~\ref{thm:ConvoyConjecture} –– features a path with more than $\frac{1}{\epsilon}+1$ trains assigned to it.
    Thus the last train of these paths arrives at least $\frac{1}{\epsilon} \cdot \Delta$ time units after the first one.
    As a result $\frac{1}{\epsilon} \cdot \Delta \leq \OPT$, or alternatively, $\Delta \leq \epsilon \cdot \OPT$. Hence, the flow-based algorithm from Theorem~\ref{thm:FoT-APX} yields a $(1+\epsilon)$-approximation in this case.

    Conversely, if $d \leq m(1+\frac{1}{\epsilon})$, we aim to use the connection to \ac{MinMaxDP} given by \Cref{lem:BlackBoxGadget}. We will guess the number of disjoint paths $k$ used in an optimal solution of $\ac{TMO}$. (We can do so by iterating $k$ from 1 to $m$.) 
    Given $k$, we apply the $\alpha$-approximation algorithm for \ac{MinMaxDP} to $D_k'$ (the series combination of the gadget of \Cref{fig:prefix_gadget} with graph $D$). By \Cref{lem:BlackBoxGadget} the optimal value of the \ac{MinMaxDP} instance corresponds to the optimal value of \ac{TMO}, \ie $\OPT$.
    By applying our $\alpha$-approximation algorithm to $D_k'$, we obtain a solution to \ac{MinMaxDP} of value at most $\alpha \cdot \OPT$. 
    We can transform this solution into a convoy routing for \ac{TMO} with the same value, \ie $\alpha \cdot \OPT$.

    To see the runtime, recall that the gadget $G_k$ added in front of $D$ has $(d-k)\cdot k$ arcs. Since $d \leq m(1+\frac{1}{\epsilon})$ by case distinction, we can bound the size by
    \begin{equation*}
        (d-k)\cdot k +m \leq m(1+\frac{1}{\epsilon})\cdot m +m \leq m^2(1+\frac{1}{\epsilon})+m \leq m^2(2+\frac{1}{\epsilon})
    \end{equation*}
    arcs for the constructed digraph $D_k'$. As we solve such an instance of \ac{MinMaxDP} for every $k \in [m]$, the result follows.
\end{proof}

 
\section[Proof that the parameter $\varphi(D)$ is well-defined.]{Proof that the parameter $\boldsymbol{\varphi(D)}$ is well-defined.}
\label{app:phi_unique}

Let $D$ be a \ac{SePa} graph. In this section, we show the following lemma.

\begin{lemma}
    The parameter $\phi(D)$ is well-defined, \ie it does not depend on the choice of the binary decomposition tree of $D$.
\end{lemma}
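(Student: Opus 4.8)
The plan is to express $\phi(D)$ in terms of a decomposition object that is manifestly intrinsic to $D$, namely the \emph{canonical} (non‑binary) series‑parallel decomposition tree, and then to show that contracting components in any binary decomposition tree recovers this canonical object. Recall the classical structure theory of two‑terminal \ac{SePa} graphs: every \ac{SePa} graph $D$ has a \emph{unique} canonical decomposition tree $\hat T(D)$ whose leaves are the arcs of $D$ (parallel arcs being distinct leaves), whose internal vertices are labelled $S$ or $P$ and have at least two children each, and in which no $S$‑vertex has an $S$‑child and no $P$‑vertex has a $P$‑child; an internal $S$‑vertex with child subgraphs $D_1,\dots,D_r$ (in left‑to‑right order) represents $D_1 \serc \dots \serc D_r$, and an internal $P$‑vertex represents $D_1 \parc \dots \parc D_r$. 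The one nontrivial input is uniqueness, which I would cite rather than reprove: a \ac{SePa} graph with at least two arcs is \emph{either} series‑decomposable, in which case the vertices $\neq s,t$ lying on every $s$–$t$‑path are exactly the splitting points of its (hence unique) finest series decomposition, \emph{or} parallel‑decomposable into its uniquely determined parallel‑indecomposable pieces, and never both (if $D = D_1 \serc D_2$ the shared vertex lies on all $s$–$t$‑paths, whereas if $D = D_1 \parc D_2$ each part carries an $s$–$t$‑path avoiding the other); one then recurses on the pieces.

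Next, I would connect an arbitrary binary decomposition tree $T$ of $D$ to $\hat T(D)$ by component contraction. Let $T/{\sim}$ be obtained from $T$ by contracting each component to a single vertex. I claim $T/{\sim} = \hat T(D)$. First, $T/{\sim}$ is a valid decomposition tree of $D$: the root of an $S$‑component with child subgraphs $D_1,\dots,D_r$ represents $D_1 \serc \dots \serc D_r$ regardless of the internal binary shape of the component, by associativity of $\serc$, and a $P$‑component represents $D_1 \parc \dots \parc D_r$ by associativity and commutativity of $\parc$. Second, by maximality of components, $T/{\sim}$ has no $S$‑vertex with an $S$‑child and no $P$‑vertex with a $P$‑child, its leaves are arcs (arc‑leaves form singleton components, so contraction never merges them), and every internal vertex has at least two children, since binary compositions always combine two subgraphs. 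Hence $T/{\sim}$ is a canonical decomposition tree of $D$, so by the uniqueness quoted above it equals $\hat T(D)$.

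Finally, I would transfer the definition of $\phi$. A root‑to‑leaf path $\pi$ in $T$ meets each component in a connected subpath, so contraction maps $\pi$ to a root‑to‑leaf path $\hat\pi$ of $\hat T(D)$ along which the number of $S$‑vertices equals the number of $S$‑components traversed by $\pi$; conversely every root‑to‑leaf path of $\hat T(D)$ is the contraction image of some root‑to‑leaf path of $T$. Therefore
\[
  \phi(D) \;=\; \max\bigl\{\, \#\{\, v \in \hat\pi : v \text{ is an } S\text{-vertex}\,\} \;:\; \hat\pi \text{ a root-to-leaf path of } \hat T(D) \,\bigr\},
\]
and the right‑hand side depends only on $\hat T(D)$, hence only on $D$. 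The main obstacle is the uniqueness of $\hat T(D)$; once that classical fact is granted, the remaining steps are routine bookkeeping about the associativity/commutativity laws of the two compositions and about how root‑to‑leaf paths behave under contraction of components.
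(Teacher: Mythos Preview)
Your proposal is correct and follows essentially the same approach as the paper: both reduce the well-definedness of $\phi(D)$ to the uniqueness of the contracted (equivalently, canonical) decomposition tree, and then observe that $\phi(D)$ is the maximum number of $S$-vertices on a root--leaf path in that unique tree. The only notable difference is that you cite the uniqueness of the canonical tree as a classical fact (with a brief top-down sketch), whereas the paper supplies a self-contained proof by contradiction, choosing a deepest vertex where two hypothetical contracted trees differ and deriving a structural contradiction via a case analysis on the $S$/$P$ label.
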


To show that the parameter $\phi(D)$ is unique, we consider what we call the \emph{contracted decomposition tree} of $D$.
Given a binary decomposition tree $T_b$ of $D$, the contracted decomposition tree $T_c$ is obtained by merging adjacent vertices labeled with the same composition (series or parallel).
For example, the contracted decomposition tree for \Cref{fig:sepa-decomposition-tree} is shown in~\Cref{fig:contracted-sepa-decomposition-tree}.
The concept of contracted decomposition trees is not new, \eg in \cite{Bodlaender} it was called a ``minimal sp-tree''. Also the fact that the contracted decomposition tree is unique also seems to be a well-known statement (this is stated but not proven in \cite{Bodlaender}). For the sake of completeness, we give a proof of \Cref{lem:ContrComposTree} here.

\begin{figure}[thbp]
    \centering
    \begin{tikzpicture}[scale=0.8, transform shape, every node/.style={draw,circle,inner sep=0pt,minimum size=15pt, font=\scriptsize{#1}}]
        \node (1) at (0,2) {$a_1$};
        \node (2) at (1,2) {$a_2$};
        \node (3) at (2,2) {$a_3$};
        \node[rwthgreen] (p1) at (1,3) {$P$};

        \node (4) at (2.5,1) {$a_4$};
        \node[rwthgreen] (p2) at (3.5,1) {$P$};
        \node (5) at (3,0) {$a_5$};
        \node (6) at (4,0) {$a_6$};
        \node[rwthblue] (s1) at (3,2) {$S$};
        \node (7) at (4,2) {$a_7$};
        \node[rwthgreen] (p3) at (3.5,3) {$P$};

        \node (8) at (5,2) {$a_8$};
        \node (9) at (6,2) {$a_9$};
        \node (10) at (7,2) {$a_{10}$};
        \node[rwthgreen] (p4) at (6,3) {$P$};
        
        \node (12) at (8.5,2) {$a_{12}$};
        \node (13) at (9.5,2) {$a_{13}$};
        \node[rwthgreen] (p5) at (9,3) {$P$};
        
        \node (11) at (7.5,3) {$a_{11}$};
        \node[rwthblue] (s2) at (4.5,4.5) {$S$};
        \node (14) at (9,4.5) {$a_{14}$};
        \node[rwthgreen] (p6) at (7,5.5) {$P$};
        
        \draw (1) -- (p1);
        \draw (2) -- (p1);
        \draw (3) -- (p1);
        
        \draw (5) -- (p2);
        \draw (6) -- (p2);
        \draw (4) -- (s1);
        \draw (p2) -- (s1);
        \draw (s1) -- (p3);
        \draw (7) -- (p3);

        \draw (8) -- (p4);
        \draw (9) -- (p4);
        \draw (10) -- (p4);

        \draw (12) -- (p5);
        \draw (13) -- (p5);

        \draw (p1) -- (s2);
        \draw (p3) -- (s2);
        \draw (p4) -- (s2);
        \draw (p5) -- (s2);
        
        \draw (11) -- (s2);
        \draw (p4) -- (s2);
        \draw (14) -- (p6);
        \draw (s2) -- (p6);
    \end{tikzpicture}
    \caption{The contracted decomposition tree of the graph $D$ given in \Cref{fig:sepa-decomposition-tree}. Note that this is not a binary tree anymore.}
    \label{fig:contracted-sepa-decomposition-tree}
\end{figure}

Note that $\phi(D)$ is equal to the maximal number of \encircle{$S$}-labeled vertices on a path from the root to a leaf in the contracted decomposition tree. Hence, the uniqueness of $\phi$ follows by the uniqueness of the contracted decomposition tree.

\begin{lemma}
    \label{lem:ContrComposTree}
    The contracted decomposition tree for any \ac{SePa} graph is unique.
\end{lemma}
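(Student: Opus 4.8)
The plan is to prove uniqueness by induction on the number of arcs of $D$, by exhibiting a \emph{canonical} tree $\mathcal T(D)$, read off directly from the graph, and showing that the contraction of \emph{every} binary decomposition tree of $D$ equals $\mathcal T(D)$. I would define $\mathcal T(D)$ recursively: if $D$ is a single arc, $\mathcal T(D)$ is the corresponding leaf; otherwise its root carries the label $\encircle{$S$}$ if $D$ is series-decomposable (i.e.\ $D=D'\serc D''$ for some \ac{SePa}-graphs) and $\encircle{$P$}$ if $D$ is parallel-decomposable (i.e.\ $D=D'\parc D''$), and its children are the canonical trees of the ``maximal pieces'' of the corresponding top-level decomposition, defined below. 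Since the contracted decomposition tree visibly has alternating labels along every root--leaf path, proving that it coincides with $\mathcal T(D)$ establishes the lemma.

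The engine of the argument is a \emph{dichotomy}: if $D$ has at least two arcs, then $D$ is series-decomposable \emph{or} parallel-decomposable, but not both. Existence is immediate from the definition of a \ac{SePa}-graph. For exclusivity I would first record two easy facts (both by induction on the composition): every \ac{SePa}-graph has distinct source and sink, and every arc lies on some $s$--$t$-path. Now if $D=D'\serc D''$ then the merge vertex $v$ lies on every $s$--$t$-path and $v\notin\{s,t\}$; if in addition $D=D'\parc D''$, an $s$--$t$-path running inside $D'$ and one running inside $D''$ must both pass through $v$, but these branches share only $s$ and $t$, forcing $v\in\{s,t\}$, a contradiction. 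Next I would pin down the canonical pieces. In the series case, let $\mathcal C$ be the set of vertices lying on every $s$--$t$-path; a splicing argument (if two such vertices occurred in opposite orders on two paths, one could build an $s$--$t$-path missing one of them) shows $\mathcal C$ is linearly ordered as $s=u_0\prec u_1\prec\cdots\prec u_p\prec u_{p+1}=t$ with $p\ge1$, and the subgraph $D_i$ spanned by the arcs lying between $u_{i-1}$ and $u_i$ on some (equivalently, every) $s$--$t$-path is a \ac{SePa}-graph with fewer arcs, without internal cut vertex, and $D=D_1\serc\cdots\serc D_{p+1}$. In the parallel case, the branches $D_1,\dots,D_q$ with $q\ge2$ --- namely the connected components of $D$ minus $\{s,t\}$, each re-augmented by $s$, $t$ and their incident arcs, together with each direct $s$--$t$-arc as its own branch --- are intrinsically defined, each is parallel-indecomposable or a single arc, and $D=D_1\parc\cdots\parc D_q$. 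In both cases the sequence of maximal pieces depends only on $D$; these are the children of the root of $\mathcal T(D)$.

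For the induction step I would take an arbitrary binary decomposition tree $T_b$ of $D$ (which, having $\ge2$ arcs, has an internal root, say an $\encircle{$S$}$-node; the $\encircle{$P$}$-case is symmetric). Let $B$ be the maximal connected block of $\encircle{$S$}$-nodes containing the root and let $w_1,\dots,w_r$ ($r\ge2$), in left-to-right order, be its children in $T_b$, each an arc-leaf or a $\encircle{$P$}$-node, with associated subgraphs $E_1,\dots,E_r$. Associativity of series composition gives $D=E_1\serc\cdots\serc E_r$. Each $E_j$ is a single arc or the subgraph of a $\encircle{$P$}$-node, hence parallel-decomposable, hence --- by the dichotomy --- \emph{not} series-decomposable, so it has no internal cut vertex. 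Therefore the vertices on every $s$--$t$-path of $D$ are exactly the $r-1$ merge points of $E_1\serc\cdots\serc E_r$, forcing $r=p+1$ and $(E_1,\dots,E_r)=(D_1,\dots,D_{p+1})$. The subtree of $T_b$ below $w_j$ is a binary decomposition tree of $D_j$, which has fewer arcs, so by the induction hypothesis its contraction is $\mathcal T(D_j)$; hence the contraction of $T_b$ is the $\encircle{$S$}$-root with children $\mathcal T(D_1),\dots,\mathcal T(D_{p+1})$, i.e.\ exactly $\mathcal T(D)$.

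The main obstacle is the dichotomy together with the identification of the top block of $T_b$ with the canonical decomposition: without the exclusivity of series/parallel decomposability, an ``opposite-type'' piece $E_j$ could secretly contain further cut vertices, and the matching $(E_1,\dots,E_r)=(D_1,\dots,D_{p+1})$ --- the heart of the induction --- would fail. Making the canonical subgraphs $D_i$ fully rigorous (checking they are genuine \ac{SePa}-subgraphs with the claimed indecomposability, and that their arc sets partition those of $D$) is routine but needs care.
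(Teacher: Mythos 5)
Your proof is correct, but it takes a genuinely different route from the paper's. The paper argues by contradiction: it supposes two distinct contracted trees $T_c \neq T_c'$, picks a vertex $v$ of $T_c$ at maximal depth whose rooted subtree is not reproduced in $T_c'$, and derives a contradiction from the local structure of series and parallel merges (which vertices can serve as sources/sinks of compositions). You instead build a \emph{canonical} tree $\mathcal T(D)$ directly from graph-theoretic invariants --- the linearly ordered chain of vertices lying on every $s$--$t$-path in the series case, and the connected components of $D-\{s,t\}$ together with the direct $s$--$t$-arcs in the parallel case --- and show by induction that every binary decomposition tree contracts to $\mathcal T(D)$. The keystone of your argument, the mutual exclusivity of series- and parallel-decomposability (equivalently, ``has an internal cut vertex iff series-decomposable''), is not stated in the paper but is a clean standalone fact about \ac{SePa}-graphs. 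Your route is longer and needs the extra lemmas you flag (linear order of cut vertices via the splicing argument, connectedness of the re-augmented components, disjointness of the top-level pieces), but it pays for itself by giving an explicit intrinsic description of the contracted tree rather than just uniqueness in the abstract. One small ordering point worth being careful about when writing it up: the recursive \emph{definition} of $\mathcal T(D)$ already presupposes that the ``maximal pieces'' are \ac{SePa}-subgraphs, which is most easily certified by running the same induction that identifies them with the top block of an arbitrary $T_b$; you should either establish that fact first (e.g.\ from the existence of some $T_b$) or fold the definition of $\mathcal T(D)$ into the inductive proof so there is no circularity.
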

\begin{proof}[Proof of \Cref{lem:ContrComposTree}]
    Let $D=(V,A)$ be a given \ac{SePa} graph. Note first that in a contracted decomposition tree, any path from the root to a leaf always alternates between \encircle{$S$} and \encircle{$P$} nodes.

    Assume for contradiction that there are two different contracted decomposition trees $T_c$ and $T_c'$ for $D$.
    Now, consider a vertex $v$ in $T_c$ with maximal distance to the root, such that the subtree with root $v$ in $T_c$ is not contained in $T_c'$. Since every leaf in both $T_c$ and $T_c'$ corresponds to an arc in $D$, $v$ must be an internal node.
    By definition, the children $w_1,\dots,w_\ell$ of $v$ are roots of subtrees in $T_c$ and $T_c'$ (\ie every subtree rooted in $w_i$ for $i \in [\ell]$ is the same in both trees by choice of $v$).
    Assume that the nodes $w_i$ for $i \in [\ell]$ do not have the same parent in $T_c'$. This is without loss of generality, since  if all $w_i$ for $i \in [\ell]$ have the same parent  $v'$ in $T_c'$, then $v'$ must have at least one more child, as otherwise we get a contradiction to the choice of $v$. In this case, we can swap the roles of $T_c$ and $T_c'$. 
    
    For $i \in [\ell]$, let $s_i,t_i \in V$ be the start and target nodes of the subgraphs corresponding to $w_i$.
    Further, let $A_i \coloneqq \{a \in A \mid a \text{ is contained in the subgraph corresponding to } w_i\}$.
    
    \begin{itemize}
        \item Assume that the label of $v$ is \encircle{$S$}.

        \Wlog the children $w_1,\dots, w_\ell$ of $v$ are ordered such that $t_i=s_{i+1}$ for $i \in [\ell-1]$.
        By construction with respect to $T_c$, for $i\in \{2,\dots\ell\}$ there is no arc in $A \setminus A_i$ leaving $s_i$ and for $i \in \{1,\dots, \ell-1\}$ no arc in $A \setminus A_i$ entering $t_i$ in the original graph $D=(V,A)$.
        
        Let $v_i'$ be the parent vertex of $w_i$ in $T_c'$.
        If there exists $j \in [\ell]$ such that vertex $v_j'$ is labeled with \encircle{$P$}, this is a contradiction:
        For $j \neq \ell$ there would be an arc in $A \setminus A_j$ that enters $t_j$. For $j=\ell$, there would be an arc in $A \setminus A_j$ leaving $s_\ell$.

        Hence, we can assume that $v_i'$ is labeled with \encircle{$S$} for $i \in [\ell]$.
        By definition of $v$, we know that there exists $j \in [\ell-1]$ such that $v_j' \neq v_{j+1}'$.
        Since $t_j =s_{j+1}$, and $w_j$, $w_{j+1}$ have different parents in $T_c'$, we have that $t_j$ must be the sink of the subgraph corresponding to $v_j'$ (and $s_{j+1}$ must be the source of the subgraph corresponding to $v_{j+1}'$).
        Since the contracted decomposition tree alternates between \encircle{$S$} and \encircle{$P$} compositions, the parent vertex of $v_{j}'$ (respectively $v_{j+1}'$) is \encircle{$P$}-labeled, and hence, there is an arc in $A \setminus A_j$ entering $t_j$ (and an arc in $A \setminus A_{j+1}$ leaving $s_{j+1}$).
        This is a contradiction.
        
        \item Assume that the label of $v$ is \encircle{$P$}.
        
        By construction with respect to $T_c$, $s_1=s_i$ and $t_1=t_i$ for $i \in [\ell]$.
        Let $v_i'$ be the parent vertex of $w_i$ in $T_c'$.
        If there exists an $i \in [\ell]$ such that vertex $v_i'$ is labeled with \encircle{$S$}, and since not all nodes $w_j$ have the same parent, this would be a contradiction since either $s_i$ or $t_i$ cannot be used for further compositions. But all subgraphs for the children $w_j$, $j \in [\ell]$ must be in parallel (and in particular, must have the same source and destination).
        Hence, we can assume that $v_i'$ is labeled with \encircle{$P$} for $i \in [\ell]$.
        
        By definition of $v$, we know that there exists a $j \in [\ell-1]$ such that $v_j' \neq v_{j+1}'$.
        Since we only consider \encircle{$S$}-\encircle{$P$}-alternating paths in the tree, the parent vertex of $v_{j}'$ (respectively $v_{j+1}'$) is \encircle{$S$}-labeled.
        Thus, either $s_j$ or $t_j$ cannot be used for further compositions, a contradiction since we still need to compose all the subgraphs for the children $w_j$, $j \in [\ell]$ in parallel. \qedhere
    \end{itemize}
\end{proof}

\section{Greedy algorithm for MinMaxDP – missing proofs}
\label{app:Greedy}
In this section, we give the missing proofs of \Cref{sec:Greedy}.

\subsection{Proof of \Cref{lem:H_k-combination}}
\label{app:lem:H_k-combination}
\HkCombination*

\begin{proof}
    Let $q_1 \geq \dots \geq q_{k_1}$ and $r_1 \geq \dots \geq r_{k_2}$ be the lengths of the paths in path profiles $\bQ$  and $\bR$, respectively. 
    In the following, we refer to these as  strict orders, \ie\ $q_i \prec q_j$ and $r_i \prec r_j$ if and only if $i<j$.

    Note that consistency of $\bP$ in $D'$ follows immediately from the consistency of $\bQ$ and $\bR$ in $D^{(1)}$ and $D^{(2)}$, respectively, since
    \begin{equation*}
        \cost(\bP, D') = \sum_{j=1}^{k_1} q_j + \sum_{j=1}^{k_2} r_j
        \overset{\eqref{eq:inv1}}{\leq} \cost(\bP^*, D^{(1)})+\cost(\bP^*, D^{(2)}) =
        \cost(\bP^*, D').
    \end{equation*}

    Now, we prove that the series composition $\bP = \bQ \serc \bR$ (thus, $k_1=k_2=:k'$) is balanced.
    Let $p_1 \geq \dots \geq p_{k'}$ be the lengths of the paths in profile $\bP$.
    We define $c \colon [k'] \to [k']$ to be the bijective function where $c(i)$ is the index such that $P_i$ is obtained by combining $Q_{c(i)}$ and $R_{k-c(i)+1}$.
    Without loss of generality, if $p_i=p_j$ with $i<j$, we assume that $c(i) < c(j)$. This way, we obtain once again a strict order on $\bP$.
    Using this ordering of the lengths of $\bP$ also gives us a strict order, \ie\ $p_i \prec p_j$ if and only if $i < j$.

    In order to show that $\bP$ is balanced in $D'$, we fix some $i \in [k'-1]$, and
     define
    \begin{equation*}
        S_1 \coloneqq \{1 \leq j \leq c(i+1)-1 \mid p_{c^{-1}(j)} \succ p_{i+1} \} \, \text{ and } \, S_2 \coloneqq \{c(i+1)+1 \leq j \leq k' \mid 
        p_{c^{-1}(j)} \succ p_{i+1} \}.
    \end{equation*}
    Note that $i=|S_1| + |S_2|$ by definition of the sets $S_1$ and $S_2$.
    Further, we define
    \begin{equation*}
        T_1 \coloneqq \{1 \leq j \leq c(i+1)-1 \mid p_{c^{-1}(j)} \prec p_{i+1}\}.
    \end{equation*}
    Then $[c(i+1)-1] = S_1 \dot\cup\ T_1$ holds by definition of $c$ and since $\prec$ is a strict ordering.
    Let $j_1 \leq \dots \leq j_{\ell}$ be the elements of $T_1$.
    Further, define $j_0 \coloneqq 0$ and $j_{\ell+1} \coloneqq c(i+1)$.
    Then $S_1 = \dot\bigcup_{h=0}^\ell I_h$ where $I_h \coloneqq [j_{h}+1,j_{h+1}-1]_{\mathbb{Z}}$ is the interval between $j_{h}$ and $j_{h+1}$ for $h \in \{0,\dots,\ell\}$.
    We obtain
    \begin{align}
    \label{eq:greedy_series}
    \begin{split}
        &\frac{1}{|I_h|} \sum_{j \in I_h} (q_j + r_{k'-j+1})
        \overset{\small (*)}{\leq} r_{k'-j_{h+1}+1} + \frac{1}{|I_h|} \sum_{j \in I_h} q_j
        \overset{\small (**)}{\leq} r_{k'-j_{h+1}+1} + \frac{1}{j_{h+1}-1} \sum_{j=1}^{j_{h+1}-1} q_j\\
        &\overset{\small \eqref{eq:inv2}}{\leq} r_{k'-j_{h+1}+1} + \makespan(\bP^*,D^{(1)}) + q_{j_{h+1}}
        \overset{\small (***)}{\leq} \makespan(\bP^*,D')  + p_{i+1}
    \end{split}
    \end{align}
    where $(*)$ holds since $r_{k'-j+1} \leq r_{k'-j_{h+1}+1}$ for $j \in I_h$.
    Equation $(**)$ holds since $q_1 \geq \dots \geq q_{j_h} \geq q_{j_h+1} \geq \dots \geq q_{j_{h+1}-1}$, so the average over all path lengths is larger than the average over the length of the $|J_h|$ shortest paths.
    The last equation $(***)$ holds since $ \makespan(\bP^*,D^{(1)}) \leq  \makespan(\bP^*,D')$ and since by definition $j_{h+1} \leq c(i+1)$ and $j_{h+1} \notin S_1$, so $ q_{j_{h+1}} + r_{k'-j_{h+1}+1} \leq p_{i+1}$.

    Since $S_1 = \dot\bigcup_{h=0}^\ell I_h$, it follows that
    \begin{equation}
        \label{eq:DP-corollary}
        \frac{1}{|S_1|} \sum_{j \in S_1} (q_j + r_{k'-j+1}) \leq \max_{h \in \{0,\dots,\ell\}} \frac{1}{|I_h|} \sum_{j \in I_h} (q_j + r_{k'-j+1}) \overset{\eqref{eq:greedy_series}}{\leq} \makespan(\bP^*,D') + p_{i+1}.
    \end{equation}

    By a similar argumentation (swapped roles for $q$ and $r$), we obtain
    \begin{equation}
        \label{eq:DP-corollary-b}
        \frac{1}{|S_2|} \sum_{j \in S_2} (q_j + r_{k'-j+1}) \leq  \makespan(\bP^*,D') + p_{i+1}.
    \end{equation}
    In summary, it follows that
    \begin{align*}
        \sum_{j=1}^i p_i 
        &= \sum_{j \in S_1} (q_j + r_{k'-j+1}) + \sum_{j \in S_2} (q_j + r_{k'-j+1})\\
        &\leq (|S_1| + |S_2|) \cdot (\makespan(\bP^*,D') + p_{i+1})
        = i \cdot ( \makespan(\bP^*,D') + p_{i+1})
    \end{align*}
    Thus, the greedy profile $\bP = \bQ \serc \bR$ is balanced in $D' = D^{(1)} \serc D^{(2)}.$
    \medskip

    Next, we consider the path profile  $\bP = \bQ \parc \bR$ obtained by a parallel composition of $\bQ$ and $\bR$ by the greedy algorithm. Let $p_1 \geq \dots \geq p_{k'}$ be the lengths of the paths in $\bP$ (thus, 
    $k'=k_1 + k_2$).
    To show that $\bP$ is balanced,  we fix some $i \in [k']$. Let $i_1$ and $i_2$ denote the numbers of paths from $P_1,\dots,P_i$ that traverse $D^{(1)}$ and $D^{(2)}$, respectively.
    If $i_1=k_1$, we get
    \begin{align*}
        \sum_{j=1}^{i_1} q_j &= \sum_{j=1}^{k_1} q_j = \cost(\bQ,D^{(1)}) \overset{\eqref{eq:inv1}}{\leq} \cost(\bP^*,D^{(1)}) \leq k_1 \cdot \makespan(\bP^*,D^{(1)})\\
        &\leq k_1 \cdot \makespan(\bP^*,D') \leq i_1 \left( \makespan(\bP^*,D') + p_{i+1} \right).
    \end{align*}
    If $i_1 < k_1$, we get
    \[
        \sum_{j=1}^{i_1} q_j \overset{\eqref{eq:inv2}}{\leq} i_1 \left( \makespan(\bP^*,D^{(1)}) + q_{i_1+1}\right) \leq i_1 \left( \makespan(\bP^*,D')  + p_{i+1} \right)
    \]
    where the last equality holds since $p_{i+1} = \max \{q_{i_1+1} , r_{i_2+1} \} \geq q_{i_1+1} $. Similarly, we obtain
    \[
        \sum_{j=1}^{i_2} r_j \leq i_2 \left( \makespan(\bP^*,D')  + p_{i+1} \right).
    \]
    Combining these equations, we conclude that
    \begin{align*}
        \sum_{j=1}^i p_i &= \sum_{j=1}^{i_1} q_j + \sum_{j=1}^{i_2} r_j
        \leq  i_1 \left( \makespan(\bP^*,D')  + p_{i+1} \right) + i_2 \left( \makespan(\bP^*,D') + p_{i+1} \right)\\
        &= i \left(\makespan(\bP^*,D') + p_{i+1} \right).
    \end{align*}
    Hence, $\bP = \bQ \parc \bR$ is balanced with $\bP^*$ in $D'=D^{(1)} \parc D^{(2)}$.
\end{proof}

\subsection{Proof of \Cref{lem:H_k-approx}}
\label{app:lem:H_k-approx}
\HkApprox*
\begin{proof}
    In this proof, we will use the following well known recursive formula for $H_i$:
    \begin{equation}
        \label{eq:H_i_recursion}
        \frac{1}{i} \left( \sum_{j=1}^{i-1} H_j \right) +1 = H_{i} \qquad \text{ for all } i \in \mathbb{Z}_{\geq 2}.
    \end{equation}
    Let $p_1 \geq \dots \geq p_k$ the lengths of the paths in $\bP$.
    Using the recursive formula \eqref{eq:H_i_recursion} and equation \eqref{eq:inv2} for $i \in [k-1]$, we start showing by induction that 
    \begin{equation}
        \label{eq:DP-Hk_corollary}
        p_1 \leq H_i \cdot \makespan(\bP^*,D) + p_{i+1} \qquad \text{ for all } i \in [k-1].
    \end{equation}
    For $i=1$ it follows directly from \eqref{eq:inv2}. 
    For the induction step, fix an $i \in [k-2]$ and assume that \eqref{eq:DP-Hk_corollary} holds for $i'\leq i$. We will now show that \eqref{eq:DP-Hk_corollary} also holds for $i+1$.
    \begin{align*}
        p_1 &\overset{\small (*)}{\leq} \frac{1}{i+1} \left(p_1 + \sum_{j=1}^i \left(H_j \cdot \makespan(\bP^*,D) + p_{j+1}\right)\right)\\
        &= \frac{1}{i+1} \cdot \makespan(\bP^*,D)  \left(\sum_{j=1}^i H_j \right) + \frac{1}{i+1} \left( \sum_{j=1}^{i+1} p_j \right)\\
        &\overset{\small (**)}{\leq} \frac{1}{i+1} \cdot \makespan(\bP^*,D) \left(\sum_{j=1}^i H_j \right) + \left( \makespan(\bP^*,D) + p_{i+2}\right)\\
        &\overset{\small \eqref{eq:H_i_recursion}}{=} H_{i+1} \cdot \makespan(\bP^*,D)  + p_{i+2}.
    \end{align*}
    Here, equation $(*)$ holds by averaging the equation $p_1 \leq p_1$ and equations \eqref{eq:DP-Hk_corollary} for $i'\leq i$ given by induction hypothesis. Equation $(**)$ holds by equation \eqref{eq:inv2} since $i+1\leq k-1$.

    Using equations \eqref{eq:DP-Hk_corollary} for $i \in [k-1]$, we obtain
    \begin{align*}
        \makespan(\bP,D)
        &= p_1 \overset{\eqref{eq:DP-Hk_corollary}}{\leq} \frac{1}{k} \cdot \left( p_1 + \sum_{i=1}^{k-1}(H_i \cdot \makespan(\bP^*,D) + p_{i+1}) \right)\\
        &= \frac{1}{k}  \cdot \makespan(\bP^*,D) \cdot \left( \sum_{i=1}^{k-1} H_i \right) + \frac{1}{k} \left( \sum_{i=1}^k p_i \right)\\
        &\overset{\small \eqref{eq:inv1}}{\leq} \frac{1}{k} \makespan(\bP^*,D) \cdot \left( \sum_{i=1}^{k-1} H_i \right) + \frac{1}{k} \cost(\bP^*,D)\\
        &\overset{\text{avg.}}{\leq} \frac{1}{k}  \cdot \makespan(\bP^*,D) \cdot \left( \sum_{i=1}^{k-1} H_i \right) + \makespan(\bP^*,D)\\
        &= \left( \frac{1}{k} \left(\sum_{i=1}^{k-1} H_i\right) +1 \right)
        \cdot \makespan(\bP^*,D)
        \overset{\small \eqref{eq:H_i_recursion}}{=} H_k \cdot \makespan(\bP^*,D).
        \qedhere
    \end{align*}
\end{proof}

\subsection{Proof of \Cref{lem:phi-combination}}
\label{app:lem:phi-combination}
\GreedyPhi*
\begin{proof}
    Let $p_1^{(j)} \geq \dots \geq p_{k_j}^{(j)}$ be the path lengths of path profile $\bP^{(j)}$ with $k_j$ paths for $j \in [i]$.
    
    Note that consistency of $\bP$ follows immediately from the consistency of $\bP^{(j)}$ for $j \in [i]$, since
    \begin{equation}
        \label{eq:inv1_proof}
        \cost(\bP,D') = \sum_{j=1}^i \tau\left(P^{(j)},D^{(j)}\right) \overset{\eqref{eq:inv1}}{\leq} \sum_{j=1}^i \cost\left(\bP^*,D^{(j)}\right) = \cost(\bP^*,D')
    \end{equation}
    We first consider the case where $D'$ is the root of an $S$-component, and show that  $D'$  is $\phi$-bounded.
    Note that $D'$ is the series composition of $D^{(1)}, \dots, D^{(i)}$, where each $D^{(j)}$ is a graph containing exactly one arc or is obtained by a parallel composition.
    Thus, $k=k_j$ for $j \in [i]$ and
    \begin{equation}
        \label{eq:phi_constr_cut}
        \phi(D') = \max_{j \in [i]} \left\{ \phi \left(D^{(j)} \right) \right\} +1.
    \end{equation}
    Let $p_1\geq \dots \geq p_k$ be the lengths of the paths in the greedy series composition of $\bP^{(1)}, \dots ,  \bP^{(i)}$ according to the fixed decomposition of $D'$. 
    Then, we have
    \begin{equation}
        \label{eq:phi_composition1}
        \begin{split}
        p_1 - p_k 
        &\overset{\small (*)}{\leq} \max_{j \in [i]} \left\{ p_1^{(j)} - p_k^{(j)} \right\}
        \overset{\small (**)}{\leq} \max_{j \in [i]} \left\{ \left(\phi(D^{(j)}) +1 \right) \cdot \makespan(\bP^*,D) \right\}\\
        &\overset{\eqref{eq:phi_constr_cut}}{\leq} \phi(D') \cdot \makespan(\bP^*,D),
        \end{split}
    \end{equation}
    where $(*)$ holds by the greedy composition rule, and $(**)$ holds since each $\bP^{(j)}$ is $\phi$-bounded in $D^{(j)}$.
    Further, we know that
    \begin{equation}
        \label{eq:phi_composition2}
        p_k \leq \frac{1}{k} \cost(\bP,D') \overset{\eqref{eq:inv1_proof}}{\leq} \frac{1}{k} \cost(\bP^*,D') \leq \makespan(\bP^*,D).
    \end{equation}
    We insert \eqref{eq:phi_composition2} into \eqref{eq:phi_composition1} and we get
    \begin{equation*}
        p_1 \leq \phi(D') \cdot \makespan(\bP^*,D) + p_k \leq (\phi(D')+1) \cdot \makespan(\bP^*,D). 
    \end{equation*}

    Next, we consider a $P$-component whose root corresponds to $D'$.
    Then $D'$ is obtained by  combining $D^{(1)}, \dots, D^{(i)}$ via  parallel compositions, where each $D^{(j)}$ either contains exactly one arc or is obtained by a series composition.
    Thus $k_1 + \dots + k_i=k$ and
    \begin{equation}
        \label{eq:phi_constr_cut_p}
        \phi(D') =\max_{j \in [i]} \left\{\phi(D^{(j)})\right\}
    \end{equation}
    Again, let $p_1 \geq \dots \geq p_k$ be the path lengths of the path profile $\bP$ obtained by greedily combining the path profiles $\bP^{(1)},\dots, \bP^{(i)}$ according to the fixed decomposition of $D'$.
    Here, $\bP$ is $\phi$-bounded, since $\bP^{(j)}$ is $\phi$-bounded in $D^{(j)}$ for $j \in [i]$:
    \begin{equation*}
        p_1 \leq \max_{j \in [i]} \left\{ p_1^{(j)} \right\}
        \leq  \max_{j \in [i]} \left\{ \left(\phi\left(D^{(j)}\right)+1\right) \cdot \makespan(\bP^*,D) \right\}
        \overset{\eqref{eq:phi_constr_cut_p}}{=} (\phi(D)+1) \cdot \makespan(\bP^*,D).
        \qedhere
    \end{equation*}
\end{proof}

\subsection{Proof of \Cref{DPHk}}
\label{app:lem:DPHk}

\DPHk*
\begin{proof}
    Let $\bP^*$ be the optimal path profile for $D$.

    Consider the path profiles of table entries along $\bP^*$, \ie\ the entries $(D',k',\theta')$ where $k'$ equals the number of paths of $\bP^*$ going through $D'$ and $\theta'$ equals their total length, \ie $\theta' = \cost(\bP^*,D')$.
    We show that these path profiles fulfill equations \eqref{eq:inv1} and \eqref{eq:inv2}. 
    For the single arcs this is trivially the case. For larger graphs, it follows iteratively (\wrt the series-parallel composition of $D$) by \Cref{lem:H_k-combination} and the fact that the left hand side of \eqref{eq:inv2} is minimized by our selection rule.
    
    Hence, in the last step, we pick a path profile $\bP$ whose objective value is at most as large as the objective value of path profile $\bP'$ in table entry $(D,k,\cost(\bP^*,D))$.
    Since $\bP'$ is consistent and balanced in $D$, it is a $H_k$-approximation by \Cref{lem:H_k-approx}. Thus,
    \begin{equation*}
        DP = \makespan(\bP,D) \leq \makespan(\bP',D)  \leq H_k \cdot \makespan(\bP^*,D)  = H_k \cdot \OPT. \qedhere
    \end{equation*}
\end{proof}

\subsection{Proof of \Cref{DPphi}}
\label{app:lem:DPphi}
\DPphi*
\begin{proof}
    Consider a graph $D' \sqsubseteq D$ that corresponds to a leaf in the decomposition tree of $D$ or to the root of an $S$- or $P$-component.
    Let $k'$ be the number of paths in $\bP^*$ that go through $D'$  and $\theta'=\cost(\bP^*,D')$ their total length in $D'$.
    First, we will show by induction that the path profile $\bP'$ in table entry $(D',k',\theta')$ with path length $p_1' \geq \dots \geq p_{k'}'$ is consistent and $\phi$-bounded.
    
    If $D'$ consists of only a single arc, the statement follows directly.

    Next we focus on an $S$- or $P$-component.
    Consistency follows directly, since we only consider path profiles for table entries along $\bP^*$, \ie\ path profiles with total length $\cost(\bP^*,D')$ and the correct number of paths.
        
    For now, assume that $D'$ corresponds to the root vertex of an $S$-component.
    Let $D^{(1)},\dots,D^{(i)} \sqsubseteq D$ be the subgraphs corresponding to children of the selected $S$-component.
    That means, $D'$ is obtained by series compositions of the subgraphs $D^{(1)},\dots,D^{(i)}$.
    Note that it might well be that the decomposition tree has multiple branches, \eg if $D' = ( D^{(1)} \serc D^{(2)}) \serc (D^{(3)} \serc D^{(4)})$.
    
    By construction, it holds that
    \begin{equation}
        \label{eq:phi_construction}
        \phi(D') = \max_{j \in [i]} \left\{ \phi \left(D^{(j)}\right) \right\} +1 \geq \phi\left(D^{(j)}\right).
    \end{equation}
    
    For $j \in [i]$, let $\bP^{(j)}$ be the path profile in table entry $\left(D^{(j)},k',\cost(\bP^*,D^{(j)})\right)$ with paths of length $p^{(j)}_1 \geq \dots \geq p_{k'}^{(j)}$.
    By our induction hypothesis $\bP^{(j)}$ is consistent and $\phi$-bounded in $D^{(j)}$.
    
    To show that the chosen solution for the table entry $(D',k',\theta')$ is $\phi$-bounded and fulfills equation~\eqref{eq:inv1}, we show the following claim:
    \begin{claim}
        \label{claim:p1-pk}
        Let $D_q$ be obtained by a series combination of $D^{(j)}$ with $j \in S_q \subseteq [i]$ and $D_r$ be obtained by a series combination of $D^{(j)}$ with $j \in S_r \subseteq [i]$ where $S_q \cap S_r = \emptyset$.
        Moreover, let $D_q \serc D_r = \tilde{D} \sqsubseteq D$.
        Then, the solution picked for $(\tilde{D}, k', \cost(\bP^*,\tilde{D}))$ fulfills
        \begin{equation}
            \label{eq:p1-pk}
            \tilde{p}_1 - \tilde{p}_{k'} \leq \max_{j \in S_q \cup S_r} \left\{p_1^{(j)}- p_{k'}^{(j)} \right\}
        \end{equation}
    \end{claim}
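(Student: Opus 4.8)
The plan is to prove Claim~\ref{claim:p1-pk} by induction on $|S_q| + |S_r|$, i.e., on the number of children of the $S$-component lying below $\tilde{D}$ in the decomposition tree. The heart of the argument is the following elementary observation about the greedy series composition: if $a_1 \ge \dots \ge a_{k'}$ and $b_1 \ge \dots \ge b_{k'}$ are two non-increasing sequences, then the numbers $c_\ell := a_\ell + b_{k'+1-\ell}$ (which are exactly the $k'$ path lengths produced by greedily composing a profile with sorted lengths $a_1 \ge \dots \ge a_{k'}$ with one with sorted lengths $b_1 \ge \dots \ge b_{k'}$) satisfy $\max_\ell c_\ell - \min_\ell c_\ell \le \max\{a_1 - a_{k'},\, b_1 - b_{k'}\}$. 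I would prove this by taking any $\ell < \ell'$ and writing $c_\ell - c_{\ell'} = (a_\ell - a_{\ell'}) + (b_{k'+1-\ell} - b_{k'+1-\ell'})$, where the first summand lies in $[0,\,a_1 - a_{k'}]$ and the second in $[-(b_1 - b_{k'}),\,0]$, so that $|c_\ell - c_{\ell'}| \le \max\{a_1 - a_{k'}, b_1 - b_{k'}\}$; since after re-sorting the $c_\ell$ in non-increasing order the left-hand side of~\eqref{eq:p1-pk} equals $\max_\ell c_\ell - \min_\ell c_\ell$, this bounds $\tilde{p}_1 - \tilde{p}_{k'}$ in terms of the ranges of the two composed profiles.

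For the inductive step, consider $\tilde{D} = D_q \serc D_r$. Because $\tilde{D} \sqsubseteq D$ sits inside the $S$-component, $D_q$ and $D_r$ are precisely its two children in the binary decomposition tree; each of them is either one of the $D^{(j)}$ (the case $|S_q| = 1$, resp.\ $|S_r| = 1$) or again a series combination of a nonempty proper subset of $\{D^{(j)} : j \in S_q\}$, resp.\ $\{D^{(j)} : j \in S_r\}$. Let $\bP_q$ and $\bP_r$ be the profiles stored in the table entries $(D_q, k', \cost(\bP^*, D_q))$ and $(D_r, k', \cost(\bP^*, D_r))$, with sorted path lengths $a_1 \ge \dots \ge a_{k'}$ and $b_1 \ge \dots \ge b_{k'}$, respectively; these entries are nonempty since they contain at least the restriction of $\bP^*$. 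If $|S_q| = 1$ then $\bP_q = \bP^{(j)}$ for the unique $j \in S_q$ and trivially $a_1 - a_{k'} = p_1^{(j)} - p_{k'}^{(j)}$; otherwise the induction hypothesis applied to $D_q$ gives $a_1 - a_{k'} \le \max_{j \in S_q}\{p_1^{(j)} - p_{k'}^{(j)}\}$. The analogous bound holds for $b_1 - b_{k'}$ and $S_r$.

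It remains to connect this to the profile actually stored for $(\tilde{D}, k', \cost(\bP^*, \tilde{D}))$. Since $\tilde{D} = D_q \serc D_r$ is a series composition, total lengths add up, $\cost(\bP^*, D_q) + \cost(\bP^*, D_r) = \cost(\bP^*, \tilde{D})$, and all of $D_q$, $D_r$, $\tilde{D}$ are traversed by exactly $k'$ paths of $\bP^*$; hence the greedy series composition of $\bP_q$ and $\bP_r$ is among the candidates the dynamic program considers for that table entry. By the observation above together with the two induction bounds, the difference between its longest and shortest path length is at most $\max\{a_1 - a_{k'},\, b_1 - b_{k'}\} \le \max_{j \in S_q \cup S_r}\{p_1^{(j)} - p_{k'}^{(j)}\}$, and since the dynamic program picks the candidate minimizing $p_1 - p_{k'}$, the stored profile $\tilde{\bP}$ satisfies the same bound, which is exactly~\eqref{eq:p1-pk}. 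The base case $|S_q| = |S_r| = 1$ is simply the instance in which both $\bP_q$ and $\bP_r$ are among the $\bP^{(j)}$ and no recursion is needed.

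I expect the only real obstacle to be organizational rather than computational: one has to set up the induction along the part of the decomposition tree belonging to the $S$-component so that every sub-profile invoked is either an immediate child profile $\bP^{(j)}$ or covered by the induction hypothesis, and one must check carefully that the greedy composition of the two stored child profiles really is a candidate for the parent entry --- which relies on total cost being additive over a series composition and on the number of paths being constant across the whole $S$-component. The one genuinely quantitative ingredient, the range bound for the anti-sorted sum, is short and self-contained.
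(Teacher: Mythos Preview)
Your proposal is correct and follows essentially the same approach as the paper's proof: induction along the $S$-component, using that the greedy series composition of the two stored child profiles is a candidate for the parent entry and that the DP selects the candidate minimizing $p_1 - p_{k'}$. You are somewhat more explicit than the paper---you spell out the induction parameter $|S_q|+|S_r|$, actually prove the range bound $\max_\ell c_\ell - \min_\ell c_\ell \le \max\{a_1-a_{k'},\,b_1-b_{k'}\}$ for the anti-sorted sum (the paper simply asserts it), and note why the relevant table entries are nonempty---but these are elaborations of the same argument rather than a different route.
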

    \begin{claimproof}
        We prove this claim by induction.
        Let $q_1 \geq \dots \geq q_{k'}$ be the path length of the path profile $\bQ$ in table entry $(D_q,k',\cost(\bP^*,D_q))$ and let $r_1 \geq \dots \geq r_{k'}$ be the path length of the path profile $\bR$ in table entry $(D_r,k',\cost(\bP^*,D_r))$.
        We assume that \eqref{eq:p1-pk} holds for $\bQ$ and $\bR$, hence
        \begin{equation}
            \label{eq:p1-pk_IH}
            q_1 - q_{k'} \leq \max_{j \in S_q} \left\{ p_1^{(j)} - p_{k'}^{(j)} \right\} \qquad \text{ and } \qquad r_1 - r_{k'} \leq \max_{j \in S_r} \left\{ p_1^{(j)} - p_{k'}^{(j)} \right\}.
        \end{equation}
        For the path profile $\hat{\bP}$ obtained by the series combination of $\bQ$ and $\bR$, it holds that
        \begin{equation}
            \label{eq:p1-pk_IS}
            \hat{p}_1 - \hat{p}_{k'} \leq \max\left\{q_1 - q_{k'}, r_1 - r_{k'} \right\} \overset{\eqref{eq:p1-pk_IH}}{\leq} \max_{j \in S_q \cup S_r} \left\{ p_1^{(j)} - p_{k'}^{(j)} \right\}.
        \end{equation}
        Since $\hat{\bP}$ was a candidate for table entry $(D_r,k',\cost(\bP^*,\tilde{D}))$, and we pick the candidate $\tilde{\bP}$ minimizing the distance between the longest and the shortest path according to \eqref{eq:DP_phi_choice_s}, we get $\tilde{p}_1 - \tilde{p}_{k'} \leq \hat{p}_1 - \hat{p}_{k'}$.
        Together with \eqref{eq:p1-pk_IS} this proves the claim.
    \end{claimproof}

    By \Cref{claim:p1-pk}, we know that the path $\bP'$ in table entry $(D',k',\theta')$ fulfills
    \begin{equation}
        \label{eq:p1-k1:equation_series}
        p_1' - p_{k'}'
        \leq \max_{j \in [i]} \left\{ p_1^{(j)} - p_{k'}^{(j)}\right\}
        \leq \max_{j \in [i]} \left\{ \left(\phi \left(D^{(j)} \right) +1\right) \cdot \OPT \right\} \overset{\eqref{eq:phi_construction}}{\leq} \phi(D') \cdot \OPT.
    \end{equation}
    Thus, we obtain
    \begin{align*}
        p_1' &\overset{\eqref{eq:p1-k1:equation_series}}{\leq} \phi(D') \cdot \OPT + p_{k'}'
        \overset{\text{avg.}}{\leq} \phi(D') \cdot \OPT + \frac{1}{k'}\cost(\bP',D')
        \overset{\small (*)}{\leq} \phi(D') \cdot \OPT + \frac{1}{k'}\cost(\bP^*,D')\\
        &\overset{\text{avg.}}{\leq} \phi(D') \cdot \OPT + \cost(\bP^*,D')
        \leq \phi(D') \cdot \OPT + \OPT
        = (\phi(D')+1) \cdot \OPT.
    \end{align*}
    where $(*)$ holds since we used the solutions with $\cost(\bP^{(j)},D^{(j)})=\cost(\bP^*,D^{(j)})$ so $\cost(\bP',D') = \cost(\bP^*,D')$.
    \medskip
    
    Next, we focus on a $P$-component.
    Let $D^{(1)},\dots,D^{(i)} \sqsubseteq D$ be the subgraphs corresponding to children of the selected $P$-component.
    That means, $D'$ is obtained by parallel compositions of the subgraphs $D^{(1)},\dots,D^{(i)}$.
    For $j \in [i]$, let $\bP^{(j)}$ be the path profile in table entry $\left(D^{(j)},k_j,\cost(\bP^*,D^{(j)})\right)$ where $k_j$ is the number of paths of $\bP^*$ going through $D^{(j)}$ and let $p^{(j)}_1 \geq \dots \geq p_{k_j}^{(j)}$ be the paths lengths of $\bP^{(j)}$.
     
    Again, it might be the case that the component has multiple branches, \eg if $D' = ( D^{(1)} \parc D^{(2)}) \parc (D^{(3)} \parc D^{(4)})$.
    By construction, it holds that
    \begin{equation}
        \label{eq:phi_construction_parallel}
        \phi(D') = \max_{j \in [i]} \left\{ \phi \left(D^{(j)}\right) \right\} \geq \phi\left(D^{(j)}\right).
    \end{equation}
    Since we choose the candidate path profile that minimizes the longest path according to \eqref{eq:DP_phi_choice_p} and as selecting $k_j$ paths in $D^{(j)}$ yields one of the candidate profiles considered, we obtain
    \begin{equation}
        \label{eq:phi_p1_parallel}
        p_1' \leq \max_{j \in [i]} \left\{ p_1^{(j)}\right\}.
    \end{equation}
    Since $\bP^{(j)}$ is $\phi$-bounded, we get
    \begin{equation}
        p_1'
        \overset{\eqref{eq:phi_p1_parallel}}{\leq} \max_{j \in [i]} \left\{ p_1^{(j)}\right\}
        \leq \max_{j \in [i]} \left\{ \left( \phi\left(D^{(j)} \right) +1 \right) \cdot \OPT \right\}
        \overset{\eqref{eq:phi_construction_parallel}}{\leq} \left( \phi(D') +1 \right) \cdot \OPT.
    \end{equation}
    
    \medskip
    Hence, we proved by induction on the decomposition tree (since $D'$ corresponds to a root of a $S$- or $P$-component) that the path profile $\bP'$ in table entry $(D,k,\cost(\bP^*,D))$ is consistent and $\phi$-bounded.
    In the end of the dynamic program, the path profile $\bP'$ is a potential candidate. Since we select the path profile with smallest makespan, we get
    \begin{equation*}
        DP \leq p_1' \leq (\phi(D) +1) \cdot \OPT. \qedhere
    \end{equation*}
\end{proof}

\subsection{Running time – Proof of \Cref{lem:DPruntime} and \Cref{lem:DPepsruntime}}
\label{app:DP_runtime}
\DPruntime*
\begin{proof}
    First note that we have to fill $\mathcal{O}(m k |\Theta|)$ table entries.
    For a table entry, we have to consider $k \cdot |\Theta|$ candidates.
    Constructing a path profile for a candidate takes $\mathcal{O}(k\log(k))$ time (sorting the paths by length).
    Evaluating \eqref{eq:DP_choice} for a candidate takes $\mathcal{O}(k^2)$ time.
    So it takes $\mathcal{O}(k^3 |\Theta|)$ time to fill a table entry.
    Thus, the total runtime is bounded by $\mathcal{O}(mk^3|\Theta|^2)$.
\end{proof}

A simple bound on $\Theta$ would be $m \cdot \tau_{\max}$.
However, this can be an extremely large number.
Especially, since it is possible that an artificial arc from $s$ to $t$ is added with arbitrary high travel time.
Even if this arc is never used in an optimal solution, the DP needs to consider this possibility and thus has pseudo-polynomial runtime.

To circumvent this, we guess the longest arc used by an optimal solution (at most $m$ possibilities) and round the arc length accordingly.
\DPepsruntime*

\begin{proof}
    To show that such an algorithm exists, we will describe how to use a rounding (depending on $\epsilon$) such that first $|\Theta|$, the number of different possibilities of the total length, can be bounded by $m$ and $\frac{1}{\epsilon}$. And second, a solution to the rounding adds a factor of at most $(1+\epsilon)$ to the $H_k$-approximation ratio of the dynamic program.

    For the rounding, we aim to guess the longest arc that is used in an optimal solution. We do this by going through all arcs $a \in A$.
    For a given arc $\bar a \in A$ with length $\tau_{\bar a}$, let $D'$ be the graph $D$ where all arcs $a\in A$ with $\tau_{a} > \tau_{\bar a}$ are deleted.
    Let $F \coloneqq \frac{\epsilon \cdot \tau_{\bar a}}{m}$ and define $\tau'_a \coloneqq \left\lceil \frac{\tau_a}{F}\right\rceil$. Note that this means instead of iterating over all arcs, we basically just iterate over the different length that arcs can have.
    
    Let $\bP(\tau_{\bar a})$ be the path profile obtained by applying the DP to $D'$ with arc length $\tau'$.
    Finally, we pick the path profile $\bP$ with minimal objective among all computed solutions $\bP(\tau_{\bar a})$, \ie
    $\bP = \text{argmin} \{\makespan(\bP(\tau_{a})) \mid a \in A \} $.
    \medskip

    The computed path profile $\bP$ has an objective value that is at most as long as the one of $\bP(\tau_{\max}^*)$, where $\tau_{\max}^*$ is the length of the longest arc in the optimal path profile $\bP^*$.

    To simplify notation, we use $F$ for the rounding divisor for $\tau_{\max}^*$, $\tau'$ for the rounded transit times for $\tau_{\max}^*$ and $\makespan'$ to map a path profile to its solution value with respect to the rounded transit times for $\tau_{\max}^*$.
    Similarly, let $\bP' \coloneqq \bP(\tau_{\max}^*)$ be the solution computed by the dynamic program with respect to this rounding according to $\tau_{\max}^*$.
    Further let $\bar{\bP}^*$ be an optimal path profile in $D$ with respect to travel times $\tau'$.
    Then, the following equations hold by definition of $\makespan'$:
    \begin{equation}
        \label{eq:DP-rounding_lb}
        0 \leq F \cdot\sum_{a \in P_1'} \left( \left\lceil \frac{\tau_a}{F} \right\rceil - \frac{\tau_a}{F} \right) = F \cdot \makespan'(\bP') - \makespan(\bP')
    \end{equation}
    here $P_1'$ expresses the longest path of $\bP'$ and 
    \begin{equation}
        \label{eq:DP-rounding_ub}
        F \cdot \makespan'(\bP^*) - \makespan(\bP^*) = F \cdot\sum_{a \in P_1^*} \left( \left\lceil \frac{\tau_a}{F} \right\rceil - \frac{\tau_a}{F} \right) \leq F \cdot \sum_{a \in P_1^*} 1 \leq mF.
    \end{equation}
    Using these two equations, we can show that $\tau(\bP) \leq H_k (1+ \epsilon) \tau(\bP^*)$:
    \begin{align*}
        \makespan(\bP)
        &\overset{\small (*)}{\leq} \makespan(\bP')
        \overset{\eqref{eq:DP-rounding_lb}}{\leq} F \cdot \makespan'(\bP')
        \overset{\small (**)}{\leq} F \cdot H_k \cdot \makespan'(\bar{\bP}^*)
        \overset{\small (***)}\leq F \cdot H_k \cdot \makespan'(\bP^*)\\
        &\overset{\eqref{eq:DP-rounding_ub}}\leq F \cdot H_k \cdot \left( m + \frac{\makespan(\bP^*)}{F} \right)
        = H_k \cdot (\epsilon \cdot \tau_{\max}^* + \makespan(\bP^*))\\
        &\overset{\small (****)}{\leq} H_k \cdot (1+ \epsilon) \cdot\makespan(\bP^*)
    \end{align*}
    Here, $(*)$ holds as the algorithm chooses a path profile minimizing the makespan ampong all possible roundings, $(**)$ holds by \Cref{DPHk}, $(***)$ holds by definition of $\bar{\bP}^*$ and $(****)$ holds since we chose $\tau_{\max}^*$ as the longest arc in the optimal solution.
    \medskip
    
    Lastly, we show the that the running time is polynomial in $m$ and $\frac{1}{\epsilon}$.
    By the rounding, we get that $\Theta \leq m\left(m \cdot \frac{1}{\epsilon}+1\right)$.
    Thus, rounding and running the dynamic program takes $\mathcal{O}\left(m^5k^4 \left(\frac{1}{\epsilon}\right)^2\right)$.
    There are at most $m$ different rounded instances and choosing the solution with the minimal longest path takes at most $\mathcal{O}(m\log(m))$ in addition.
    Hence, the total running time is bounded by $\mathcal{O}\left( m^6k^4\left(\frac{1}{\epsilon}\right)^2\right)$.
\end{proof}
\end{document}